\documentclass[12pt]{article}
\usepackage{amssymb}
\usepackage{amsmath}
\usepackage{amsfonts}
\usepackage{amsthm}
\usepackage{graphicx} 
\usepackage{epstopdf}
\usepackage{color}
\usepackage[onehalfspacing]{setspace}
\usepackage{hyperref}
\usepackage{multirow}
\usepackage{float}
\usepackage{bm}
\usepackage{bbm}
\usepackage{mathabx}
\usepackage{booktabs}
\usepackage{enumitem}
\usepackage{url}
\usepackage{algorithm}
\usepackage{algorithmic}

\usepackage[flushleft]{threeparttable}

\usepackage[font=small]{caption}

\def\E{\mathbb{E}}

\numberwithin{equation}{section}

\newtheorem{theorem}{Theorem}[section]
\newtheorem{lemma}{Lemma}[section]
\newtheorem{corollary}{Corollary}[section]
\newtheorem{remark}{Remark}[section]

\newtheorem{proposition}{Proposition}[section]

\theoremstyle{definition}
\newtheorem{assumption}{Assumption}%[section]

\theoremstyle{definition}
\newtheorem{example}{Example}[section]

\theoremstyle{definition}

\newcommand{\indep}{\raisebox{0.05em}{\rotatebox[origin=c]{90}{$\models$}}}

\setcounter{MaxMatrixCols}{10}

\usepackage{xr}
\externaldocument{DR_Bayes_ATE_final_Supp}

\usepackage{geometry}
 \geometry{a4paper, left=1in,right=1in, top=1.25in, bottom=1.4in}

\tolerance=9999
\emergencystretch=10pt
\hyphenpenalty=10000
\exhyphenpenalty=100
\setlength{\textwidth}{6.4in}
\setlength{\textheight}{8.8in}
\vfuzz4pt
\hfuzz4pt
\usepackage{color, hyperref}
\usepackage{natbib}

\title{
Double Robust Bayesian Inference on Average Treatment Effects\thanks{We thank the anonymous reviewers, as well as 
Xiaohong Chen,  Yanqin Fan, Essie Maasoumi, Yichong Zhang, and numerous seminar and conference participants for helpful comments and illuminating discussions. Breunig gratefully acknowledges the support  of the Deutsche Forschungsgemeinschaft (DFG, German Research Foundation) under Germany’s Excellence Strategy – EXC-2047/1 – 390685813. Yu gratefully acknowledges the support of JSPS KAKENHI Grant Number 21K01419.
}}
		
\author{Christoph Breunig\thanks{Department of Economics, University of Bonn. Email: \url{cbreunig@uni-bonn.de}}\and Ruixuan Liu\thanks{CUHK Business School, Chinese University of Hong Kong. Email: \url{ruixuanliu@cuhk.edu.hk}}\and Zhengfei Yu\thanks{Faculty of Humanities and Social Sciences, University of Tsukuba. Email: \url{yu.zhengfei.gn@u.tsukuba.ac.jp}}}

\begin{document}
  \maketitle
	\begin{abstract}
	\vskip -.1cm
	{\small 
	We propose a double robust Bayesian inference procedure on the average treatment effect (ATE) under unconfoundedness. 
For our new Bayesian approach, we first adjust the prior distributions of the conditional mean functions, and then correct the posterior distribution of the resulting ATE.
	Both adjustments make use of pilot estimators motivated by the semiparametric influence function for ATE estimation.
	We prove asymptotic equivalence of our Bayesian procedure and efficient frequentist  ATE estimators by establishing a new semiparametric Bernstein-von Mises theorem under double robustness; i.e., the lack of smoothness of conditional mean functions can be compensated by high regularity of the propensity score and vice versa.
		Consequently, the resulting Bayesian credible sets form confidence intervals with asymptotically exact coverage probability.
		In simulations, our method provides precise point estimates of the ATE through the posterior mean and credible intervals that closely align with the  nominal coverage probability. Furthermore, our approach achieves a shorter interval length in comparison to existing methods. We illustrate our method in an application to the National Supported Work Demonstration following \cite{lalonde1986} and \cite{dehejia1999causal}.
}
	\end{abstract}		
	\noindent{\footnotesize \noindent \textsc{Keywords:}  Average treatment effects, unconfoundedness, double robustness, nonparametric Bayesian inference, Bernstein–von Mises theorem, Gaussian processes. }	

\section{Introduction}
This paper proposes a double robust Bayesian approach for estimating the average treatment effect (ATE) under unconfoundedness, given a set of pretreatment covariates. Our new Bayesian procedure involves both prior and posterior adjustments. First, following \cite{ray2020causal}, we adjust the prior distributions of the conditional mean function using an estimator of the propensity score. Second, we use this propensity score estimator together with a pilot estimator of the conditional mean to correct the posterior distribution of the ATE. 
The adjustments in both steps are closely related to the functional form of the semiparametric influence function for ATE estimation under unconfoundedness. They do not only shift the center but also change the shape of the posterior distribution.
For our robust Bayesian procedure, we derive a new Bernstein–von Mises (BvM) theorem, 
which means that this posterior distribution, when centered at any efficient estimator, is asymptotically normal with the efficient variance in the semiparametric sense. The key innovation of our paper is that this result holds under double robust smoothness assumptions within the Bayesian framework.

Despite the recent success of Bayesian methods, the literature on ATE estimation is predominantly frequentist-based.
 For the missing data problem specifically, it was shown that conventional Bayesian approaches (i.e., using uncorrected priors) can produce inconsistent estimates, unless some unnecessarily strong smoothness conditions on the underlying functions were imposed; see the results and discussion in \cite{robins1997} or \cite{ritov2014}. 
 Once the prior distribution was adjusted using some pre-estimated propensity score, \cite{ray2020causal} recently established a novel semiparametric BvM theorem under weaker smoothness requirement for the propensity score function.\footnote{Strictly speaking, the main objective in \cite{ray2020causal} concerns the mean response in a missing data model, which is equivalent to observing one arm (either the treatment or control) of the causal setup. } 
However, a minimum differentiability of order $p/2$ is still required for the conditional mean function in the outcome equation, where $p$ denotes the dimensionality of covariates.
 In this paper, we are interested in Bayesian inference under double robustness  that allows for a trade-off between the required levels of smoothness in the propensity score and the conditional mean functions.

Under double robust smoothness conditions, 
we show that Bayesian methods, which use propensity score adjusted priors as in \cite{ray2020causal}, satisfy the BvM Theorem only up to a ``bias term'' depending on the unknown true conditional mean and propensity score functions.
In this paper, our robust Bayesian approach accounts for this bias term in the BvM Theorem by considering an explicit posterior correction.
Both the prior adjustment and the posterior correction are based on functional forms that are closely related to the efficient influence function for the ATE, see \cite{hahn1998role}. 
We show that the corrected posterior satisfies the BvM Theorem under double robust smoothness assumptions. 
Our novel procedure combines the advantages of Bayesian methodology with the robustness features that are the strengths of frequentist procedures. Our credible intervals are Bayesianly justifiable in the sense of \cite{rubin1984bayesian}, as the uncertainty quantification is made conditional on the observed data and can be also interpreted as frequentist confidence intervals with asymptotically exact coverage probability.
Our procedure is inspired by insights from the double machine learning (DML) literature, as well as the bias-corrected matching approach from \cite{abadie2011bias}, as our robustification of an initial procedure removes some non-negligible bias and remains asymptotically valid under weaker regularity conditions.
 While the main part of our theoretical analysis focuses on the ATE of binary outcomes, also considered by \cite{ray2020causal}, we outline extensions of our methodology to continuous and multinomial cases, as well as to other causal parameters.

In both simulations and an empirical illustration using the National Supported Work Demonstration data, we provide evidence that our procedure performs well compared to existing Bayesian and frequentist approaches. In our Monte Carlo simulations, we find that our method results in improved empirical coverage probabilities, while maintaining very competitive lengths for confidence intervals. This finite sample advantage is also observed over Bayesian methods that rely solely on prior corrections. In particular, we note that our approach leads to more accurate uncertainty quantification and is less sensitive to estimated propensity scores being close to boundary values.
	 
	 	The BvM theorem for parametric Bayesian models is well-established; see, for instance, \cite{van1998asymptotic}. Its semiparametric version is still being studied very actively when nonparametric priors are used \citep{castillo2012gaussian,castillo2015bvm,ray2020causal}. To the best of our knowledge, our new semiparametric BvM  theorem is the first one that possesses the double robustness property. 
Our paper is also connected to another active research area concerning Bayesian inference for parameters in econometric models, which is robust to partial or weak identification \citep{chen2018MC,giacomini2020robust,andrews2022gmm}.
	 	The framework and the approach we take is different. Nonetheless, they share the same scope of tailoring the Bayesian inference procedure to new challenges in contemporary econometrics.

\section{Setup and Implementation}\label{sec:model}
This section provides the main setup of the average treatment effect (ATE). We motivate the new Bayesian methodology and detail the practical implementation.

	\subsection{Setup}
We consider a family of probability distributions $\{P_\eta:\eta\in\mathcal H\}$ for some parameter space $\mathcal H$, where the (possibly infinite dimensional) parameter $\eta$ characterizes the probability model. Let $\eta_0$ be the true value of the parameter and denote $P_0=P_{\eta_0}$, which corresponds to the frequentist distribution generating the observed data.

For individual $i$, consider a treatment indicator $D_i\in\{0,1\}$.
The observed outcome $Y_i$ is determined by 
$Y_i =D_i Y_i(1)  +(1 - D_i)Y_i(0)$ where $(Y_i(1), Y_i(0))$ are the potential outcomes of individual $i$ associated with  $D_i=1$ or $0$.  We now focus on the binary outcome case where both $Y_i(1)$ and $Y_i(0)$ take values in $\{0,1\}$. An extension to multinomial or continuous outcomes is provided in Section \ref{sec:extension}.
The covariates for individual $i$ are denoted by $X_i$, a vector of dimension $p$, with the distribution $F_0$ and the density $f_0$.\footnote{If $X_i$ does not have a density we can simply consider the conditional density of $(Y_i, D_i)$ given $X_i=x$ instead of the joint density of $(Y_i, D_i, X_i)$.}
Let $\pi_0(x)=P_0(D_i=1|X_i=x)$ denote the propensity score and $m_0(d,x)= P_0(Y_i=1|D_i=d, X_i=x)$ the conditional mean. 
Suppose that the researcher observes independent and identically distributed (i.i.d.) observations of $Z_i=(Y_i,D_i,X_i^\top)^\top$ for $i=1,\dots,n$. The joint density of $Z_i$ is given by $p_{\pi_0,m_0,f_0}$ where
	\begin{equation}\label{x_den}
	p_{\pi,m,f}(z)=\pi(x)^d (1-\pi(x))^{1-d}m(d,x)^{y} (1-m(d,x))^{(1-y)}f(x).
	\end{equation}
The parameter of interest is the ATE given by $\tau_0=\mathbb{E}_0[Y_i(1)-Y_i(0)]$,  where $\mathbb{E}_0[\cdot]$ denotes the expectation under $P_0$. For its identification, we impose the following standard assumption of unconfoundedness and overlap \citep{rosenbaum1984,imbens2004,imbens2015causal}.
 		\begin{assumption}\label{Ass:unconfounded}
		(i)	 $(Y_i(0),Y_i(1)) ~ \indep ~ D_i \mid X_i$ and (ii) there exists $\bar\pi>0$ such that $\bar\pi <\pi_0(x)< 1-\bar\pi$ for all $x$ in the support of $F_0$. 
		\end{assumption}
We introduce additional notations from the Bayesian perspective, following the similar setup from \cite{ray2020causal}. For the purpose of assigning prior distributions to $(\pi, m)$ in the Bayesian procedure, it is convenient to transform them by a link function. We make use of the Logistic function $\Psi(t)=1/(1+e^{-t})$ here. Specifically, we consider the reparametrization of $(\pi, m, f)$ given by $\eta=(\eta^{\pi},\eta^m,\eta^f)$. 
We index the probability model as $P_{\eta}$, in line with the notation introduced at the first paragraph of this section, where
\begin{eqnarray}\label{repar}
	\eta^{\pi} = \Psi^{-1}(\pi), ~~\eta^{m} = \Psi^{-1}(m),~~\eta^f =\log f. 
\end{eqnarray}
	Below, we write $m_\eta=\Psi(\eta^m)$,  $\pi_\eta=\Psi(\eta^\pi)$, and $f_\eta=\exp(\eta^f)$ to make the dependence on $\eta$ explicit. Given any prior on the triplet
$(\eta ^{\pi},\eta ^{m},\eta ^{f})$, Bayesian inference on the ATE is
achieved by deriving the posterior distribution of
	 \begin{equation}\label{ate}
\tau_\eta=\mathbb E_\eta\left[m_\eta(1,X)-m_\eta(0,X)\right],
\end{equation}
where $\mathbb E_\eta[\cdot]$ denotes the expectation under $P_\eta$. Our aim is to examine large-sample behavior of the posterior of $\tau_{\eta}$ under the true probability distribution $P_0$. In the same vein, the true parameter of interest becomes $\tau_0=\tau_{\eta_0}$.

The construction of our double robust Bayesian procedure in Section \ref{sec:method_outline} has fundamental connection to the efficient influence function. For any parameter $\eta$, the efficient influence function (\cite{hahn1998role,hirano2003efficient}) is
\begin{align}\label{eif_ate}
	\widetilde{\tau}_{\eta}(z) &= m_\eta(1,x)-m_\eta(0,x)+ \gamma_\eta(d,x)(y-m_\eta(d,x))-\tau_\eta
\end{align} 
for the Riesz representer $\gamma_\eta$, which is given by
\begin{equation}\label{riesz:def}
	\gamma_\eta(d,x)=\frac{d}{\pi_\eta(x)}-\frac{1-d}{1-\pi_\eta(x)}.
\end{equation}
We write $\widetilde{\tau}_{0}=\widetilde{\tau}_{\eta_0}$ and $\gamma_{0}=\gamma_{\eta_0}$. Both the prior adjustment and posterior correction of our approach require a pilot estimator for $\gamma_{0}$. Under Assumption \ref{Ass:unconfounded}, the true Riesz representer $\gamma_0$ is well defined.
\subsection{Double Robust Bayesian Point Estimators and Credible Sets}\label{sec:method_outline}				
We build upon the ATE expression in \eqref{ate} to develop our doubly robust inference procedure. Our approach 
is based on nonparametric prior processes for $\eta^m$ and $\eta^f$. For the latter, we consider the Dirichlet process, which is a default prior on spaces of probability measures. This choice is also convenient for posterior computation via the Bayesian bootstrap; see Remark \ref{rem:BB}. For the former, we make use of Gaussian process priors, along with an adjustment that involves a preliminary estimator of $\gamma_0$.
 Gaussian process priors are also closely related to spline smoothing, as discussed in \cite{wahba1990spline}. Their posterior contraction properties (see \cite{ghosal2017fundamentals}), together with excellent finite sample behavior (see \cite{rassmusen2006gaussian}), make Gaussian process priors popular in the related literature. 
Since $\tau_{\eta}$ does not depend on $\eta^{\pi}$, the specification of a prior on the propensity score is not required.

We consider pilot estimators $\widehat\pi$ of the propensity score $\pi_0$ and  $\widehat m$ of the conditional mean function $m_0$, which both are based on an auxiliary sample.
 We consider a plug-in estimator for the Riesz representer $\gamma_0$ given by
\begin{align}\label{Riesz_est}
\widehat{\gamma}(d,x)=\frac{d}{\widehat\pi(x)}-\frac{1-d}{1-\widehat\pi(x)}.
\end{align}
Below, let $\Gamma_n$ denote the sample average of the absolute value of $\widehat{\gamma}$, which we use for scale normalization in our prior adjustment (see Section \ref{sec:implement:GP} for details).
 The use of an auxiliary data for pilot estimators simplifies the technical analysis related to the propensity score adjusted priors; see \cite{ray2020causal}. Also, it provides an effective way to control some negligible higher-order terms, see our Lemma \ref{lemma:negligible} in the Supplemental Material and the related discussion about the sample splitting in the DML type methods on Page C6 of \cite{chernozhukov2018double}. In practice, we use the full data twice and do not split the sample, as we have not observed any over-fitting or loss of coverage thereby.
  Algorithm \ref{algorithm} describes our double robust Bayesian inference procedure.

\begin{algorithm}[H]
\caption{Double Robust Bayesian Procedure}
\label{algorithm}
\begin{algorithmic}
    \STATE \textbf{Input:} Data $Z_i=(Y_i,D_i,X_i^\top)^\top$ for $i=1,\dots,n$, number of posterior draws $S$,  initial estimators $\widehat \gamma$ and $\widehat m$,
   and  $\lambda \sim N(0,\sigma_n^2)$ where $\sigma_n=\left(\log n\right)/(\sqrt n\, \Gamma_n)$. 
    \STATE \textbf{Prior Specification:} 
    \STATE (a) Select a Gaussian process prior $W^m$. 
     \STATE
     (b) Set an adjusted prior for $m_\eta(d,X_i) = \Psi\left(\eta^m(d,X_i)\right)$, where $ \eta^m(d,X_i)=W^m(d,X_i) + \lambda\,\widehat \gamma(d,X_i)$.     \\
\textbf{Posterior Computation:} 
    \FOR{$s=1,\ldots, S$}
        \STATE  (a) 
        Generate the $s$-th draw of the posterior of $(m_\eta(d, X_i))_{i=1}^n$ using the adjusted prior and the data; denote it as $(m^s_\eta(d, X_i))_{i=1}^n$.
        \STATE (b) 
        Draw Bayesian bootstrap weights $M^s_{ni}=e^s_i/\sum_{j=1}^n e^s_j$ where $e_i^s \stackrel{iid}{\sim} \textup{Exp}(1)$, $i=1,\dots,n$.
      
        \STATE (c) Calculate the corrected posterior draw for the ATE: 
        \begin{equation}\label{recentered_bay_est}
	    \check{\tau}_\eta^{s}=\tau_\eta^{s}-\widehat{b}^s_{\eta},
        \end{equation}
        \begin{equation}\label{debiased_bay_est}
	    \tau_\eta^{s}= \sum_{i=1}^n M^s_{ni}\big(m^s_\eta(1,X_i)-m^s_\eta(0,X_i)\big)\quad \text{and}\quad \widehat{b}^s_{\eta}=\frac{1}{n}\sum_{i=1}^n \boldsymbol{\tau}[m_\eta^s-\widehat m](Z_i),
        \end{equation}
        where $\boldsymbol{\tau}[m](z):=m(1,x)-m(0,x)+\widehat{\gamma}(d,x)(y-m(d,x))$.
    \ENDFOR
    
 \STATE \textbf{Output: $\{\check{\tau}_\eta^{s}:s=1,\ldots,S\}$} 
\end{algorithmic}
\end{algorithm}

Given the draws from the corrected posterior calculated in Algorithm \ref{algorithm}, we obtain the point estimate and credible set as follows. The Bayesian point estimator is $\overline{\tau}_{\eta}=\frac{1}{S}\sum_{s=1}^S \check{\tau}_\eta^{s}$. The $100\cdot(1-\alpha)\%$ credible set for the ATE parameter $\tau_0$ is given by
\begin{equation*}
	\mathcal{C}_n(\alpha)=\big\{\tau: q_n(\alpha/2)\leq \tau \leq q_n(1-\alpha/2)\big\},
\end{equation*}
where $q_n(a)$ denotes the $a$-th quantile of $\{\check{\tau}_\eta^{s}:s=1,\ldots,S\}$.

For the implementation of our pilot estimator $\widehat\gamma$ given in \eqref{Riesz_est},  we recommend using propensity scores estimated by the Logistic Lasso.
 For the implementation of the pilot estimator $\widehat m$, we adopt the posterior mean of $m_{\eta}$ generated from a Gaussian process prior without adjustment, as in \cite{ghosal2006binary}. Section \ref{sec:implement:GP} provides more implementation details. To approximate the posterior distribution, we make  use of the Laplace approximation, but one can also resort to the Markov Chain Monte Carlo (MCMC) algorithms.
The parameter $\sigma_n$ controls the relative weight placed on the prior adjustment relative to the standard unadjusted prior on $\eta^m$ (e.g., a Gaussian prior with a squared exponential covariance function). Regarding the tuning parameter $\sigma_n$, we emphasize that our finite sample results are not sensitive to its choice, as we show in Supplemental Appendix \ref{appendix:simu}.

						\begin{remark}	[Bayesian bootstrap]\label{rem:BB}	Under unconfoundedness and the reparametrization in \eqref{repar}, the ATE can be written as $\tau_{\eta}=\int [ \Psi\left(\eta^m(1,x)\right)- \Psi\left(\eta^m(0,x)\right)]\,\mathrm{d}F_\eta(x)$. 
			With independent priors on $\eta^m$ and $F_\eta$, their posteriors also become independent. It is thus sufficient to consider the posterior for $\eta^m$ and $F_\eta$ separately. We place a Dirichlet process prior for $F_\eta$ with the base measure to be zero. Consequently,  the posterior law of $F_\eta$ coincides with the Bayesian bootstrap \citep{rubin1981bayesian}; also see \cite{chamberlain2003bayesian}. One key advantage of the Bayesian bootstrap is that it allows us to incorporate a broad class of data generating processes, whose posterior can be easily sampled. Replacing $F_\eta$ by the standard empirical cumulative distribution function does not provide sufficient randomization of $F_\eta$, as it yields an underestimation of the asymptotic variance;
			see \cite[p. 3008]{ray2020causal}. In principle, one could consider other types of bootstrap weights; however, these generally do not correspond to the posterior of any given prior distribution.
			\end{remark}

	\section{Main Theoretical Results}\label{sec:asympt}
In this section, we derive the  Bernstein-von Mises (BvM) theorem which establishes the asymptotic equivalance between our Bayesian procedure and the frequentist-type semiparametric efficient one for the ATE. We consider an asymptotically efficient estimator $\widehat{\tau}$ with the following linear representation:
		\begin{equation}\label{def:est:chi}
		\widehat{\tau}=\tau_0+\frac{1}{n}\sum_{i=1}^n \widetilde{\tau}_0(Z_i)+o_{P_0}(n^{-1/2}),
		\end{equation}
		where $\widetilde{\tau}_0=\widetilde{\tau}_{\eta_0}$ is the	efficient influence function in accordance with \eqref{eif_ate}. Below, we denote $Z^{(n)}=(Z_1,\ldots, Z_n)$. 
	By virtue of the BvM Theorem, two conditional distributions $\sqrt{n}(\tau_{\eta}-\widehat{\tau})|Z^{(n)}$ and $\sqrt{n}(\widehat{\tau}-\tau_{\eta})|\eta=\eta_0$ are asymptotically equivalent under the underlying sampling distribution. 
	Another important consequence of the BvM theorem is about the asymptotic normality and efficiency of the Bayesian point estimator. That is, $\sqrt n(\overline\tau_{\eta}-\tau_0)$ is asymptotically normal with mean zero and variance
$\textsc v_0=\E_0\left[ \widetilde{\tau}_0^2(Z_i)\right]$. Thus,  $\overline\tau_{\eta}$ achieves the semiparametric efficiency bound of \cite{hahn1998role}.

	\subsection{Least Favorable Direction}
	Our prior correction through the Riesz representer $\gamma_0$ is motivated by the least favorable direction of Bayesian submodels. We first provide such least favorable calculations, which are closely linked to the semiparametric efficiency. Consider the one-dimensional submodel $t\mapsto \eta_t$ defined by the path
	\begin{eqnarray}\label{submodel}
	\pi_t(\cdot ) =  \Psi(\eta^{\pi}+t\mathfrak{p})(\cdot),~~
	m_t(\cdot) =  \Psi(\eta^m+t\mathfrak{m})(\cdot ),~~
	f_t(\cdot )= \frac{f(\cdot)e^{t\mathfrak{f}(\cdot)}}{\int e^{t\mathfrak{f}(x)}f(x)\,\mathrm{d}x},
	\end{eqnarray}
	for a given direction $(\mathfrak{p}, \mathfrak{m},\mathfrak{f})$ with $\int \mathfrak{f}(x)f(x)\,\mathrm{d}x=0$. The difficulty of estimating the parameter $\tau_{\eta_{t}}$ for the submodels depends on the direction  $(\mathfrak{p}, \mathfrak{m},\mathfrak{f})$. Among them, let $ \xi_{\eta}= (\xi_{\eta}^{\pi},\xi_{\eta}^{m},\xi_{\eta}^{f})$ be the \emph{least favorable direction} that is associated with the most difficult submodel.  It yields the largest asymptotic optimal variance for estimating $\tau_{\eta_{t}}$ among all submodels.
	Let $p_{\eta_t}$ denote the joint density of $Z$ depending on $\eta_t:= (\pi_t, m_t, f_t)$. Taking derivative of the logarithmic density $\log p_{\eta_t} (z)$ with respect to $t$ and evaluating at $t=0$ gives the score operator:
\begin{equation}\label{score_ate}
B_{\eta}(\mathfrak{p},\mathfrak{m},\mathfrak{f})(z)= B_{\eta}^{\pi}\mathfrak{p}(z) + B_{\eta}^{m}\mathfrak{m}(z) + B_{\eta}^{f}\mathfrak{f}(z),
\end{equation}
where $B_{\eta}^{\pi}\mathfrak{p}(z) =   (d-\pi_\eta(x))\mathfrak{p}(x)$, $B_{\eta}^{m}\mathfrak{m}(z)=  (y-m_\eta(d,x))\mathfrak{m}(d,x)$ and $B_{\eta}^{f}\mathfrak{f}(z)= \mathfrak{f}(x)$. The least favorable direction is defined as the solution $\xi_\eta$ which solves the equation
$B_{\eta}\xi_{\eta}=\widetilde{\tau}_{\eta}$, see \citet[p.370]{ghosal2017fundamentals}. We immediately obtain the following.

\begin{lemma}\label{lemma:lfd}
Consider the submodel \eqref{submodel}. 
Let Assumption \ref{Ass:unconfounded} hold for $P_\eta$ with any $\eta$ under consideration, then
the least favorable direction for estimating the ATE parameter in \eqref{ate} is:
\begin{equation}\label{lfd}
 \xi_{\eta}(d,x)= \left(0,\gamma_\eta(d,x), m_\eta(1,x)-m_\eta(0,x)-\tau_\eta\right),
\end{equation}
where the Riesz representer $\gamma_\eta$ is given in \eqref{riesz:def}.
 \end{lemma}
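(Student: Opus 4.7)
The plan is to verify the lemma by directly computing the pathwise derivative of the functional $\chi_{\eta_t}$ at $t=0$ and matching it with the $L^2(P_\eta)$ inner product between the score of the submodel and the score associated with the candidate direction $\xi_\eta$. Since the least favorable direction is characterized as the Riesz representer of the pathwise derivative $t\mapsto \chi_{\eta_t}$ with respect to the information inner product on the tangent space, this identification is the natural route.

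First, I would compute the score at $t=0$ for the submodel \eqref{submodel}. Using the logistic link $\Psi'=\Psi(1-\Psi)$ together with the Bernoulli form of the likelihood \eqref{x_den}, one obtains
\begin{equation*}
\dot\ell_\eta[\mathfrak{p},\mathfrak{m},\mathfrak{f}](o) \;=\; (d-\pi_\eta(x))\mathfrak{p}(x) + (y-m_\eta(d,x))\mathfrak{m}(d,x) + \mathfrak{f}(x),
\end{equation*}
subject to $\E_\eta[\mathfrak{f}(X)]=0$. The three pieces are mutually orthogonal in $L^2(P_\eta)$, which is useful for matching components separately. Next I would differentiate $\chi_{\eta_t}=\int[m_t(1,x)-m_t(0,x)]f_t(x)\,dx$ in $t$ at $t=0$, using $\partial_t m_t=m(1-m)\mathfrak{m}$ and $\partial_t f_t = f\mathfrak{f}$, giving
\begin{equation*}
\dot\chi_\eta[\mathfrak{p},\mathfrak{m},\mathfrak{f}] \;=\; \E_\eta\bigl[m_\eta(1,X)(1-m_\eta(1,X))\mathfrak{m}(1,X)-m_\eta(0,X)(1-m_\eta(0,X))\mathfrak{m}(0,X)\bigr] + \E_\eta\bigl[(m_\eta(1,X)-m_\eta(0,X))\mathfrak{f}(X)\bigr].
\end{equation*}

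Then I would seek $\xi_\eta=(\xi^\pi_\eta,\xi^m_\eta,\xi^f_\eta)$ satisfying $\dot\chi_\eta[\mathfrak{p},\mathfrak{m},\mathfrak{f}] = \E_\eta\bigl[\dot\ell_\eta[\mathfrak{p},\mathfrak{m},\mathfrak{f}]\cdot\dot\ell_\eta[\xi^\pi_\eta,\xi^m_\eta,\xi^f_\eta]\bigr]$ for every direction $(\mathfrak{p},\mathfrak{m},\mathfrak{f})$ in the tangent space. Using the orthogonality above, this decouples into three equations. Since $\chi_\eta$ is functionally independent of $\pi_\eta$, matching the $\mathfrak{p}$-component forces $\xi^\pi_\eta\equiv 0$. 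Matching the $\mathfrak{m}$-component requires
\begin{equation*}
\E_\eta\bigl[m_\eta(D,X)(1-m_\eta(D,X))\mathfrak{m}(D,X)\xi^m_\eta(D,X)\bigr] \;=\; \E_\eta\bigl[m_\eta(1,X)(1-m_\eta(1,X))\mathfrak{m}(1,X) - m_\eta(0,X)(1-m_\eta(0,X))\mathfrak{m}(0,X)\bigr].
\end{equation*}
Conditioning on $X$ through $\pi_\eta$ and equating integrands yields $\xi^m_\eta(1,x)=1/\pi_\eta(x)$ and $\xi^m_\eta(0,x)=-1/(1-\pi_\eta(x))$, i.e. $\xi^m_\eta=\gamma_\eta$ as in \eqref{riesz:def}. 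Finally, matching the $\mathfrak{f}$-component gives $\E_\eta[\mathfrak{f}(X)\xi^f_\eta(X)] = \E_\eta[(m_\eta(1,X)-m_\eta(0,X))\mathfrak{f}(X)]$; since $\mathfrak{f}$ is restricted to mean-zero functions, $\xi^f_\eta$ is only identified up to an additive constant, and the canonical (mean-zero) representative is $\xi^f_\eta(x)=m_\eta(1,x)-m_\eta(0,x)-\chi_\eta$.

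The only subtlety, and the main thing I would be careful with, is handling the constraint $\E_\eta[\mathfrak{f}(X)]=0$ correctly: the derivative is insensitive to additive constants in the density direction, so $\xi^f_\eta$ must be projected onto the mean-zero subspace, which yields the recentering by $-\chi_\eta$. As a sanity check, I would finally verify that $\dot\ell_\eta[\xi_\eta](O) = \gamma_\eta(D,X)(Y-m_\eta(D,X)) + m_\eta(1,X)-m_\eta(0,X)-\chi_\eta$ coincides with the efficient influence function $\widetilde\chi_\eta$ in \eqref{eif_ate}, which is the standard consistency check linking least favorable directions to semiparametric efficiency.
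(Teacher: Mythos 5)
Your proposal is correct, and every computation checks out: the score has the stated form under the logistic link, the three components are mutually orthogonal in $L^2(P_\eta)$, the pathwise derivative of $\chi_{\eta_t}$ is as you write, and solving the three decoupled Riesz equations yields $\xi^\pi_\eta=0$, $\xi^m_\eta=\gamma_\eta$, and the mean-zero representative $\xi^f_\eta=\bar m_\eta-\chi_\eta$. The route is, however, somewhat different from the paper's. The paper's proof is a pure verification: it computes the score operator $B_\eta(\mathfrak{p},\mathfrak{m},\mathfrak{f})=(D-\pi_\eta(X))\mathfrak{p}(X)+(Y-m_\eta(D,X))\mathfrak{m}(D,X)+\mathfrak{f}(X)$ exactly as you do, but then simply observes that the candidate $\xi_\eta$ satisfies $B_\eta\xi_\eta=\widetilde\chi_\eta$, taking the formula for the efficient influence function \eqref{eif_ate} as known from \cite{hahn1998role} and \cite{hirano2003efficient} and invoking the definition of the least favorable direction as the preimage of the EIF under the score operator. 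You instead derive the direction constructively, by computing $\dot\chi_\eta$ and solving the Riesz representation identity $\dot\chi_\eta[\mathfrak{p},\mathfrak{m},\mathfrak{f}]=\E_\eta[B_\eta(\mathfrak{p},\mathfrak{m},\mathfrak{f})\cdot B_\eta\xi_\eta]$ component by component; the paper's one-line check then appears only as your final sanity check. What your approach buys is self-containedness (the EIF need not be assumed) and an explicit explanation of why the $-\chi_\eta$ recentering appears, namely the projection onto the mean-zero tangent directions for $f$; what the paper's approach buys is brevity, since the EIF for the ATE is standard and well cited. Either argument is acceptable here.
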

Lemma \ref{lemma:lfd} motivates the adjustment of the prior distribution as considered in our Bayesian procedure in Section \ref{sec:method_outline}. Our prior correction, which takes the form of the (estimated) least favorable direction, provides an exact invariance under a shift of nonparametric components in this direction. It provides additional robustness against posterior inaccuracy in the ``most difficult direction'', i.e., the one inducing the largest bias in the average treatment effects.
We also note that Lemma \ref{lemma:lfd} extends the result in Section 2.1 in \cite{ray2020causal} for the missing data problem, which is equivalent as observing only one arm (either the treatment or control arm), to the context of ATE estimation that involves both arms.

\subsection{Assumptions for Inference}
We now provide additional notations and assumptions. The posterior distribution plays an important role in the following analysis and is given by
\begin{equation*}
	\Pi\left((\pi,m)\in A, F\in B|Z^{(n)}\right)=\int_{B}\frac{\int_{A}\prod_{i=1}^{n}p_{\pi,m}(Y_i, D_i |X_i) \,\mathrm{d}\Pi(\pi,m)}{\int \prod_{i=1}^{n}p_{\pi,m}(Y_i, D_i | X_i) \,\mathrm{d}\Pi(\pi,m)}\mathrm{d}\Pi(F|X^{(n)})
\end{equation*}
where $p_{\pi,m}$ denotes the conditional density of $(Y_i, D_i) $ given $X_i$, given by \eqref{x_den} divided by the marginal density of $X_i$.
		We write $\mathcal{L}_{\Pi}(\sqrt{n}(\tau_\eta-\widehat{\tau})|Z^{(n)})$ for the marginal posterior distribution of $\sqrt{n}(\tau_\eta-\widehat{\tau})$.
		 We focus on the case that $\eta^{\pi }$ has a prior that is independent of the prior for $(\eta^m,F)$. Because the likelihood function (\ref{x_den}) factorizes into $(\eta^m,\eta^{\pi},F)$ separately, the posterior of $\eta^{\pi }$ is also independent of the posterior for $(\eta^m,F)$. Due to the fact that $\tau_{\eta}$ does not depend on $\eta^{\pi}$, it is unnecessary to further discuss a prior or posterior distribution on $\eta^{\pi}$. 
		 
We first introduce high-level assumptions and discuss primitive conditions for those in the next section. Below, we consider some measurable sets $\mathcal H^m_n$ of functions $\eta^m$ such that $\Pi(\eta^m\in\mathcal{H}^m_n|Z^{(n)})\to_{P_0} 1$. W                               e also denote $\mathcal{H}_n=\{\eta:\eta^m\in\mathcal{H}_n^m\}$ when we index the conditional mean function $m_{\eta}$ by its subscript $\eta$. 
We introduce the notation $\|\phi\|_{2, F_0}:= \sqrt{\int \phi^2(x)\,\mathrm{d}F_0(x)}$ for all $\phi\in L^2(F_0):=\{\phi:\|\phi\|_{2, F_0}<\infty\}$, as well as the supremum norm $\|\cdot\|_\infty$. For two sequences $\{a_n\}$ and $\{b_n\}$ of positive numbers, we write $a_n \lesssim b_n$ if $\limsup_{n\to\infty} (a_n / b_n)<\infty$, and $a_n \sim b_n$ if $a_n \lesssim b_n$ and $b_n \lesssim a_n$.

	\begin{assumption}\label{Assump:Rate}[Rates of Convergence]
The estimators $\widehat \pi$ and $\widehat m$, which are based on an auxiliary sample independent of $Z^{(n)}$, satisfy 	$\Vert \widehat{\pi}-\pi_0\Vert_{2, F_0}=O_{P_0}(r_n) $ and for $d\in\{0,1\}$:
	\begin{equation*}
\Vert \widehat{m}(d,\cdot)-m_0(d,\cdot)\Vert_{ 2,F_0}=O_{P_0}(\varepsilon_n)~\text{ and }~\sup_{\eta\in\mathcal{H}_n}\Vert m_\eta(d,\cdot)-m_0(d,\cdot)\Vert_{2,F_0}\lesssim \varepsilon_n,
	\end{equation*}
	where $\max\{\varepsilon_n, r_n\}\to 0$ and $\sqrt{n}\,\varepsilon_nr_n\to 0$. Further, $\Vert \widehat\gamma\Vert_{\infty}=O_{P_0}(1)$. 
	\end{assumption}
 We adopt the standard empirical process notations as follows. For a function $h$ of a random vector $Z_i$ that follows distribution $P_0$, we let $P_0[h]=\int h(z)\,\mathrm{d}P(z),\mathbb{P}_n[h]=n^{-1}\sum_{i=1}^{n}h(Z_i)$, and $\mathbb{G}_n[h]=\sqrt n\left(\mathbb{P}_n-P_0\right)[h]$. Below, we make use of the notations $\bar{m}_{\eta}(\cdot)=m_{\eta}(1,\cdot)-m_{\eta}(0,\cdot)$ and $\bar{m}_{0}(\cdot)=m_{0}(1,\cdot)-m_{0}(0,\cdot)$.
 
	\begin{assumption}\label{Assump:Donsker}[Complexity] 
	For $\mathcal{G}_n=\{\bar{m}_{\eta}(\cdot):\eta\in \mathcal{H}_n\}$ it holds
	$\sup _{\bar{m}_{\eta} \in \mathcal{G}_n}\left|(\mathbb{P}_n-P_0) \bar{m}_{\eta}\right| =o_{P_0}(1)$
and 
	\begin{align}\label{NewSE}
		\sup_{\eta\in\mathcal{H}_n}\left|\mathbb{G}_n\left[\left(\widehat\gamma-\gamma_0\right)(m_\eta-m_0)\right]\right|&=o_{P_0}(1).
	\end{align}
	\end{assumption}	
Recall the propensity score-dependent prior on $m$ given by
$m_\eta(\cdot) = \Psi\left(\eta^m(\cdot)\right)$ where $\eta^m(\cdot )=W^m(\cdot) + \lambda\,\widehat \gamma(\cdot)$.
The restriction on $\lambda$ is made through its hyperparameter $\sigma_n>0$. 
	\begin{assumption}\label{Assump:Prior}[Prior Stability]
		For $d\in\{0,1\}$, $W^m(d,\cdot)$ is a continuous stochastic process independent of the normal random variable $\lambda\sim N(0,\sigma_n^2)$, where $\sigma_n\lesssim 1$, $n\sigma^2_{n}\to\infty$ and that satisfies: (i)
		$\Pi\left(\lambda:|\lambda|\leq u_n\sigma_n^2\sqrt{n}\mid Z^{(n)}\right)\to_{P_0}1$, 
	for some deterministic sequence $u_n\to 0$	and (ii) 
		$\Pi\left((w,\lambda):w+(\lambda+tn^{-1/2})\widehat{\gamma}\in\mathcal{H}_n^m\mid Z^{(n)} \right)\to_{P_0}1$
		for any $t\in\mathbb R$.
	\end{assumption}
	
	\textit{Discussion of Assumptions:} 
Assumption~\ref{Assump:Rate} imposes
sufficiently fast convergence rates for the pilot estimators for the conditional
mean function $m_{0}$ and the propensity score $\pi _{0}$. When considering frequentist pilot estimators, these
rate conditions can be justified by adopting the recent proposals of
\cite{ChernozhukovNeweySingh2020a, ChernozhukovNeweySingh2020b}. One can also use Bayesian point estimators such as the posterior mean of
the Gaussian process for $\widehat{m}$ and $\widehat{\pi}$. The posterior
convergence rate for the conditional mean $m_{\eta}$ can be derived in
the same spirit of \cite{ray2020causal}. The rate conditions in Assumption 2 also resemble conditions (i) and (ii) of Theorem~1 of \cite{farrell2015} in the context of frequentist
estimation. Remark~\ref{rem:DR} illustrates that under classical smoothness assumptions, this assumption is less restrictive than the method of \cite{ray2020causal} or
other approaches for semiparametric estimation of ATEs as found in
\cite{chen2008semiparametric} or \cite{farrell2021deep}. Assumption~\ref{Assump:Prior} incorporates Conditions (3.9) and (3.10) from Theorem~2 in \cite{ray2020causal}, and it is imposed to check the invariance property
of the adjusted prior distribution. These restrictions are mild and extend
beyond the Gaussian processes considered in Section~\ref{sec:gauss:prior} for concreteness.
 
Assumption~\ref{Assump:Donsker} restricts the functional class
$\mathcal{G}_{n}$ to form a $P_{0}$-Glivenko--Cantelli class; see Section~2.4 of \cite{van1996empirical}. 
This imposes a new stochastic equicontinuity condition, as (\ref{NewSE}) restricts a product structure involving $\widehat\gamma$ and $m_{\eta}$, which further relaxes the corresponding condition from \cite{ray2020causal}, namely, $\sup_{\eta \in \mathcal{H}^{m}_{n}}\mathbb{G}_{n} [m_{\eta}-m_{0}] = o_{P_{0}}(1)$.
In the next section,
we demonstrate that our formulation allows for double robustness under
H\"older classes (see Remark~\ref{rem:DR}). Hence, the complexity of the
functional class $(m_{\eta}-m_{0})$ can be compensated by sufficient regularity
of the corresponding Riesz representer and vice versa. A condition
similar to our Assumption~\ref{Assump:Donsker} is also used in the frequentist
literature; see Section~2 of  Benkeser, Carone, van der Laan, and
Gilbert (\citeyear{benkeser2017doubly}). Nonetheless, the
technical argument differs substantially from the frequentist's study,
because we mainly need the condition (\ref{NewSE}) to control changes in
the likelihood under perturbations along the estimated and true least favorable
directions. This is unique to Bayesian analysis with nonparametric priors.
	
\subsection{A Double Robust Bernstein-von Mises Theorem}
	We now establish a new Bernstein–von Mises theorem, which establishes the asymptotic normality of
the posterior distribution, modulo a ``bias term''. In a next step, we show that posterior correction, as proposed in our procedure, eliminates this ``bias term''.
This asymptotic equivalence result is established using the bounded Lipschitz distance. 
For two probability measures $P,Q$ defined on a metric space $\mathcal{Z}$, we define the bounded Lipschitz distance as
 \begin{equation}
 	d_{BL}(P,Q)=\sup_{f\in BL(1)}\left| \int_{\mathcal{Z}}f(\mathrm{d}P-\mathrm{d}Q)\right|,
 \end{equation}
where 
\begin{equation*}
	BL(1)=\left\{f:\mathcal{Z}\mapsto\mathbb{R}, \sup_{z\in\mathcal{Z}}|f(z)|+\sup_{z\neq z'}\frac{|f(z)-f(z')|}{\|z-z'\|_{\ell_2}}\leq 1 \right\}.
\end{equation*}
Here,  $\|\cdot\|_{\ell_2}$ denotes the vector $\ell_2$ norm. 

Below is our main statement about the asymptotic behavior of the posterior distribution of $\tau_{\eta}$. As in the modern Bayesian paradigm, the exact posterior is rarely of closed-form, and one needs to rely on certain Monte Carlo simulations, such as the implementation procedure in Section \ref{sec:method_outline}, to approximate this posterior distribution, as well as the resulting point estimator and credible set.
\begin{theorem}\label{thm:BvM}
	Let Assumptions \ref{Ass:unconfounded}--\ref{Assump:Prior} hold. Then we have
	\begin{equation*}
		d_{BL}\left(\mathcal{L}_{\Pi}(\sqrt{n}(\tau_\eta-\widehat{\tau}-b_{0,\eta})|Z^{(n)}), N(0,\textsc v_0) \right)\to_{P_0} 0,
	\end{equation*}
	where $b_{0,\eta}:=	\mathbb{P}_n[\gamma_0(m_0-m_{\eta})-(\bar{m}_0-\bar{m}_{\eta})] $.
\end{theorem}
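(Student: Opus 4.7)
The plan is to show convergence of the posterior Laplace transform: for each $t\in\mathbb{R}$,
\[
\mathbb{E}_\Pi\big[e^{t\sqrt{n}(\chi_\eta - \widehat\chi - b_{0,\eta})} \,\big|\, \bm{O}^{(n)}\big] \;\to_{P_0}\; e^{t^2 \textsc v_0/2},
\]
which implies the stated bounded Lipschitz convergence by a standard continuity-theorem/subsequence argument. I would adopt the prior-shift and change-of-measure framework of \cite{castillo2015bvm} and \cite{ray2020causal}, adapted to accommodate (i) the weaker equicontinuity (\ref{NewSE}) and (ii) the extra posterior recentering by $b_{0,\eta}$.

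The first move is to exploit the \emph{a posteriori} independence of the priors on $\eta^m$ and $F$. Writing
\[
\chi_\eta - \widehat\chi - b_{0,\eta} \;=\; (\chi_\eta - \mathbb{P}_n\bar m_\eta) \;+\; (\mathbb{P}_n\bar m_\eta - \widehat\chi - b_{0,\eta}),
\]
the first summand depends on $F$ (given $\eta^m$), while the second depends only on $\eta^m$ and the data. The classical BvM for Dirichlet-process functionals combined with $\|\bar m_\eta - \bar m_0\|_{L^2(F_0)} \to 0$ from Assumption \ref{Assump:Rate} yields that $\sqrt n(\chi_\eta - \mathbb{P}_n\bar m_\eta)$ conditional on $(\bm{O}^{(n)}, \eta^m)$ is asymptotically $N(0, \mathrm{Var}_0(\bar m_0))$, uniformly over $\eta^m \in \mathcal H^m_n$.

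For the remaining $\eta^m$-integral I would apply the prior shift $\eta^m \mapsto \eta^m - (t/\sqrt n)\widehat\gamma$, induced by translating the Gaussian variable $\lambda \to \lambda - t/\sqrt n$ in \eqref{prior:ps}. Assumption \ref{Assump:Prior} together with Lemma \ref{lemma:PriorInv} guarantees this shift is asymptotically prior-invariant and preserves posterior support. Combined with the LAN expansion of the binary-outcome log-likelihood along the estimated least favorable direction $\widehat\gamma$, the change of variable produces (a) a linear drift $t\cdot \mathrm{Var}_0(\gamma_0(Y-m_0))$ from the Taylor expansion of $\chi_\eta$; (b) a quadratic LAN term $-\tfrac{t^2}{2}\mathrm{Var}_0(\gamma_0(Y-m_0))$; and (c) a cross term of the form $\sqrt n\,\mathbb{P}_n[(\widehat\gamma - \gamma_0)(m_\eta - m_0)]$. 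Assumption \ref{Assump:Donsker}, specifically (\ref{NewSE}), replaces the cross term by its $P_0$-expectation uniformly in $\eta\in\mathcal H_n$, which is in turn $o_{P_0}(1)$ by Cauchy–Schwarz and the rate $\sqrt n\,\varepsilon_n r_n \to 0$. The recentering by $b_{0,\eta}$ is engineered precisely to cancel the first-order drift $\sqrt n(\mathbb{P}_n[\gamma_0(m_\eta - m_0)] - \mathbb{P}_n[\bar m_\eta - \bar m_0])$ that would otherwise survive; this is the key departure from \cite{ray2020causal}.

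Summing the two contributions and using that $\mathbb{E}_0[\gamma_0(Y-m_0)\mid X] = 0$ under unconfoundedness (so the cross covariance between the $F$- and $\eta^m$-pieces vanishes) produces the efficient variance $\textsc v_0 = \mathrm{Var}_0(\bar m_0) + \mathrm{Var}_0(\gamma_0(Y-m_0))$. The main obstacle is the step invoking (\ref{NewSE}): extracting a no-bias condition at the $\sqrt n$-scale when neither $m_\eta - m_0$ nor $\widehat\gamma - \gamma_0$ individually belongs to a Donsker class. This requires a careful accounting of which second-order terms from the LAN expansion are annihilated by $b_{0,\eta}$ and which vanish by the double-robust rate condition, and it is the place where this proof genuinely improves on the single-robust smoothness requirements of \cite{castillo2015bvm,ray2020causal}.
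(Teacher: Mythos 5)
Your proposal follows essentially the same route as the paper's proof: reduction to convergence of the posterior Laplace transform, a prior shift along the least favorable direction combined with the LAN expansion of the binary-outcome likelihood, the Bayesian-bootstrap treatment of the $F$-marginal, the exact cancellation of $b_{0,\eta}$ against the centered drift $\mathbb{G}_n[\gamma_0(m_0-m_\eta)-(\bar m_0-\bar m_\eta)]$, and condition \eqref{NewSE} to control the product cross term $\mathbb{G}_n[(\widehat\gamma-\gamma_0)(m_\eta-m_0)]$. The only organizational difference is that the paper expands the likelihood along the true direction $\gamma_0$ and transfers the shift to $\widehat\gamma$ inside the prior-stability lemma (which is where \eqref{NewSE} actually enters), whereas you shift along $\widehat\gamma$ directly; this is a cosmetic reordering of the same steps.
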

We emphasize that the above BvM theorem is not feasible for applications, because it depends on the ``bias term'' $b_{0,\eta}$, which depends on the unknown conditional mean $m_0$. Nonetheless, it provides an important theoretical benchmark. One can follow the existing literature on semiparametric BvM theorems to impose the so-called ``no-bias" condition, but this generally leads to strong smoothness restrictions and may not be satisfied when the dimensionality of covariates is large relative to the smoothness properties of the underlying functions; see the discussion on page 395 of \cite{van1998asymptotic}. 

This ``bias term'' in our context consists of two key components, with the first involving unknown true functions and the second depending on the posterior of $m_{\eta}$. We consider pilot estimators for the unknown functional parameters in $b_{0,\eta}$. The correction term $\widehat{b}_{\eta}$, as introduced in \eqref{debiased_bay_est}, results in a feasible Bayesian procedure that satisfies the BvM theorem under double robustness, as demonstrated below.
\begin{theorem}\label{thm:Debias}
	Let Assumptions \ref{Ass:unconfounded}--\ref{Assump:Prior} hold. Then we have
	\begin{equation*}
		d_{BL}\left(\mathcal{L}_{\Pi}(\sqrt{n}(\tau_\eta-\widehat{\tau}-\widehat{b}_\eta)|Z^{(n)}), N(0,\textsc v_0) \right)\to_{P_0} 0.
	\end{equation*}
\end{theorem}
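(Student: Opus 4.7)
My plan is to reduce the claim to Theorem~\ref{thm:BvM} by showing that $\sqrt n(\widehat b_\eta-b_{0,\eta})\to_{P_0}0$ uniformly over $\eta$ in a posterior-typical set $\mathcal H_n$ on which $\Pi(\mathcal H_n\mid\bm O^{(n)})\to_{P_0}1$. A triangle inequality in the bounded Lipschitz metric then transports the BvM conclusion from the infeasible centering $\widehat\chi+b_{0,\eta}$ to the feasible centering $\widehat\chi+\widehat b_\eta$.

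\textbf{Key decomposition.}
Writing $\widehat{\bar m}(\cdot):=\widehat m(1,\cdot)-\widehat m(0,\cdot)$ and expanding the definitions of $\widehat b_\eta$ and $b_{0,\eta}$, straightforward algebra gives
\[
\widehat b_\eta - b_{0,\eta} \;=\; \underbrace{\mathbb P_n\bigl[\bar m_0-\widehat{\bar m}\bigr] + \mathbb P_n\bigl[\widehat\gamma(\widehat m-m_0)\bigr]}_{=:\,R_n} \;+\; \underbrace{\mathbb P_n\bigl[(\gamma_0-\widehat\gamma)(m_\eta-m_0)\bigr]}_{=:\,A_n(\eta)},
\]
isolating the only $\eta$-dependent piece $A_n(\eta)$ from the data-only remainder $R_n$. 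For $A_n(\eta)$, splitting $\sqrt n\,A_n(\eta)$ into an empirical-process part $-\mathbb G_n[(\widehat\gamma-\gamma_0)(m_\eta-m_0)]$ plus a population-mean part $\sqrt n\,P_0[(\gamma_0-\widehat\gamma)(m_\eta-m_0)]$, the first supremum over $\mathcal H_n$ is $o_{P_0}(1)$ directly by \eqref{NewSE} in Assumption~\ref{Assump:Donsker}, while Cauchy-Schwarz together with Assumption~\ref{Assump:Rate} bounds the second supremum by $\sqrt n\,r_n\varepsilon_n=o(1)$.

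\textbf{Main obstacle: the bias-only term $R_n$.}
The more delicate step is $\sqrt n\,R_n=o_{P_0}(1)$: each summand of $R_n$ is only $O_{P_0}(\varepsilon_n)$ in isolation, so neither alone can be made negligible after scaling by $\sqrt n$. The saving cancellation comes from the Riesz identity $P_0[\gamma_0 f]=P_0[f(1,\cdot)-f(0,\cdot)]$ (valid for any bounded $f:\{0,1\}\times\mathcal X\to\mathbb R$, which is exactly what defines the representor $\gamma_0$). Applied to $f=\widehat m-m_0$, this identity forces the population biases of the two summands of $R_n$ to cancel, leaving
\[
\sqrt n\,R_n \;=\; \mathbb G_n\bigl[\bar m_0-\widehat{\bar m}\bigr] + \mathbb G_n\bigl[\widehat\gamma(\widehat m-m_0)\bigr] + \sqrt n\,P_0\bigl[(\widehat\gamma-\gamma_0)(\widehat m-m_0)\bigr].
\]
Since the pilot estimators $(\widehat m,\widehat\gamma)$ are built from an auxiliary sample, conditioning on that sample makes each $\mathbb G_n$ term an empirical process of a \emph{deterministic} function whose $L^2(F_0)$ norm is $O(\varepsilon_n)$ by Assumption~\ref{Assump:Rate} and overlap, and each is therefore $o_{P_0}(1)$ (variance vanishes). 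The remaining product term is bounded by $\sqrt n\,r_n\varepsilon_n=o(1)$ via Cauchy-Schwarz. Combining with the bound on $A_n$, one obtains $\sup_{\eta\in\mathcal H_n}\sqrt n|\widehat b_\eta-b_{0,\eta}|=o_{P_0}(1)$.

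\textbf{Transport to the BvM.}
For $f\in BL(1)$ one has $|f(a)-f(b)|\le\min\{2,|a-b|\}$; coupled with the posterior concentration $\Pi(\mathcal H_n\mid\bm O^{(n)})\to_{P_0}1$, this yields
\[
d_{BL}\!\bigl(\mathcal L_\Pi(\sqrt n(\chi_\eta-\widehat\chi-\widehat b_\eta)\mid\bm O^{(n)}),\,\mathcal L_\Pi(\sqrt n(\chi_\eta-\widehat\chi-b_{0,\eta})\mid\bm O^{(n)})\bigr) \;\le\; \sup_{\eta\in\mathcal H_n}\sqrt n|\widehat b_\eta-b_{0,\eta}| + 2\Pi(\mathcal H_n^c\mid\bm O^{(n)}) \;=\; o_{P_0}(1),
\]
and the claim then follows from Theorem~\ref{thm:BvM} by triangle inequality.
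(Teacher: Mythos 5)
Your proposal is correct, and its overall architecture matches the paper's: both reduce the theorem to $\sup_{\eta\in\mathcal H_n}\sqrt n\,|\widehat b_\eta-b_{0,\eta}|=o_{P_0}(1)$ (you additionally spell out the bounded-Lipschitz transport step that the paper leaves implicit), and both treat the only $\eta$-dependent piece $\mathbb P_n[(\gamma_0-\widehat\gamma)(m_\eta-m_0)]$ identically, via the stochastic equicontinuity condition \eqref{NewSE} plus Cauchy--Schwarz and $\sqrt n\,r_n\varepsilon_n\to0$. Where you genuinely diverge is the $\eta$-free remainder. The paper's decomposition \eqref{dec:proof:feasible:drb} deliberately inserts the outcome residuals $\rho^{m_\eta}$ and $\rho^{\widehat m}$ into its two halves, invokes Lemma \ref{lemma:FreqDR} (the frequentist AIPW expansion of $\mathbb P_n[\widehat{\bar m}+\widehat\gamma\rho^{\widehat m}-\bar m_0]$), and is then left with $\mathbb P_n[(\gamma_0-\widehat\gamma)\rho^{m_0}]$, which Lemma \ref{lemma:negligible} handles using the conditional mean-zero property of $Y-m_0(D,X)$. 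Your $R_n$ never involves $Y$ at all: the exact Riesz identity $P_0[\gamma_0 f]=P_0[f(1,\cdot)-f(0,\cdot)]$ cancels the two $O(\varepsilon_n)$ population biases against each other, leaving only centered empirical processes of functions that are deterministic given the auxiliary sample (killed by a conditional second-moment bound) and the product term $\sqrt n\,P_0[(\widehat\gamma-\gamma_0)(\widehat m-m_0)]\lesssim\sqrt n\,r_n\varepsilon_n=o(1)$. This buys a shorter, self-contained argument that needs neither Lemma \ref{lemma:FreqDR} nor Lemma \ref{lemma:negligible}; the paper's route, in exchange, makes the asymptotic equivalence with the frequentist doubly robust estimator \eqref{freq_DR} explicit as a by-product. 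Both arguments are valid under the stated assumptions.
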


We now show how Theorem \ref{thm:Debias} can provide frequentist justification of Bayesian methods to construct the point estimator and the confidence sets. Recall that $\overline{\tau}_{\eta}$ represents the posterior mean. Introduce a Bayesian credible set $\mathcal{C}_n(\alpha)$ for $\tau_\eta$, which satisfies $\Pi(\tau_\eta\in \mathcal{C}_n(\alpha)|Z^{(n)})=1-\alpha$ for a given nominal level $\alpha\in(0,1)$. The next result shows that $\mathcal{C}_n(\alpha)$ also forms a confidence interval in the frequentist sense for the ATE parameter whose coverage probability under $P_0$ converges to $1-\alpha$. 

\begin{corollary}\label{cor:CI}
	Let Assumptions \ref{Ass:unconfounded}--\ref{Assump:Prior} hold. Then under $P_0$, we have
	\begin{equation}
		\sqrt{n}\left(\overline{\tau}_{\eta}-\tau_0\right)\Rightarrow N(0,\textsc v_0).
	\end{equation}
	Also, for any $\alpha\in(0,1)$ we have
		$P_0\big(\tau_0\in  \mathcal{C}_n(\alpha)\big) \to 1-\alpha$.
\end{corollary}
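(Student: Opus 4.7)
The plan is to deduce both conclusions from Theorem \ref{thm:Debias} together with the asymptotic linearity \eqref{def:est:chi} of $\widehat\chi$, by means of Slutsky-type arguments, treating $\widehat\chi$ as a constant under the posterior conditional on the data.

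For the asymptotic normality of the posterior mean, I start from the decomposition
$$\sqrt{n}(\overline\chi_\eta - \chi_0) = \sqrt{n}(\overline\chi_\eta - \widehat\chi) + \sqrt{n}(\widehat\chi - \chi_0).$$
The second summand converges to $N(0, \textsc v_0)$ under $P_0$ by the classical CLT applied to the efficient influence function representation \eqref{def:est:chi}. Since $\overline\chi_\eta$ equals the posterior expectation of $\check\chi_\eta = \chi_\eta - \widehat b_\eta$, the first summand equals the posterior mean of $\sqrt n(\chi_\eta - \widehat b_\eta - \widehat\chi)$. Theorem \ref{thm:Debias} gives weak convergence of this posterior law to $N(0, \textsc v_0)$ in $P_0$-probability (whose own mean is zero), so what remains is to upgrade $d_{BL}$ convergence to convergence of the first moment. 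I would do this by a truncation argument: the $d_{BL}$ bound applied to the bounded Lipschitz truncations $f_M(x) = (-M)\vee (x \wedge M)$ handles the bulk of the integral, while the tail $\int_{|x|>M} x\, d\Pi$ is controlled uniformly in $n$ via posterior concentration around $\widehat\chi$ at rate $n^{-1/2}$ (implied by the BvM) in conjunction with the boundedness $|\chi_\eta|\le 1$ and of $\widehat b_\eta$ inherited from the binary outcome setup and the overlap part of Assumption \ref{Ass:unconfounded}. Slutsky's theorem then concludes the CLT for $\overline\chi_\eta$.

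For the credible set, I use that $d_{BL}$ metrizes weak convergence on $\mathbb R$ and that $N(0, \textsc v_0)$ has a continuous, strictly increasing CDF; therefore Theorem \ref{thm:Debias} gives convergence of the posterior CDF of $\sqrt n(\check\chi_\eta - \widehat\chi)$ to $\Phi(\cdot/\sqrt{\textsc v_0})$ pointwise, and hence uniformly, in $P_0$-probability. Inverting produces the quantile expansion $\sqrt n(q_n(a) - \widehat\chi) = \sqrt{\textsc v_0}\, z_a + o_{P_0}(1)$ with $z_a = \Phi^{-1}(a)$. Substituting into $P_0(q_n(\alpha/2) \le \chi_0 \le q_n(1-\alpha/2))$ and using the symmetry $z_{\alpha/2} = -z_{1-\alpha/2}$, the CLT for $\sqrt n(\widehat\chi - \chi_0)/\sqrt{\textsc v_0}$, and Slutsky, the coverage probability converges to $\Phi(z_{1-\alpha/2}) - \Phi(z_{\alpha/2}) = 1 - \alpha$.

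The principal obstacle is the uniform integrability step in the first part: $d_{BL}$ convergence is sensitive only to bounded Lipschitz test functions, whereas the posterior mean is a linear functional under the unbounded identity map. The boundedness built into the binary outcome model together with the overlap assumption is the structural feature I will exploit to supply the requisite tail control; once this is in place, the rest of the corollary amounts to routine Slutsky bookkeeping.
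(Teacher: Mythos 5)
Your treatment of the coverage statement is sound and essentially matches the paper's argument: the paper rewrites $\mathcal{C}_n(\alpha)$ as $\chi_0+\mathbb{P}_n\widetilde{\chi}_0+\sqrt{\textsc v_0/n}\,\mathcal B_n(\alpha)$ with $\mathbb{N}(\mathcal B_n(\alpha))\to_{P_0}1-\alpha$ and then applies the CLT to $\mathbb{G}_n\widetilde{\chi}_0$, which is the same Slutsky bookkeeping you carry out through quantile convergence; your route via the continuity and strict monotonicity of the limiting CDF is, if anything, slightly more explicit for the equal-tailed interval actually constructed in Section \ref{sec:method_outline}.

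The gap is in the first part, precisely at the step you yourself flag as the principal obstacle. You propose to control the tail $\int_{|x|>M}x\,d\Pi$ by combining $|\chi_\eta|\le 1$ (and boundedness of $\widehat b_\eta$) with "posterior concentration at rate $n^{-1/2}$ implied by the BvM." But boundedness of the untransformed quantities only gives $|\sqrt n(\chi_\eta-\widehat b_\eta-\widehat\chi)|\le C\sqrt n$, so the tail term is at most $C\sqrt n\,\Pi\big(|\sqrt n(\chi_\eta-\widehat b_\eta-\widehat\chi)|>M\mid\bm{O}^{(n)}\big)$, and Theorem \ref{thm:Debias} only tells you that this posterior probability converges to the fixed positive number $2(1-\Phi(M/\sqrt{\textsc v_0}))$ for each fixed $M$ --- nowhere near the $o(n^{-1/2})$ needed for the product to vanish. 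Convergence in $d_{BL}$ is blind to the behavior of the posterior at scales between $M$ and $C\sqrt n$, so uniform integrability of the localized posterior does not follow from what you have assembled. Closing this requires an extra ingredient, e.g.\ a posterior second-moment bound $\E^{\Pi}[\,n(\chi_\eta-\widehat b_\eta-\widehat\chi)^2\mid\bm{O}^{(n)}]=O_{P_0}(1)$ or tail control at all levels $t\in[M,C\sqrt n]$. The paper avoids arguing through moments altogether and instead deduces the asymptotic normality of the posterior mean from the argmax theorem as in the proof of Theorem 10.8 of van der Vaart (1998). The outer decomposition $\sqrt n(\overline\chi_\eta-\chi_0)=\sqrt n(\overline\chi_\eta-\widehat\chi)+\sqrt n(\widehat\chi-\chi_0)$ and the CLT for the second summand via \eqref{def:est:chi} are fine.
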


 To the best of our knowledge, this is the first BvM theorem that entails the double robustness. 
We discuss the distinction from Theorem 2 in \cite{ray2020causal}. Their work laid the theoretical foundation for  Bayesian inference based on propensity score adjusted priors. Specifically,  under this prior adjustment, they establish a BvM result under weak regularity conditions on the propensity score function.
Our analysis differs from \cite{ray2020causal} in two crucial ways. First, we improve on their Lemma 3 by showing that it is possible to verify the prior stability condition for propensity score-adjusted priors under the product structure in Assumption \ref{Assump:Donsker}, modulo the ``bias term" $b_{0,\eta}$. This separation is essential to identify the source of the restrictive condition, such as the Donsker property on $m_{\eta}$, which is mainly used to eliminate $b_{0,\eta}$.
Second, our proposal introduces an explicit debiasing step, borrowing key insights from recent developments in the DML literature.
 
	\begin{remark}[Connection with frequentist robust estimation] 
In our BvM theorem, we do not restrict the centering estimator $\widehat{\tau}$, as long as it admits the linear representation	as in \eqref{def:est:chi}.
A popular frequentist estimator for the ATE that achieves double robustness is
	\begin{equation}\label{freq_DR}
	\widehat{\tau}=	n^{-1}\sum_{i=1}^n \big(\widehat m(1,X_i)-\widehat m(0,X_i)\big) + n^{-1}\sum_{i=1}^n\widehat\gamma(D_i,X_i)\big(Y_i-\widehat m(D_i,X_i)\big)
	\end{equation}
	based on frequentist-type pilot estimators $\widehat{m}$ of the conditional mean function $m_0$ and $\widehat{\gamma}$ of the Riesz representer $\gamma_0$; see \cite{robins1995} and more recently	\cite{ChernozhukovNeweySingh2020a,ChernozhukovNeweySingh2020b}.
	The double robust or double machine learning estimator \eqref{freq_DR} recenters the plug-in type functional by an explicit correction factor that depends on the Riesz representer.\footnote{Another popular method in the statistics literature is the targeted learning approach \citep{vanderLaan2011tl,benkeser2017doubly}.	} 
	Our main result establishes the asymptotic equivalence of our estimator and \eqref{freq_DR}. This not only offers frequentist validity to our Bayesian procedure but also provides a Bayesian interpretation for doubly robust frequentist methods.

\end{remark}
\begin{remark}[Parametric Bayesian Methods]
A couple of recent papers propose doubly robust Bayesian recipes for ATE inference, under parametric model restrictions. \cite{saarela2016double} considered a Bayesian procedure based on an analog of the double robust frequentist estimator given in Equation \eqref{freq_DR}, replacing the empirical measure with the Bayesian bootstrap measure. However, there was no formal BvM theorem presented therein. 
 Another recent paper by \cite{yiu2020unequal} explored Bayesian exponentially tilted empirical likelihood with a set of moment constraints that are of a double-robust type. They proved a BvM theorem for the posterior constructed from the resulting exponentially tilted empirical likelihood under parametric specifications.
  \cite{luo2023semiparametric} provided Bayesian results for ATE estimation in a partial linear model, which implies homogeneous treatment effects. They also assign parametric priors to the propensity score. Their BvM Theorem allows for misspecification only in a parametric nonlinear component of the outcome equation. It is not clear how to extend their analysis to incorporate flexible nonparametric modeling strategies.
\end{remark}

	\section{Illustration using Squared Exponential Process Priors}\label{sec:gauss:prior}
We illustrate the general methodology by placing a particular Gaussian process prior on $\eta^m(d,\cdot)$ in relation to the conditional mean functions for $d\in\{0,1\}$. The Gaussian process regression has been extensively used among the machine learning community, and started to gain popularity among economists \citep{kasy2018tax}. We provide primitive conditions used in our main results in the previous section. In addition, we provide details on the implementation using Gaussian process priors and discuss the data-driven choices of tuning parameters.
		\subsection{Asymptotic Results under Primitive Conditions}
 Let $(W(t):t\in\mathbb{R}^p)$ be a generic centered and homogeneous Gaussian random field with covariance function of the following form $\mathbb{E}[W(s)W(t)]=\phi(s-t)$,
for a given continuous function $\phi:\mathbb{R}^p\mapsto \mathbb{R}$. We consider $W(t)$ as a Borel measurable map in the space of continuous functions on $[0,1]^p$, equipped with the supremum norm $\Vert \cdot\Vert_{\infty}$. The Gaussian process is completely determined by the covariance function.
For example, the covariance function of the squared exponential process is given by $\mathbb{E}[W(s)W(t)]=\exp(-\Vert s-t\Vert_{\ell_2}^2)$, as its name suggests.  In this section, we focus on the squared exponential process prior, which is one of the most commonly used priors in applications; see \cite{rassmusen2006gaussian} and \cite{murphy2023pml}.
We also consider a rescaled Gaussian process $\big(W(a_nt):\,t\in [0,1]^p\big)$ for some $a_n>0$. Intuitively speaking, $a_n^{-1}$ can be thought as a bandwidth parameter. For a large $a_n$ (or equivalently a small bandwidth), the prior sample path $t\mapsto W(a_nt)$ is obtained by shrinking the long sample path $t\mapsto W(t)$. Thus, it employs more randomness and becomes suitable as a prior model for less regular functions, see \cite{van2008gaussian,van2009adaptive}. 

Below, $\mathcal{C}^{s_m}([0,1]^p)$ denotes a H\"older space with the smoothness index $s_m$. Specifically, we illustrate our theory with the case where $m_0(d,\cdot)\in \mathcal{C}^{s_m}([0,1]^{p})$ for $d\in\{0,1\}$.  Given such a H\"older-type smoothness condition, we choose
\begin{equation}\label{RescaleRate}
a_n\sim n^{1/(2s_m+p)}(\log n)^{-(1+p)/(2s_m+p)}.
\end{equation}		
 Under (\ref{RescaleRate}), a rescaled Gaussian process $\big(W(a_nt):\,t\in [0,1]^p\big)$ induces the posterior contraction rate for the conditional mean function $m_\eta(d,\cdot)$ to be $\varepsilon_n=n^{-s_m/(2s_m+p)}(\log n)^{s_m(1+p)/(2s_m+p)}$; see Section 11.5 of \cite{ghosal2017fundamentals}.
 The particular choice of $a_n$ mimics the corresponding kernel bandwidth based on kernel smoothing methods. Other choices of $a_n$ will generally make the convergence rate slower. Nonetheless, as long as the propensity score is estimated with a sufficiently fast rate, our BvM theorem still holds. The next proposition illustrates our general theory when we adopt the rescaled squared exponential process prior for the conditional mean function. We use the superscript $m$ for the prior process $W^m$ to signify this relationship.
\begin{proposition}\label{prop:exponential}
Let Assumption \ref{Ass:unconfounded} hold.
The estimator $\widehat\gamma$ satisfies $\|\widehat\gamma\|_\infty=O_{P_0}(1)$ and $\|\widehat{\gamma}-\gamma_0\Vert_\infty= O_{P_0}\big((n/\log n)^{-s_\pi/(2s_\pi+p)}\big)$ for some $s_\pi>0$. Suppose $m_0(d,\cdot)\in \mathcal{C}^{s_m}([0,1]^{p})$ for $d\in\{0,1\}$ and some $s_m>0$ with $\sqrt{s_\pi \, s_m}>p/2$. Also, $\|\widehat{m}(d,\cdot)-m_0(d,\cdot)\Vert_{2, F_0}= O_{P_0}\big((n/\log n)^{-s_m/(2s_m+p)}\big)$. 
Consider the propensity score-dependent prior on $m$ given by $m(d,x) = \Psi\left(W^m(d,x) + \lambda\,\widehat \gamma(d,x)\right)$, where $W^m(d,\cdot)$ is the rescaled squared exponential process for $d\in\{0,1\}$, with its rescaling parameter $a_n$ of the order in \eqref{RescaleRate} and
$\left(n/\log n\right)^{-s_m/(2s_m+p)}\lesssim u_n\sigma_n$ for some deterministic sequence $u_n\to 0$, and  $\sigma_n\lesssim 1$.
	Then, the corrected posterior distribution for the ATE satisfies Theorem \ref{thm:BvM}.
\end{proposition}

\begin{remark}[Double Robust H\"older Smoothness]\label{rem:DR}
Proposition \ref{prop:exponential} requires $\sqrt{s_\pi \, s_m}>p/2$, which represents a trade-off between the smoothness requirement for $m_0$ and $\pi_0$. This encapsulate the \textit{double robustness}; i.e., a lack of smoothness of the conditional mean function $m_0$ can be mitigated by exploiting the regularity of the propensity score and vice versa. Referring to the H\"older class $\mathcal{C}^{s_m}([0,1]^{p})$, its complexity measured by the bracketing entropy of size $\varepsilon$ is of order $\varepsilon^{-2\upsilon}$ for $\upsilon=p/(2s_m)$. One can show that the key stochastic equicontinuity assumption in \cite{ray2020causal}, that is, their condition (3.5), is violated by exploring the Sudkov lower bound \citep{han2021set} when $\upsilon>1$ or equivalently when $s_m<p/2$. In contrast, our framework accommodates this non-Donsker regime as long as $\sqrt{s_\pi \, s_m}>p/2$, which enables us to exploit the product structure and a fast convergence rate for estimating the propensity score. Our methodology is not restricted to the case where propensity score belongs to a H\"older class per se. For instance, under a parametric restriction (such as in logistic regression) or an additive model with unknown link function, the possible range of the posterior contraction rate $\varepsilon_n$ for the conditional mean function can be substantially enlarged.
In the case $ s_m>p/2$, the bias term becomes asymptotically negligible, i.e., $b_{0,\eta}=o_{P_0}(n^{-1/2})$. This allows for smoothness robustness only with respect to the propensity score and is also known as single robustness. In this case, no posterior correction is required, see \cite{ray2020causal}.
\end{remark}

 \subsection{Implementation Details}\label{sec:implement:GP}
We provide details on the Gaussian process prior placed on $\eta^m(d,x)$ and its posterior computation. Algorithm \ref{algorithm} sets the adjusted prior as $\eta^m(d,x)=W^m(d,x) + \lambda\,\widehat \gamma(d,x)$: In our implementation, we choose the first component $W^m(d,x)$ to be a zero-mean Gaussian process  with the commonly used squared exponential covariance function \citep[p.83]{rassmusen2006gaussian}. That is, 
$K\left((d,x),(d',x')\right):= \nu^2 \exp\left(-a_{0n}^{2}(d-d^\prime)^2/2-\sum_{l=1}^{p}a_{ln}^{2}(x_{l}-x^\prime_{l})^2/2\right)$
where the hyperparameter $\nu^2$ is the kernel variance and $a_{0n},\ldots,a_{pn}$ are rescaling parameters that reflect the relevance of treatment and each covariate in predicting $\eta^m$. They are selected by maximizing the marginal likelihood. 
Conditional on  the data used to obtain the propensity score estimator $\widehat\pi$, the prior for $\eta^m$ has zero mean
and the covariance kernel $K^c$, which includes an additional term based on the estimated Riesz representer $\widehat\gamma$. It is given by
 $K^c\left((d,x),(d^\prime,x^\prime)\right) = K\left((d,x),(d^\prime,x^\prime)\right)  + \sigma_n^2\widehat \gamma(d,x)\,\widehat \gamma(d^\prime,x^\prime),$
cf. related constructions from \cite{ray2019debiased} and \cite{ray2020causal}. 
The parameter $\sigma_n$, representing the standard deviation of $\lambda$, controls the weight of the prior adjustment relative to the standard Gaussian process. The choice $\sigma_n=(\log n)/(\sqrt{n}\,\Gamma_n)$, where $\Gamma_n= n^{-1}\sum_{i=1}^n\vert\widehat\gamma(D_i,X_i)\vert$, as specified in Algorithm \ref{algorithm},  satisfies the conditions $\sigma_n\lesssim 1$ and $n\sigma^2_{n}\to\infty$ in Assumption \ref{Assump:Prior}, with probability approaching one. It is similar to the choice suggested by \citet[page 6]{ray2019debiased}, where $\sigma_n$ is proportional to $1/(\sqrt{n}\,\Gamma_n)$. The factor $\Gamma_n$ normalizes the second term (adjustment term) of $K^c$  to have the same scale as the unadjusted covariance $K$.
Supplemental Appendix \ref{appendix:simu} shows that the finite sample performance of the double robust Bayesian approaches remains stable across different choices of $\sigma_n$.

Utilizing Gaussian process priors with zero mean and covariance function $K^c$, and incorporating the available data, we generate posterior draws of the vector $ \left[\eta^m(d,X_1),\cdots,\eta^m(d,X_n)\right]^{\top}$ for $d\in \{0,1\}$. This can be achieved through the Laplace approximation method detailed in Supplemental Appendix \ref{appendix:LPA}. 

For the implementation of the pilot estimator $\widehat\gamma$ given in \eqref{Riesz_est}, we recommend Logistic Lasso for the propensity score, with the penalty parameter chosen by cross-validation  \citep{friedman2010regularization}.
As a  pilot estimator $\widehat{m}$ in Algorithm \ref{algorithm} for posterior correction, we use the uncorrected posterior mean $\sum_{s=1}^S m_\eta^{s}/S$, where $m_\eta^{s}$ is calculated following Step (a) of posterior computation in Algorithm \ref{algorithm},  but with a Gaussian process prior without adjustment, i.e., $\Psi\left(W^m(d,\cdot)\right)$. When the rescaling parameter $a_n$ is as stated in Proposition \ref{prop:exponential}, the convergence rate of $\widehat{m}$ is $O_{P_0}\big((n/\log n)^{-s_m/(2s_m+p)}\big)$. This can be shown by combining Theorems 11.22, 11.55 and 8.8 from \cite{ghosal2017fundamentals}.

\section{Numerical Results}
In this section, we apply our method to one version of the Lalonde--Dehejia--Wahba data that contains a treated sample of 185 men from the National Supported Work (NSW) experiment and  a control sample of 2490 men from the Panel Study of Income Dynamics (PSID). The data has been used by \cite{lalonde1986}, \cite{dehejia1999causal}, \cite{abadie2011bias}, and \cite{armstrong2021finite}, among others. We refer readers to \cite{lalonde1986}, and \cite{dehejia1999causal} for reviews of the data.\footnote{The data is available on Dehejia's website:
 \href{http://users.nber.org/~rdehejia/nswdata2.html}{http://users.nber.org//$\sim$ rdehejia/nswdata2.html}.}
 
 \subsection{Simulations}\label{sec:simu}
In this section, we consider a simulation study where the observations are randomly drawn from a large sample generated by the Wasserstein Generative Adversarial Networks (WGAN) method from the the Lalonde--Dehejia--Wahba data, see \cite{athey2021using}.
We view their simulated data as the population and repeatedly draw our simulation samples (each consisting of 185 treated observations and 2490 control observations) for each of the $1000$ Monte Carlo replications.
We slightly depart from previous studies by focusing on a binary outcome $Y$: the employment indicator for the year 1978, which is defined as an indicator for positive earnings. The treatment $D$ is the participation in the NSW program. We are interested in the average treatment effect of the NSW program on the employment status. For the  set of covariates, we  follow \cite{abadie2011bias} and include nine variables: age,  education,  black,  Hispanic,  married, earnings in 1974,  earnings in 1975,  unemployed in 1974, and unemployed in 1975.
We implement our double robust Bayesian method (DR Bayes) following Algorithm \ref{algorithm}, using  $S=5000$ posterior draws and  the pilot estimator $\widehat \gamma$ and $\widehat m$, as detailed at the end of Section \ref{sec:implement:GP}.
We compare DR Bayes to two other Bayesian procedures:
 First, we consider the prior adjusted  Bayesian method  (PA Bayes) proposed by \cite{ray2020causal}, which 
constructs the point estimate and credible interval based on $\tau_{\eta}^s$ in (\ref{debiased_bay_est}).  Second, we examine an unadjusted Bayesian method (Bayes) which is also based on $\tau_{\eta}^s$ but is generated using
Gaussian process priors without the adjustment.

We also compare our method to frequentist estimators. Match/Match BC corresponds to the nearest neighbor matching estimator and its bias-corrected version by \cite{abadie2011bias}, which adjusts for differences in covariate values through regression.
DR TMLE corresponds to the doubly robust targeted maximum likelihood estimator by \cite{benkeser2017doubly}.
DML refers to the double/debiased machine learning estimator from \cite{chernozhukov2017double}, where the nuisance functions $\pi_0$ and $m_0$ are estimated using random forests (which outperformed DML combined with other nuisance function estimators, such as Lasso, in our simulation setup).
Since the job-training data contains a sizable proportion of units with propensity score estimates very close to $0$ and $1$, we follow \cite{crump2009dealing} and discard observations with the estimated propensity score outside the range $[t, 1-t]$, with the trimming threshold $t\in\{0.10, 0.05, 0.01\}$.\footnote{\cite{crump2009dealing} suggested a simple rule of thumb with a threshold of $t=0.10$, while \cite{athey2021using} used $t=0.05$. Applying the optimal trimming rule proposed by \cite{crump2009dealing} to our simulated samples yields an average optimal trimming threshold $0.073$.}

   \begin{table}[H]
\centering
\caption{Simulation results using WGAN-generated data: trimming is based on the estimated propensity score within $[t,1-t]$, $\bar n =$ the average sample size after trimming, CP = coverage probability of $95\%$ credible/confidence interval, CIL = average length of the $95\%$ credible/confidence interval.
\qquad}\label{tab:simu_1}
\vskip.15cm
{\footnotesize 
 \begin{tabular}{lccccccccccccc}\toprule
\multicolumn{1}{c}{Methods}&\multicolumn{1}{c}{}& \multicolumn{1}{c}{Bias}&\multicolumn{1}{c}{ CP}&\multicolumn{1}{c}{CIL}&\multicolumn{1}{c}{}&\multicolumn{1}{c}{Bias}& \multicolumn{1}{c}{CP}&\multicolumn{1}{c}{CIL}&\multicolumn{1}{c}{}&\multicolumn{1}{c}{Bias}& \multicolumn{1}{c}{CP}&\multicolumn{1}{c}{CIL}\\
\cline{1-1}\cline{3-5}\cline{7-9}\cline{11-13}
\multicolumn{1}{c}{}&\multicolumn{1}{c}{}&\multicolumn{3}{c}{$t = 0.10  (\bar n = 240)$ }&\multicolumn{1}{c}{}& \multicolumn{3}{c}{$t =0.05 (\bar n =363)$}&\multicolumn{1}{c}{}&\multicolumn{3}{c}{$t = 0.01  (\bar n =664)$} \\
\cline{3-13}
Bayes&& -0.040 &  0.683 & 0.147&&-0.010 & 0.841 &0.149&& -0.006 & 0.911 &    0.120  \\
PA Bayes &&  -0.008 &    0.981  &    0.260 && 0.033 &    0.949  &    0.254 &&0.047 &    0.897  &    0.308 \\
DR Bayes  &&  -0.024 &    0.983 &    0.223 && 0.014 &    0.970 &    0.221   &&0.023 &    0.952 &    0.258  \\
\cline{1-5}\cline{7-9}\cline{11-13}
Match && 0.027 & 0.933 & 0.334 && 0.048 &0.908 & 0.323&& 0.033 & 0.965 & 0.323 \\
Match BC && 0.040 &0.880 & 0.347 && 0.065 & 0.816 & 0.334&& 0.083 & 0.804 & 0.339 \\
DR TMLE&& 0.015 & 0.832 & 0.300 &&  0.039 & 0.746 & 0.282&&0.039 & 0.668 & 0.242 \\
DML && 0.045 & 0.927 & 0.524&& 0.052 & 0.870 & 0.393&&0.054 & 0.918 & 0.522  \\
\bottomrule
\end{tabular}}
\end{table}

Table \ref{tab:simu_1} presents the finite sample performance of the Bayesian and frequentist methods mentioned above. We use the full data twice in computing the prior/posterior adjustments and the posterior distribution of the conditional mean function. 
Supplemental Appendix \ref{appendix:simu} reports the performance of DR Bayes using sample splitting, which results in similar coverage but a larger credible interval length due to the halved sample size.

Concerning the Bayesian methods for estimating the ATE, Table \ref{tab:simu_1} reveals that unadjusted Bayes yields highly inaccurate coverage except for the case with trimming constant $t=0.01$. If the prior is corrected using the propensity score adjustment, the results improve significantly. Nevertheless, our DR Bayes method demonstrates two further improvements:
   First, DR Bayes leads to smaller average confidence lengths in each case while simultaneously improving the coverage probability. This can be attributed to a reduction in bias and/or more accurate uncertainty quantification via our posterior correction. Second, when the trimming threshold is small (i.e., $t=0.01$), propensity score estimators can be less accurate, leading to reduced coverage probabilities of  PA Bayes. Our double robust Bayesian method, on the other hand, is still able to provide accurate coverage probabilities. In other words, DR Bayes exhibits more stable performance than PA Bayes with respect to the trimming threshold.\footnote{In additional simulations without trimming ($t=0$), we find that all double robust methods, including DR Bayes, substantially under-cover and/or inflate the length of their confidence intervals. This is consistent with \cite{crump2009dealing}, who point out that propensity score estimates close to the boundaries tend to induce substantial bias and large variances in estimating the ATE. We also note that unadjusted Bayes severely undercovers in this case.}

Our DR Bayes also exhibits encouraging performances when compared to frequentist methods. It provides a more accurate coverage than  bias-corrected matching, DR TMLE and DML. Compared with the matching estimator that exhibits a similarly good coverage performance, DR Bayes yields considerably shorter credible intervals.

\subsection{An Empirical Illustration}
We apply the Bayesian and frequentist methods considered above to the Lalonde--Dehejia--Wahba data.  Similar to the simulation exercise,  we consider a varying choice of the threshold $t\in \{0.10,0.05,0.01\}$.\footnote{Applying the optimal trimming rule proposed by \cite{crump2009dealing} yields an optimal threshold of $0.064$.} The ATE point estimates and confidence intervals are presented in Table \ref{tab:PSID_1}. 
As a benchmark,  the experimental data  that uses both treated and control groups in NSW ($n=445$) yields an ATE estimate (treated-control mean difference) of $0.111$ with a $95\%$ confidence interval $[0.026, 0.196]$. 

\begin{table}[H]
\centering
\caption{Estimates of ATE for the Lalonde--Dehejia--Wahba data: trimming is based on the
estimated propensity score within $[t,1-t]$, $\bar n =$ sample size after trimming.  ATE = point estimate,  $95\%$ CI = $95\%$ credible/confidence interval,  CIL = $95\%$ credible/confidence interval length.
\qquad}\label{tab:PSID_1}
\vskip.15cm
{\footnotesize 
\addtolength{\tabcolsep}{-1.5pt}    
 \begin{tabular}{lccc|ccc|ccc}\toprule
\multicolumn{1}{c}{Methods }&\multicolumn{3}{c}{$t = 0.10 (\bar n=245) $}&\multicolumn{3}{c}{$t = 0.05 (\bar n=398) $}&\multicolumn{3}{c}{ $t =0.01 (\bar n=740)$}\\
\cline{2-10}
\multicolumn{1}{c}{}& \multicolumn{1}{c}{ATE}&\multicolumn{1}{c}{$95\%$ CI}&\multicolumn{1}{c}{CIL}&\multicolumn{1}{c}{ATE}& \multicolumn{1}{c}{$95\%$ CI}&\multicolumn{1}{c}{CIL}&\multicolumn{1}{c}{ATE}& \multicolumn{1}{c}{$95\%$ CI}&\multicolumn{1}{c}{CIL}\\
\cline{1-10}
Bayes&  0.213 &   [0.120, 0.301] &  0.181 & 0.214 &   [0.132, 0.292] &    0.161 &0.198 &   [0.140, 0.251] &    0.112 \\
PA Bayes&  0.158 &   [0.019, 0.288]  &    0.270 &    0.170 &   [0.045, 0.281]  &    0.236 &    0.090 &  [-0.078, 0.233]  &    0.311\\
DR Bayes &   0.178 &   [0.061, 0.293]  &    0.231&    0.184 &  [0.064, 0.294]  &    0.230&    0.121 &  [-0.031, 0.250]  &    0.281\\
\cline{1-10}
Match & 0.188 & [0.022, 0.355] & 0.333 & 0.140 & [-0.029, 0.309] & 0.338 & 0.079 & [-0.111, 0.269] & 0.380\\ 
Match BC& 0.157 & [-0.006, 0.321] & 0.327 & 0.145 & [-0.021, 0.310] & 0.331 & 0.180 & [-0.004, 0.365] & 0.369\\ 
DR TMLE& -0.023 & [-0.171, 0.125] & 0.296 & 0.073 & [-0.074, 0.220] & 0.294 & 0.071 & [-0.146, 0.289] & 0.435\\
DML& 0.172 & [0.018, 0.327] & 0.308 &0.150 & [-0.010, 0.310] & 0.320 & 0.258 & [-0.183, 0.699] & 0.882\\ 
\bottomrule
\end{tabular}}
\end{table}
 As we see from Table \ref{tab:PSID_1},  the unadjusted Bayesian method yields larger estimates.  The adjusted Bayesian methods (PA and DR Bayes), on the other hand, produce estimates comparable to the experimental estimate. PA Bayes finds that the job training program enhanced the employment by $9.0\%$ to $17.0\%$ across different trimming thresholds, and DR Bayes estimates the effect from $12.1\%$ to $18.4\%$. Among frequentist estimators, the matching estimator and its bias-corrected version produce similar estimates as PA and DR Bayes, but with wider confidence intervals. DR TMLE produces negative estimates for $t=0.10$ when all other estimates are positive. For $t=0.10$ and $0.05$, DML yields similar point estimates as PA and DR Bayes, but with less estimation precision. In the case $t=0.01$ where the overlapping condition is closer to violation, however, its point estimate and confidence interval length become considerably larger than other methods.

\section{Extensions}\label{sec:extension}
This section extends the binary variable $Y$ to encompass general cases, including continuous, counting, and multinomial outcomes. First, we examine the class of single-parameter exponential families, where the conditional density function is solely determined by the nonparmatric conditional mean function. This covers continuous outcomes and counting variables. Second, we consider the ``vector" case of exponential families for multinomial outcomes. For both classes, we derive the novel correction to the Bayesian procedure and delegate more technical discussions to Supplemental Appendices \ref{appendix:proof_exponent} and \ref{appendix:exponent}. Additionally, we outline extensions to other causal parameters of interest.

\subsection{A Single-parameter Exponential Family}\label{sec:single}
In this part, we assume that the distribution of $Y_i$ conditional on $D_i$ and $X_i$ belongs to the ``single-parameter" exponential family, where the unknown parameter is the nonparametric conditional mean function $m(d,x)=\mathbb{E}[Y_i|D_i=d,X_i=x]$. The conditional density function is given by
\begin{equation}\label{condpdf}
f_{Y|D,X}(y\mid d,x) = c(y)\exp\left[q(m(d,x))ay-A(m(d,x))\right],
\end{equation}
where $A(m)= \log\int c(y)\exp\left[q(m) ay\right]\mathrm{d}y$,  and the function $q(\cdot)$ links the mean to the ``natural parameter'' of the exponential family. We also restrict the sufficient statistic to be linear in $y$. 

The family (\ref{condpdf}) not only encompasses the Bernoulli distribution (with  $q(m)=\log(m/(1-m))$, $A(m)=-\log(1-m)$, and $c(y)=a=1$), as considered in the previous sections, but also allows for counting and continuous outcomes. For instance, when $a=1$,  the Poisson distribution  corresponds to the choices $c(y)= 1/(y!)$, $q(m)=\log m$, and $A(m)=m$, while the  exponential distribution is represented by $c(y)=1$, $q(m)=-1/m$, and $A(m)=\log m$. Furthermore, the normal distribution with  $\text{Var}(Y|D,X)=\sigma^2$ for some $\sigma>0$, is captured by $c(y)=\exp(-y^2/(2\sigma^2))/\sqrt{2\pi\sigma^2}$, $q(m)=m/\sigma$, $A(m)=m^2 / (2\sigma^2)$, and $a=1/\sigma$.
We emphasize that model \eqref{condpdf} does not impose functional form assumptions on the conditional mean function $m$.
The joint density of $(Y_i,D_i,X_i)$ can be written as
	\begin{equation}\label{x_den_exp}
	p_{\pi,m,f}(y,d,x)=\pi(x)^d (1-\pi(x))^{1-d}c(y)\exp\left[q(m(d,x))ay-A(m(d,x))\right]f(x).
	\end{equation}
We consider the same reparametrization of $(\pi, m, f)$ as in \eqref{repar} except that now the second component of $\eta$ uses the general link function $q$ satisfying $\eta^{m} = q(m)$. We now state the least favorable direction for the exponential family case, which serves as motivation for the prior adjustment.
	
\begin{lemma}\label{lemma:lfd_exp}
Let Assumption \ref{Ass:unconfounded} hold for $P_{\eta}$ for any $\eta$ under consideration. Then, for the joint distribution \eqref{x_den_exp} and the submodel  $t\mapsto \eta_t$ defined by the path $m_t(d,x) =  q^{-1}(\eta^m+t\mathfrak{m})(d,x)$ with $(\pi_t, f_t)$ as defined in \eqref{submodel},
the least favorable direction for estimating the ATE parameter in \eqref{ate} is:
\begin{equation}\label{lfd_exp}
 \xi_{\eta}(d,x)= \left(0,\frac{1}{a}\gamma_\eta(d,x), m_\eta(1,x)-m_\eta(0,x)-\tau_\eta\right),
\end{equation}
where the Riesz representer $\gamma_\eta$ is given in \eqref{riesz:def}.
\end{lemma}
For the outcome family with $a=1$, which includes Bernoulli, Poisson and exponential distributions, the least favorable direction for ATE estimation coincides with the one as given in Lemma \ref{lemma:lfd}.
 To implement the double robust Bayesian procedure for general outcomes, one can still follow Algorithm \ref{algorithm}, with the logistic function $\Psi$ replaced by the inverse link function $q^{-1}$. For the normal (homoscedastic) outcome where prior adjustment $\lambda\widehat\gamma(d,x)$ in Algorithm \ref{algorithm} becomes $\lambda\widehat\gamma(d,x)/a$, the hyperparameter $a$ can be determined together with other parameters of the Gaussian process by optimizing the marginal likelihood as in \cite{ray2019debiased}.
Proposition \ref{prop:OnExponential} in the Supplemental Material provides primitive conditions for the BvM Theorem to hold under double robust smoothness conditions.

\subsection{Multinomial Outcomes}\label{sec:multi}
We now assume that the dependent variable $Y_i$ takes values in a finite set, specifically  $Y_i\in\{0,1, \dots,J\}$. The ATE can then be written as $\tau_\eta=\sum_{j=0}^J j\, \mathbb E_\eta\left[m_{\eta,j}(1,X) - m_{\eta, j}(0,X) \right]$, where the choice probabilities are
$m_{\eta, j}(d,x) = \Psi_j\left(\eta^{m_1},\cdots,\eta^{m_J}\right)(d,x)$
with the multinomial logit specification:
\begin{align*}
	\Psi_0\left(\eta^{m_1},\cdots,\eta^{m_J}\right)=\frac{1}{1+\sum_{l=1}^J \exp(\eta^{m_l})}
	\quad \text{and}\quad
	\Psi_j\left(\eta^{m_1},\cdots,\eta^{m_J}\right)=\frac{\exp(\eta^{m_j})}{1+\sum_{l=1}^J \exp(\eta^{m_l})},
\end{align*}
for $j=1,\ldots, J.$ The multinomial logit specification implies $m_{\eta, 0}(d,x) =1-\sum_{j=1}^{J}m_{\eta, j}(d,x)$. We now provide the least favorable direction for multinomial outcomes in the presence of multinomial outcomes and discuss its consequences for prior adjustment below.
\begin{lemma}\label{lemma:lfd:multinomial}
Consider the submodel  $t\mapsto \eta_t$ defined by the path $	m_{t, j}(d,x) =  \Psi(\eta^{m_j}+t\mathfrak{m}_j)(d,x)$, $1\leq j\leq J$,  with $(\pi_t, f_t)$ as defined in \eqref{submodel}. 
Let Assumption \ref{Ass:unconfounded} hold for $P_{\eta}$ for any $\eta$ under consideration, then the least favorable direction for estimating the ATE parameter  is:
\begin{equation*}
\xi_{\eta}(d,x)= \left(0,\gamma_\eta(d,x), 2\gamma_\eta(d,x), \ldots, J\gamma_\eta(d,x), m_\eta(1,x)-m_\eta(0,x)-\tau_\eta\right),
\end{equation*}
where the Riesz representer $\gamma_\eta$ is given in \eqref{riesz:def}.
 \end{lemma} 
 We emphasize that the least favorable direction calculation is not a trivial extension of \cite{hahn1998role} or \cite{ray2020causal}. This is because there are $J$ nonparametric components involved in the conditional probability function of the multinomial outcomes given covariates, and we need to consider the perturbation of those $J$ components together. Nonetheless, we show that the efficient influence function is of the same generic form as derived in \cite{hahn1998role}. In the proof of Lemma \ref{lemma:lfd:multinomial}, we compute the derivative of the parameter mapping along the path considered herein. We derive inner products involving the least favorable direction for each nonparametric component consisting of the conditional choice probabilities. The extension to the multinomial case had not been considered in the literature to our knowledge, and it offers a result of independent interest.
  
 Lemma \ref{lemma:lfd:multinomial} motivates the following modification of our double robust Bayesian estimator based on the propensity score-dependent prior on $m_{\eta,j}$ for $1\leq j\leq J$: 
 \begin{align*}
m_{\eta, j}(d,x) = \Psi_j\left(\eta^{m_1},\cdots,\eta^{m_J}\right)(d,x)\qquad \text{and}\qquad \eta^{m_j}(d,x)=W^{m_j}(d, x) + \lambda\,j \widehat \gamma(d,x),
\end{align*}
where $W^{m_j}(d,\cdot)$ is a continuous stochastic process independent  $\lambda\sim N(0,\sigma_n^2)$ for $\sigma_n>0$. We may then follow the implementation as described in Section \ref{sec:method_outline}	using $m_\eta(d,x)=\sum_{j=0}^J j\, m_{\eta,j}(d,x)$.

\subsection{Other Causal Parameters}\label{sec:other_param}
We now extend our procedure to general linear functionals of the conditional mean function. We do so only for binary outcomes, as the modification to other types of outcomes follows as above.
Recall that the observable data consists of $i.i.d.$ observations of $Z=(Y,D,X^\top)^\top$. 
The causal parameter of interest is	$\tau_0=\mathbb{E}_0[\psi(Z,m_0)]$,
where the function $\psi$ is linear with respect to the conditional mean function $m_0$. We introduce the Riesz representer $\gamma_0(d,x)$ satisfying
	$\mathbb{E}_0[\psi(Z,m)]=\mathbb{E}_0[\gamma_{0}(D,X)m(D,X)]$.
Let $\widehat{m}$ and $\widehat{\gamma}$ be  pilot estimators for the conditional mean and Riesz representer, respectively,  computed over an auxiliary sample. Our double robust Bayesian procedure can be extended by considering the corrected posterior distribution for $\tau_\eta$ as follows:
$	\check{\tau}_\eta^{s}= \sum_{i=1}^n M_{ni}^s \psi(Z_i,m^s_\eta)-n^{-1}\sum_{i=1}^n \boldsymbol{\tau}[m_\eta^s-\widehat m](Z_i)$, $s=1,\ldots,S$,
where here
$\boldsymbol{\tau}[m](z):=\psi(z,m)+\widehat{\gamma}(d,x)(y-m(d,x))$. The derivations of the least favorable directions in the following two examples are provided in Supplemental Appendix \ref{appendix:lfd}.
\begin{example}[Average Policy Effects]
The policy effect from changing the distribution of $X$ is
$	\tau_\eta^{P}=\int m_{\eta}(x)\,\mathrm{d}(G_1(x)-G_0(x))$, where the known distribution functions $G_1$ and $G_0$ have their supports contained in the support of the marginal covariate distribution $F_{\eta}$. 
Following the general setup, $	\psi(z,m_{\eta})=\psi(m_{\eta}):=\int m_{\eta}(x)\,\mathrm{d}(G_1(x)-G_0(x))$ with its Riesz representer 
$\gamma_{\eta}^{P}(x)=(g_1(x)-g_0(x))/f_{\eta}(x)$, where $g_1$ and $g_0$ stand for the density function of $G_1$ and $G_0$, respectively. \end{example}
\begin{example}[Average Derivative]
	For a continuous scalar (treatment) variable $D$, the average derivative is given by
		$\tau_\eta^{AD}=\mathbb{E}_\eta\left[\partial_dm_\eta(D,X)\right]$,	where $\partial_dm$ denotes the partial derivatives of $m$ with respect to the continuous treatment $D$. Thus, we have $\psi(Z,m_{\eta})=\partial_dm_{\eta}(D,X)$	with its Riesz representer given by $\gamma_\eta^{AD}(D,X)=\partial_d \pi_\eta(D,X)/\pi_\eta(D,X)$,	where here $\pi_\eta$ denotes the conditional density function of $D$ given $X$.
\end{example}

\appendix
 
\section{Proofs of Main Results}\label{appendix:main:proofs}
In the Appendix, $C>0$ denotes a generic constant, whose value might change line by line. We introduce additional subscripts when there are multiple constant terms in the same display. %For two sequences $a_n,b_n$, we write $a_n\lesssim b_n$, if $a_n\leq C b_n$. 
	In the following, we denote the log-likelihood based on $Z^{(n)}=(Z_i)_{i=1}^n$ as
	\begin{align*}
	\ell_n(\eta)=\sum_{i=1}^n\log p_\eta(Z_i)=\ell_n^\pi(\eta^\pi)+\ell_n^m(\eta^m)+\ell_n^f(\eta^f),
	\end{align*}
	where each term is the logarithm of the factors involving only $\pi$ or $m$ or $f$. 
	Recall the definition of the measurable sets $\mathcal H^m_n$ of functions $\eta^m$ such that $\Pi(\eta^m\in\mathcal{H}^m_n\mid Z^{(n)})\to_{P_0} 1$. We introduce the conditional prior $\Pi_n(\cdot):=\Pi(\cdot \cap \mathcal{H}^m_n)/\Pi(\mathcal{H}^m_n)$.
	The following posterior Laplace transform of $\sqrt{n}(	\tau_\eta-\widehat{\tau}-b_{0, \eta})$ given by
	\begin{equation}\label{laplace:transform:def}
	I_n(t)= \mathbb{E}^{\Pi_n}\left[e^{t\sqrt{n}(	\tau_\eta-\widehat\tau-b_{0, \eta})}\mid Z^{(n)} \right], ~~~\forall t\in\mathbb{R}
	\end{equation}
	plays a crucial role in establishing the BvM theorem \citep{castillo2012gaussian,castillo2015bvm,ray2020causal}. 
	To abuse the notation slightly, we define a perturbation of $\eta=(\eta^\pi,\eta^m)$ along the least favorable direction, restricted to the components corresponding to $\pi$ and $m$:
	\begin{equation}\label{def:pert}
		\eta_t(\eta):= \left(\eta^{\pi},\eta^m-\frac{t}{\sqrt{n}}\xi_0^m\right).
	\end{equation}
	We explicitly write the perturbation of $\eta^m$ by
$		\eta_t^m:=\eta_t(\eta^m)=\eta^m-t\xi_0^m/\sqrt{n}$.	Recall that $\xi_0^m$ coincides with the Riesz representer $\gamma_0$ by Lemma \ref{lemma:lfd}. In addition, we introduce the following notation:
\begin{equation}\label{rho}
	\rho^m(y,d,x):=y-m(d,x).
\end{equation}
Also, recall the notation $\bar m_{\eta}(\cdot) = m_{\eta}(1)-m_{\eta}(0)$, which is used in the following. 
In the proofs below, we make use of Lemmas \ref{lemma:Taylor}--\ref{lemma:DClass} which can be found in the Supplementary Appendix \ref{appendix:auxiliary}.

	\begin{proof}[Proof of Theorem \ref{thm:BvM}]
	Since the estimated least favorable direction $\widehat{\gamma}$ is based on observations that are independent of $Z^{(n)}$, we may apply Lemma 2 of \cite{ray2020causal}. It suffices to handle the ordinary posterior distribution with $\widehat{\gamma}$ set equal to a deterministic function $\gamma_n$.	By Lemma 1 of \cite{castillo2015bvm}, it is sufficient to show that the Laplace transform  $I_n(t)$ given in \eqref{laplace:transform:def} satisfies
	\begin{align}\label{laplace:transform:conv}
		I_n(t)\to_{P_0}\exp\left(t^2\textsc v_0/2\right),
	\end{align} 
	for every $t$ in a neighborhood of $0$, where the limit at the right hand side of (\ref{laplace:transform:conv}) is the Laplace transform of a $N(0,\textsc v_0)$ distribution.
	Note that we can write $\tau_\eta=\int\bar m_\eta\, \mathrm{d}F_\eta$. Further, let $\widehat\tau=\int \bar{m}_0\,\mathrm{d}F_0+\mathbb P_n[\widetilde{\tau}_0]$, which satisfies \eqref{def:est:chi}. 
	
	The Laplace transform $I_n(t)$ can thus be written as
	{\small 	\begin{align*}
			\int\int_{\mathcal{H}^m_n}\! \! \! \! \frac{\exp\big(t\sqrt{n}(\int \bar{m}_\eta \mathrm{d}F_\eta-\bar{m}_0\mathrm{d}F_0-b_{0, \eta})-t\mathbb{G}_n[\widetilde{\tau}_0]+\ell_n^m(\eta^m)-\ell_n^m(\eta^m_t)\big)\exp\big(\ell_n^m(\eta^m_t)\big)}{\int_{\mathcal{H}^m_n}\exp\big(\ell_n^m(\eta^{m\prime)}\big) \mathrm{d}\Pi(\eta^{m\prime})}\mathrm{d}\Pi(\eta^m)\mathrm{d}\Pi(F_\eta|Z^{(n)}).
	\end{align*}}%
The expansion in Lemma \ref{lemma:likelihood} gives the following identity for all $t$ in a sufficiently small neighborhood around zero and uniformly for $\eta^m\in\mathcal H^m_n$: 
	\begin{equation*}
		\ell_n^m(\eta^m)-\ell_n^m(\eta^m_t)=t\mathbb{G}_n[\gamma_0\rho^{m_0}]+t\mathbb{G}_n[\gamma_0(m_0-m_{\eta}) ]+t\sqrt{n}\int (\bar{m}_0-\bar{m}_{\eta})\mathrm{d}F_0 +\frac{t^2}{2}P_0(B_0^m\xi^m_0)^2+o_{P_0}(1),
	\end{equation*}
	where we make use of the notation $\rho^m(y,d,x)=y-m(d,x)$ and the score operator $B_0^m=B_{\eta_0}^m$ defined through (\ref{score_ate}).
	
	Next, we plug this into the exponential part in the definition of $I_n(t)$, which then gives
	\begin{footnotesize}
		\begin{align*}
			\int \int_{\mathcal{H}^m_n}\! \! \! \! &\frac{\exp\left(t\sqrt{n}\left(\int (\bar{m}_\eta \mathrm{d}F_\eta-\bar{m}_0\mathrm{d}F_0) +\int (\bar{m}_0-\bar{m}_{\eta})\mathrm{d}F_0-b_{0, \eta}\right)+t\mathbb{G}_n[\gamma_0(m_0-m_{\eta}) ]+\ell_n^m(\eta^m_t)\right)}{\int_{\mathcal{H}_n} \exp\left(\ell^m_n(\eta^{m\prime})\right)\mathrm{d}\Pi(\eta^{m\prime})}\mathrm{d}\Pi(\eta^m)\mathrm{d}\Pi(F_\eta|Z^{(n)})\\
			&\qquad\times \exp\left(-t\mathbb{G}_n[\widetilde{\tau}_0]+t\mathbb{G}_n[\gamma_0\rho^{m_0}]+\frac{t^2}{2}P_0(B_0^m\xi^m_0)^2+o_{P_0}(1) \right) \\
			&=\int \int_{\mathcal{H}^m_n} \frac{\exp\left(t\sqrt{n}\left(\int \bar{m}_\eta \mathrm{d}(F_\eta-F_0)-b_{0, \eta}\right)+t\mathbb{G}_n[\gamma_0(m_0-m_{\eta}) ]\right)\exp\left(\ell_n^m(\eta^m_t)\right)}{\int_{\mathcal{H}_n} \exp\left(\ell^m_n(\eta^{m\prime})\right)\mathrm{d}\Pi(\eta^{m\prime})}\mathrm{d}\Pi(\eta^m)\mathrm{d}\Pi(F_\eta|Z^{(n)})\\
			&\qquad\times \exp\left(-t\mathbb{G}_n[\widetilde{\tau}_0]+t\mathbb{G}_n[\gamma_0\rho^{m_0}]+\frac{t^2}{2}P_0(B_0^m\xi^m_0)^2+o_{P_0}(1) \right). 
		\end{align*}
	\end{footnotesize}	
%	Because all variables have been integrated out in the integral in the denominator, it is a constant relative to either $m_{\eta}$ or $F_\eta$. 
By Fubini's theorem, the numerator of the previous expression coincides
with
	\begin{equation*}
		\int_{\mathcal{H}^m_n}\! \! \exp\Big(t\mathbb{G}_n[\gamma_0(m_0-m_{\eta})]-t\sqrt{n}b_{0, \eta}+\ell_n^m(\eta^m_t)\Big)\int \exp\left(t\sqrt{n}\int \bar{m}_{\eta}\mathrm{d}(F_\eta-F_0) \right)\mathrm{d}\Pi(F_\eta|Z^{(n)})\mathrm{d}\Pi(\eta^m).
	\end{equation*}
	By the assumed $P_0$-Glivenko-Cantelli property for $\mathcal{G}_n=\{\bar{m}_{\eta}:\eta\in\mathcal{H}_n\}$ in Assumption \ref{Assump:Donsker}, that is, $\sup _{\bar{m}_{\eta} \in \mathcal{G}_n}\left|(\mathbb{P}_n-P_0) \bar{m}_{\eta}\right| =o_{P_0}(1)$,  and the boundedness of $\bar{m}_{\eta}$, we apply Lemma \ref{lemma:DP} which establishes the convergence of the Laplace transform for the Dirichlet posterior process. Specifically, it implies the convergence in probability of $\int e^{t\sqrt{n}\int \bar{m}_{\eta}\mathrm{d}(F_\eta-F_0) }\mathrm{d}\Pi(F_\eta|Z^{(n)})$ to $e^{t\sqrt{n}\int \bar{m}_{\eta}\mathrm{d}(\mathbb{F}_n-F_0)+\frac{t^2}{2}\Vert \bar{m}_0-F_0\bar{m}_0\Vert^2_{2, F_0}}$ uniformly over $\{\bar{m}_{\eta}:\eta\in\mathcal{H}_n\}$, using the notation $F_0\bar{m}_0:= \int \bar{m}_0(x)\mathrm{d}F_0(x)$ and $\mathbb{F}_n\bar{m}_0:= 1/n\sum_{i=1}^n\bar{m}_0(X_i)$. Further, we may apply the convergence of $m_{\eta}$ imposed in Assumption \ref{Assump:Rate}, so that the above display becomes
	\begin{footnotesize}
		\begin{align*}
			e^{o_{P_0(1)}}\int_{\mathcal{H}^m_n}\! \! &\exp\Big(t\mathbb{G}_n[\gamma_0(m_0-m_{\eta})]-t\sqrt{n}b_{0, \eta}+\ell_n^m(\eta^m_t)\Big) \exp\left(t\sqrt{n}\int \bar{m}_{\eta}\mathrm{d}(\mathbb{F}_n-F_0)+\frac{t^2}{2}\Vert \bar{m}_0-F_0\bar{m}_0\Vert^2_{2, F_0} \right)\mathrm{d}\Pi(\eta^m)\\
			&=e^{o_{P_0(1)}}\exp\left(t\sqrt{n}\int \bar{m}_{0}\mathrm{d}(\mathbb{F}_n-F_0)+\frac{t^2}{2}\Vert \bar{m}_0-F_0\bar{m}_0\Vert^2_{2, F_0} \right)\\
			&\qquad\times\int_{\mathcal{H}^m_n}\exp\Big(t\mathbb{G}_n[\gamma_0(m_0-m_{\eta})-(\bar{m}_0-\bar{m}_{\eta})]-t\sqrt{n}b_{0, \eta}+\ell_n^m(\eta^m_t)\Big)\mathrm{d}\Pi(\eta^m).
		\end{align*}
			\end{footnotesize}%
We now analyze the empirical process term in the integral and examine its relationship with the bias term $b_{0, \eta}$. To do so, we calculate
\begin{small}
		\begin{align*}
			&\mathbb{G}_n[\gamma_0(m_0-m_{\eta})-(\bar{m}_0-\bar{m}_{\eta})] \\
			&=\mathbb{G}_n \left[\frac{(d-\pi_0(x))(m_0(1,x)-m_\eta(1,x))}{\pi_0(x)}-\frac{(\pi_0(x)-d)(m_0(0,x)-m_\eta(0,x))}{1-\pi_0(x)}\right]\\
			&=\sqrt n \mathbb{P}_n \left[\frac{(d-\pi_0(x))(m_0(1,x)-m_\eta(1,x))}{\pi_0(x)}-\frac{(\pi_0(x)-d)(m_0(0,x)-m_\eta(0,x))}{1-\pi_0(x)}\right]=\sqrt{n}b_{0, \eta}, \label{CenteredTerm}
		\end{align*}
	\end{small}%
	where the last line follows from the definition of the bias term, that is,  $b_{0,\eta}=	\mathbb{P}_n[\gamma_0(m_0-m_{\eta})-(\bar{m}_0-\bar{m}_{\eta})] $.

	Further, observe that
	$\mathbb{G}_n[\gamma_0\rho^{m_0}]-\mathbb{G}_n[\tilde{\tau}_0]=-\mathbb{G}_n[\bar m_0]$ and 
	$\mathbb{G}_n[\bar m_0]=\sqrt{n}\int \bar{m}_0d(\mathbb F_n-F_0)$ by the definition of the efficient influence function given in \eqref{eif_ate}. 
	As we insert these in the previous expression for $I_n(t)$,
	we obtain for all $t$ in a sufficiently small neighborhood around zero and uniformly for $\eta\in\mathcal H_n$:
	\begin{small}
		\begin{align*}
			I_n(t)&=\exp\Bigg( \underbrace{-t\mathbb{G}_n[\bar m_0]+ t\sqrt{n}\int \bar{m}_0\mathrm{d}(\mathbb F_n-F_0)}_{=0}+\frac{t^2}{2}\Big(\underbrace{P_0(B_0^m\xi^m_0)^2+\overbrace{\Vert \bar{m}_0-F_0\bar{m}_0\Vert^2_{2, F_0}}^{=P_0(B_0^f\xi^f_0)^2}}_{=P_0 (B_0\xi_0)^2}\Big)+o_{P_0}(1)\Bigg)\\
			&\qquad\times\frac{\int_{\mathcal{H}^m_n}\exp\big(\ell_n^m(\eta^m_t)\big)\mathrm{d}\Pi(\eta^m)}{\int_{\mathcal{H}^m_n}\exp\big(\ell_n^m(\eta^{m\prime})\big) \mathrm{d}\Pi(\eta^{m\prime})}=\exp\left(\frac{t^2}{2}P_0 (B_0\xi_0)^2\right)+o_{P_0}(1),
		\end{align*} 
	\end{small}%
	where the last equality follows from the prior invariance condition established in Lemma \ref{lemma:PriorInv}. This implies \eqref{laplace:transform:conv} using that $P_0 (B_0\xi_0)^2=P_0\widetilde{\tau}_0^2=\textsc v_0$ by the Lemma \ref{lemma:lfd}.
\end{proof}

\begin{proof}[Proof of Theorem \ref{thm:Debias}]
		It is sufficient to show that
		$\sup_{\eta\in\mathcal{H}_n}\left|b_{0,\eta}-\widehat{b}_{\eta}\right|=o_{P_0}(n^{-1/2})$,
	where $b_{0,\eta}=	\mathbb{P}_n[\gamma_0(m_0-m_{\eta})+\bar{m}_{\eta} -\bar{m}_0]$ and $\widehat{b}_\eta=	\mathbb{P}_n[\widehat \gamma(\widehat m - m_{\eta}) + \bar{m}_{\eta}-\widehat{\bar{m}}]$.
	We make use of the decomposition
	\begin{equation}\label{dec:proof:feasible:drb}
		b_{0,\eta}-\widehat{b}_{\eta}=\mathbb{P}_n[\gamma_0(m_0-m_{\eta})-\widehat{\gamma}\rho^{m_{\eta}}]-\mathbb{P}_n[\bar{m}_0-\widehat{\bar{m}} - \widehat{\gamma}\rho^{\widehat m}].
	\end{equation}
	Consider the first summand on the right hand side of the previous equation. We have uniformly for $\eta\in\mathcal{H}_n$:
	\begin{align*}	
%		\mathbb{P}_n[\gamma_0(m_0-m_{\eta})+\bar{m}_{\eta}]-\mathbb{P}_n[\widehat{\gamma}\rho^{m_{\eta}} + \bar{m}_{\eta} ]=&
		\mathbb{P}_n[\gamma_0(m_0-m_{\eta})-\widehat{\gamma}\rho^{m_{\eta}} ]
		=&-\mathbb{P}_n[\widehat{\gamma} \rho^{m_0} ]+\mathbb{P}_n[(\gamma_0-\widehat{\gamma})(m_0-m_{\eta})]\\
		=&-\mathbb{P}_n[\widehat{\gamma}\rho^{m_0} ]+o_{P_0}(n^{-1/2}),
	\end{align*}
	where the last equation follows from the following derivation:
\begin{align*}
		\sqrt n\sup_{\eta\in\mathcal{H}_n}&\left|\mathbb{P}_n[(\gamma_0-\widehat{\gamma})(m_0-m_{\eta})]\right|
		\leq 		\sup_{\eta\in\mathcal{H}_n}\left|\mathbb{G}_n[(\gamma_0-\widehat{\gamma})(m_0-m_{\eta})]\right|\\
		&\qquad\qquad\qquad\qquad\qquad\qquad+\sqrt n\sup_{\eta\in\mathcal{H}_n}\left|P_0[(\gamma_0-\widehat{\gamma})(m_0-m_{\eta})]\right|\\
		&\qquad\qquad\leq o_{P_0}(1)+	O_{P_0}(1)\times \sqrt{n}\Vert \pi_0 - \widehat{\pi}\Vert_{2, F_0}\sup_{\eta\in\mathcal{H}_n}\Vert m_\eta-m_0\Vert_{2, F_0}=o_{P_0}(1),
	\end{align*} 
using the Cauchy-Schwarz inequality, Assumption \ref{Assump:Rate}, and Assumption \ref{Assump:Donsker}.
	Consider the second summand on the right hand side of \eqref{dec:proof:feasible:drb}. 
From Lemma \ref{lemma:FreqDR} we infer
	\begin{align*}
		\mathbb{P}_n[\widehat{\bar{m}} + \widehat{\gamma}\rho^{\widehat m} - \bar{m}_0]=\mathbb{P}_n[\gamma_0\rho^{m_0}]+o_{P_0}(n^{-1/2}).
	\end{align*}
	Consequently, decomposition \eqref{dec:proof:feasible:drb} together with the asymptotic expansion of each summand yields
	\begin{align*}
				\sup_{\eta\in\mathcal{H}_n}\left|b_{0,\eta}-\widehat{b}_{\eta}\right|\leq \left|\mathbb{P}_n[(\gamma_0-\widehat \gamma)\rho^{m_0}]\right|+o_{P_0}(n^{-1/2})
=o_{P_0}(n^{-1/2}),
	\end{align*}
	where the last equation is due to the equation \eqref{lemma:negligible:bound1}.
\end{proof}

\begin{proof}[Proof of Corollary \ref{cor:CI}]
The weak convergence of the Bayesian point estimator directly follows from our asymptotic characterization of the posterior and the
argmax theorem; see the proof of Theorem 10.8 in \cite{van1998asymptotic}. The corrected Bayesian credible set $\mathcal{C}_n(\alpha)$ satisfies $\Pi(\check{\tau}_\eta\in \mathcal{C}_n(\alpha)\mid Z^{(n)})=1-\alpha$ for any $\alpha\in(0,1)$. In particular,  we have
\begin{align*}
\Pi\left(\sqrt{n/\textsc v_0}(\tau_\eta-\widehat{\tau}-\widehat{b}_{\eta})\in \sqrt{n/\textsc v_0}(\mathcal{C}_n(\alpha)-\widehat{\tau})\mid Z^{(n)}\right)=1-\alpha.
\end{align*}
Now the definition of the estimator $\widehat{\tau}$ given in \eqref{def:est:chi} yields $\sqrt n\widehat{\tau}=\sqrt n\big(\tau_0+\mathbb P_n\widetilde{\tau}_0\big)+o_{P_0}(1)$. For any set $A$, we write $\mathbb{N}(A):=\int_{A}e^{-u^2/2}/\sqrt{2\pi}\,\mathrm{d}u$.
Theorem \ref{thm:BvM} implies
\begin{align*}
\mathbb{N}\left(\sqrt{n/\textsc v_0}(\mathcal{C}_n(\alpha)-\tau_0-\mathbb{P}_n\widetilde{\tau}_0)\right)\to_{P_0} 1-\alpha.
\end{align*}
We may thus write $\mathcal{C}_n(\alpha)=\sqrt{\textsc v_0/n} \, \mathcal A_n(\alpha)+\tau_0+\mathbb{P}_n\widetilde{\tau}_0+o_{P_0}(1)$ for some set $\mathcal A_n(\alpha)$ satisfying $\mathbb{N}(\mathcal A_n(\alpha))\to_{P_0}  1-\alpha$. 
	Therefore, the frequentist coverage of the Bayesian credible set is 
	\begin{align*}
	P_0\left(\tau_0\in \mathcal{C}_n(\alpha)\right)=P_0\left(\tau_0\in  \sqrt{\textsc v_0/n} \, \mathcal A_n(\alpha)+\tau_0+\mathbb{P}_n\widetilde{\tau}_0\right)
	=P_0\left(-\frac{\mathbb{G}_n\widetilde{\tau}_0}{\sqrt \textsc v_0}\in \mathcal A_n(\alpha)\right)
	\to  1-\alpha,
	\end{align*}
noting that $\mathbb{G}_n\widetilde{\tau}_0$ is asymptotically normal with mean zero and variance $\textsc v_0$ under $P_0$.
\end{proof}

\begin{proof}[Proof of Proposition \ref{prop:exponential}]
Note that $\widehat\gamma$ is based on an auxiliary sample and hence we can treat $\widehat\gamma$ below as a deterministic function denoted by $\gamma_n$ satisfying the rate restrictions $\|\gamma_n\|_\infty=O(1)$ and $\|\gamma_n-\gamma_0\Vert_\infty= O\big((n/\log n)^{-s_\pi/(2s_\pi+p)}\big)$. Regarding the conditional mean functions, we consider the set $\mathcal{H}_{n,d}^m:= \left\{w_d+\lambda\gamma_n:(w_d,\lambda)\in \mathcal{W}_{n,d}\right\}$, where for $d\in\{1,0\}$ and some constant $C>0$:
{\small \begin{align}\label{WnSet}
		\mathcal{W}_{n,d}:= \left\{(w_d,\lambda):w_d\in \mathcal{B}^m_n,|\lambda|\leq C\sigma_n\sqrt{n}\varepsilon_n \right\}\cap \left\{(w_d,\lambda):\Vert \Psi(w_d(\cdot)+\lambda\gamma_n)-m_0(d,\cdot)\Vert_{2, F_0}\leq \varepsilon_n \right\},
\end{align}}%
where $\mathcal{B}_n^m$ in the first restriction for the Gaussian process $W(d,\cdot)$ is a regularity class of functions defined in the equation (\ref{SieveSet}) in the online Supplementary Appendix \ref{appendix:auxiliary}. We write $\mathcal{H}^m_n=\mathcal{H}^m_{n,1}\times \mathcal{H}^m_{n,0}$.

We first verify Assumption \ref{Assump:Rate} with $\varepsilon_n=n^{-s_m/(2s_m+p)}(\log n)^{s_m(p+1)/(2s_m+p)}$. The posterior contraction rate is shown in our Lemma \ref{lemma:RatesGP}. Referring to the product rate condition, that is, $\sqrt{n}\varepsilon_nr_n=o(1)$ for $r_n\sim  (n/\log n)^{-s_\pi/(2s_\pi+p)}$. This is satisfied if $2s_m/(2s_m+p)+2s_\pi/(2s_\pi+p)>1$, which can be rewritten as $\sqrt{s_\pi \, s_m}>p/2$.

We now verify Assumption \ref{Assump:Donsker}. It is sufficient to deal with the resulting empirical process $\mathbb{G}_n$.
Note that the Cauchy-Schwartz inequality implies
\begin{align*}
|P_0(m_\eta-m_0)|&=|\E_0[D(m_\eta(1,X)-m_0(1,X))]+\E_0[(1-D)(m_\eta(0,X)-m_0(0,X))]|\\
&\leq \sqrt{\E_0[(m_\eta(1,X)-m_0(1,X))^2]}+\sqrt{\E_0[(m_\eta(0,X)-m_0(0,X))^2]}\\
&= \Vert m_\eta(1,\cdot)-m_0(1,\cdot)\Vert_{2, F_0}+\Vert m_\eta(0,\cdot)-m_0(0,\cdot)\Vert_{2, F_0}.
\end{align*}
Consequently, from Lemma \ref{lemma:product} we infer
\begin{align*}
	\E_0\sup_{\eta\in\mathcal{H}^m_n}&\left|\mathbb{G}_n[(\gamma_n-\gamma_0)(m_\eta-m_0)] \right|\leq 4\Vert \gamma_n-\gamma_0\Vert_{\infty}\E_0\sup_{\eta\in\mathcal{H}^m_n}\left|\mathbb{G}_n[m_\eta-m_0] \right|\\
	&\qquad+\Vert \gamma_n-\gamma_0\Vert_{2, F_0}\sup_{\eta\in\mathcal{H}_n}\Big(\Vert m_\eta(1,\cdot)-m_0(1,\cdot)\Vert_{2, F_0}+\Vert m_\eta(0,\cdot)-m_0(0,\cdot)\Vert_{2, F_0}\Big)\\
	&\lesssim  (n/\log n)^{-s_\pi/(2s_\pi+p)}\E_0\sup_{\eta\in\mathcal{H}^m_n}\left|\mathbb{G}_n[m_\eta-m_0] \right|+(n/\log n)^{-s_\pi/(2s_\pi+p)}\varepsilon_n\\
	&=  (n/\log n)^{-s_\pi/(2s_\pi+p)}\E_0\sup_{\eta\in\mathcal{H}^m_n}\left|\mathbb{G}_n[m_\eta-m_0] \right|+o(1).
\end{align*}
Note that if $s_m>p/2$,  from   Lemma \ref{lemma:DClass} we infer $\E_0\sup_{\eta\in\mathcal{H}^m_n}\mathbb{G}_n\left[m_\eta-m_0\right]=o(1)$.
Thus it remains to consider the case $s_m\leq p/2$. By the entropy bound presented in the proof of Lemma \ref{lemma:RatesGP}, we have
$\log N(\varepsilon_n,\mathcal{H}^m_n,L^2(F_0))\lesssim \varepsilon_n^{-2\upsilon}$,
with $\upsilon = p/(2s_m)$ modulo some $\log n$ term on the right hand of the bound. Because $\Psi(\cdot)$ is monotone and Lipschitz, a set of $\varepsilon$-covers in $L^2(F_0)$ for $\eta^m\in\mathcal{H}_n^m$ translates into a set of $\varepsilon$-covers for $m_{\eta}$. In this case, the empirical process bound of \citep[p.2644]{han2021set} yields
\begin{equation*}
	\E_0\sup_{\eta\in\mathcal{H}^m_n}\left|  \mathbb{G}_n[m_\eta-m_0] \right|\lesssim L_n n^{(\upsilon-1)/(2\upsilon)}=O(L_nn^{1/2-s_m/p}),
\end{equation*}
where $L_n$ represents a term that diverges at certain polynomial order of $\log n$. Consequently, we obtain
\begin{align*}
(n/\log n)^{-s_\pi/(2s_\pi+p)}\E_0\sup_{\eta\in\mathcal{H}^m_n}\left|\mathbb{G}_n[m_\eta-m_0] \right|=o(1),
\end{align*}
which is satisfied under the smoothness restriction $-s_\pi/(2s_\pi+p) +1/2 -s_m/p<0$
or equivalently $4s_\pi s_m +2ps_m>p^2$.  This condition  automatically holds given $\sqrt{s_\pi \, s_m}>p/2$.

Finally, it remains to verify Assumption \ref{Assump:Prior}. By the univariate Gaussian tail bound, the prior mass of the set $\Lambda_n:= \{\lambda:|\lambda|>u_n\sigma^2_n\sqrt{n} \}$ satisfies $\Pi(\lambda\in\Lambda_n)\leq  2\exp(-u_n^2\sigma^2_n n/2)$. Also, the Kullback-Leibler neighborhood around $\eta_0^m$ has prior probability at least $e^{-n\varepsilon_n^2}$. We may thus apply  Lemma 4 of \cite{ray2020causal}, which yields $\Pi(\lambda\in\Lambda_n\mid Z^{(n)})\to_{P_0}0$, as imposed in Assumption \ref{Assump:Prior}(i).
 
 Regarding Assumption \ref{Assump:Prior}(ii), we need to show the posterior probability of the shifted version of $\mathcal{H}_n^m$ is tending to one. Considering $\mathcal{H}_n^m$ itself, the first set in the intersection of \eqref{WnSet} that defines $\mathcal{W}_{n,d}$ is seen to have posterior probability tending to one by the result in (II) of Lemma \ref{lemma:RatesGP}, combined with the univariate Gaussian tail probability bound 
 \begin{equation*}
 	\Pi(|\lambda|\geq C\sigma_n\sqrt{n}\varepsilon_n)\leq 2\exp(-Cn\varepsilon_n^2/2).
 \end{equation*}
 The second set in the intersection of \eqref{WnSet} has posterior probability tending to one by Lemma 17 of \cite{ray2020causal}. Hence, $\mathcal{H}_n^m$ has posterior probability going to one. Next, we consider $\mathcal{H}^m_n+t\gamma_n/\sqrt{n}$, for any $t\in\mathbb{R}$. To slightly abuse the notation, we write $\eta^m_d=w_d+\lambda\gamma_n$ for $d\in\{0,1\}$ in the sequel. By the Lipschitz continuity of the Logistic link function, we have 
$	\Vert \Psi(\eta_d^m)-\Psi(\eta_d^m+t\gamma_n/\sqrt{n})\Vert_{2, F_0}\leq |t|\Vert\gamma_n\Vert_{\infty}/\sqrt{n}$
 for $d\in\{0,1\}$. Therefore, we get $\mathcal{H}_{n,d}^m+t\gamma_n/\sqrt{n}\supset \Xi_{n,d,t}$ with probability $P_0$ approaching one, where 
\begin{align*}
\Xi_{n,d,t} :=\left\{\eta_d^m:\Vert\eta_d^m\Vert_{\mathbb{H}}\leq C\sqrt{n}\tilde{\varepsilon}_n, \Vert \Psi(\eta_d^m)-m_0(d,\cdot)\Vert_{2, F_0}\leq \tilde{\varepsilon}_n\right\}
\end{align*}
with $\tilde{\varepsilon}_n:= C\varepsilon_n-|t|\Vert\gamma_n\Vert_\infty/\sqrt{n} $
and $\Vert\cdot\Vert_{\mathbb{H}}$ denotes the norm of the Reproducing Kernel Hilbert Space associated with the squared exponential process; see online supplementary Appendix \ref{appendix:auxiliary} for a formal definition. Because $\sqrt{n}\varepsilon_n\to \infty$ and $\Vert \gamma_n\Vert_{\infty}=O(1)$, the posterior probability of $\Xi_{n,d,t}$ tends to one following similar arguments concerning the set $\mathcal{H}^m_n$, after replacing $\varepsilon_n$ with a multiple of itself for $d\in\{0,1\}$. Hence, the posterior probability of $\mathcal{H}_n^m+t\gamma_n/\sqrt{n}$ is seen to tend to one, which completes the proof.
\end{proof}

\section{Key Lemmas}\label{appendix:key:lemmas}
We now present key lemmas used in the derivation of our BvM Theorem. 
We introduce $\eta_u:= (\eta^\pi,\eta_u^{m})$ where
\begin{equation}\label{path}
	\eta_u^m=\eta^m-tu\xi_0^m/\sqrt{n}, ~~\text{for}~~u\in[0,1].
\end{equation}
This defines a path from  $\eta_{u=0}=(\eta^\pi,\eta^{m})$ to $\eta_{u=1}=(\eta^\pi,\eta_t^m)$. We also write $g(u):= \log p_{\eta^m_u}$, for $u\in[0,1]$, so that $\log p_{\eta^m}-\log p_{\eta_t^m}=g(0)-g(1)$, cf. the proof of Theorem 1 in \cite{ray2020causal}.
\begin{lemma}\label{lemma:likelihood}
	Let Assumptions \ref{Ass:unconfounded} and \ref{Assump:Rate} hold. Then, we have uniformly for $\eta\in\mathcal H_n$:
	\begin{equation*}
		\ell_n^m(\eta^m)-\ell_n^m(\eta^m_t)=t\mathbb{G}_n[\gamma_0\rho^{m_0}]+t\mathbb{G}_n[\gamma_0(m_0-m_{\eta}) ]+t\sqrt{n}\int (\bar{m}_0-\bar{m}_{\eta})\mathrm{d}F_0 +\frac{t^2}{2}P_0(B_0^m\xi^m_0)^2+o_{P_0}(1).
	\end{equation*}
\end{lemma}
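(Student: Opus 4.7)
The plan is to reduce the log-likelihood increment to a one-dimensional calculus problem along the path $\eta_u^m = \eta^m - (tu/\sqrt n)\xi_0^m$ with $\xi_0^m = \gamma_0$, and then separate out a LAN-type quadratic term from empirical-process terms. Writing $g(u)(o) = y\log\Psi(\eta_u^m(d,x)) + (1-y)\log(1-\Psi(\eta_u^m(d,x)))$ and using the sigmoid identity $\Psi' = \Psi(1-\Psi)$, one obtains
\begin{equation*}
 g'(u)(o) = -\frac{t}{\sqrt n}\,\gamma_0(d,x)\bigl(y - m_{\eta_u}(d,x)\bigr),
\end{equation*}
so by the fundamental theorem of calculus,
\begin{equation*}
 \ell_n^m(\eta^m) - \ell_n^m(\eta_t^m) = t\sqrt n\,\mathbb{P}_n\!\left[\int_0^1 \gamma_0(y - m_{\eta_u})\,du\right].
\end{equation*}

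Next I would Taylor-expand $m_{\eta_u}$ around $m_\eta$. Using $\Psi'(\eta^m) = m_\eta(1-m_\eta)$ and boundedness of $\Psi''$ and $\gamma_0$ (the latter from Assumption 2.1), one gets, uniformly in $\eta\in\mathcal H_n$ and $u\in[0,1]$,
\begin{equation*}
 m_\eta - m_{\eta_u} = \frac{tu}{\sqrt n}\,m_\eta(1-m_\eta)\gamma_0 + O\!\left(\tfrac{1}{n}\right).
\end{equation*}
Substituting and integrating in $u$ yields the clean decomposition
\begin{equation*}
 \ell_n^m(\eta^m) - \ell_n^m(\eta_t^m) = t\sqrt n\,\mathbb{P}_n[\gamma_0(y - m_\eta)] + \frac{t^2}{2}\mathbb{P}_n[\gamma_0^2 m_\eta(1-m_\eta)] + R_n,
\end{equation*}
with $R_n = O_{P_0}(n^{-1/2})$ uniformly over $\mathcal H_n$ (using $\|\gamma_0\|_\infty < \infty$ and $|m_\eta|\le 1$).

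For the linear term, decompose $y - m_\eta = (y-m_0) + (m_0 - m_\eta)$ and recenter each piece:
\begin{equation*}
 t\sqrt n\,\mathbb{P}_n[\gamma_0(y-m_\eta)] = t\mathbb{G}_n[\gamma_0 \rho^{m_0}] + t\mathbb{G}_n[\gamma_0(m_0-m_\eta)] + t\sqrt n\,P_0[\gamma_0(m_0-m_\eta)].
\end{equation*}
The key algebraic identity is that, since $\gamma_0 = D/\pi_0 - (1-D)/(1-\pi_0)$, iterated expectations give
\begin{equation*}
 P_0[\gamma_0(m_0-m_\eta)] = \int \bigl(\bar m_0 - \bar m_\eta\bigr)\,dF_0,
\end{equation*}
which matches the third term of the claim exactly. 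For the quadratic term, since $Y\mid D,X$ is Bernoulli$(m_0)$,
\begin{equation*}
 P_0(B_0^m\xi_0^m)^2 = \mathbb{E}_0[\gamma_0^2(Y-m_0)^2] = P_0[\gamma_0^2 m_0(1-m_0)],
\end{equation*}
so $\mathbb{P}_n[\gamma_0^2 m_\eta(1-m_\eta)] = P_0(B_0^m\xi_0^m)^2 + o_{P_0}(1)$ uniformly on $\mathcal H_n$ follows from the law of large numbers applied to the bounded function $\gamma_0^2 m_0(1-m_0)$, together with the uniform $L^2(F_0)$-convergence $\sup_{\eta\in\mathcal H_n}\|m_\eta - m_0\|_{L^2(F_0)} \to 0$ from Assumption 2.1 (using that $u\mapsto u(1-u)$ is Lipschitz on $[0,1]$ and $\|\gamma_0\|_\infty$ is bounded). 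Assembling these pieces gives the stated expansion.

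\paragraph{Main obstacle.} The only nontrivial point is verifying that the higher-order Taylor remainder is truly $o_{P_0}(1)$ uniformly in $\eta\in\mathcal H_n$, not just pointwise; this is handled by the global bounds $\|\gamma_0\|_\infty \le C$ and $\|\Psi''\|_\infty \le 1/6$, which make the pointwise remainder $O(1/n)$ deterministically, so that after multiplication by $t\sqrt n$ the contribution is $O(n^{-1/2})$ uniformly. The slightly more delicate step is the uniform convergence of the quadratic variance term $\mathbb{P}_n[\gamma_0^2 m_\eta(1-m_\eta)]$, but this reduces to a standard Glivenko--Cantelli argument on a uniformly bounded class combined with the $L^2$-contraction of $m_\eta$ around $m_0$ on $\mathcal H_n$.
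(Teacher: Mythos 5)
Your proposal is correct and follows essentially the same route as the paper's proof: a Taylor expansion of the log-likelihood along the perturbation direction $\gamma_0$, the identity $P_0[\gamma_0(m_0-m_\eta)]=\int(\bar m_0-\bar m_\eta)\,dF_0$ for the population drift, and $P_0(B_0^m\xi_0^m)^2=P_0[\gamma_0^2 m_0(1-m_0)]$ via the Bernoulli conditional variance. The only (cosmetic) difference is that you Taylor-expand $\mathbb{P}_n$ of the increment directly and then center the linear term, whereas the paper first splits into $\mathbb{G}_n$ and $nP_0$ pieces and expands each; the resulting terms and the handling of the quadratic/remainder contributions (boundedness of $\gamma_0$ from overlap plus the $L^2(F_0)$-contraction of $m_\eta$ on $\mathcal H_n$) coincide.
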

\begin{proof}
We start with the following decomposition:
	\begin{align*}
		\ell_n^m(\eta^m)-\ell_n^m(\eta^m_t)=t\mathbb{G}_n[\gamma_0\rho^{m_0}]+\underbrace{\sqrt{n}\mathbb{G}_n[\log p_{\eta^m}-\log p_{\eta^m_t}-\frac{t}{\sqrt{n}}\gamma_0\rho^{m_0}]}_{\text{Stochastic Equicontinuity}}		+\underbrace{n P_0[\log p_{\eta^m}-\log p_{\eta^m_t}]}_{\text{Taylor Expansion}}.
	\end{align*}
	From the calculation in the proof of Lemma \ref{lemma:Taylor}, we have $g'(0)=-\frac{t}{\sqrt{n}}\gamma_0\rho^{m_0}+\frac{t}{\sqrt{n}}\gamma_0(m_\eta-m_0)$. Then, we infer for the stochastic equicontinuity term that
	\begin{align*}
	\sqrt{n}\mathbb{G}_n[\log p_{\eta^m}-\log p_{\eta^m_t}-\frac{t}{\sqrt{n}}\gamma_0\rho^{m_0}]
	+t\mathbb{G}_n[\gamma_0(m_\eta-m_0)]=o_{P_0}(1),
	\end{align*}
uniformly in $\eta^m\in\mathcal{H}^m_n$. We can thus write uniformly in $\eta^m\in\mathcal{H}^m_n$:
	\begin{align*}
	\ell_n^m(\eta^m)-\ell_n^m(\eta^m_t)&=t\mathbb{G}_n[\gamma_0\rho^{m_0}]+t\mathbb{G}_n[\gamma_0(m_0-m_\eta)]+n P_0[\log p_{\eta^m}-\log p_{\eta^m_t}]+o_{P_0}(1).
	\end{align*}
The rest of the proof involves a standard Taylor expansion for the third term on the right hand side of the above equation. By the equation (\ref{TaylorExp}) in the proof of Lemma \ref{lemma:Taylor}, we get
\begin{align*}
	-nP_0g'(0)=t\sqrt{n}P_0[\gamma_0 \rho^{m_0}]+t\sqrt{n}P_0[\gamma_0(m_0-m_\eta)]=t\sqrt{n}\int (\bar{m}_0-\bar{m}_{\eta})\mathrm{d}F_0,
\end{align*}
by the fact that $P_0[\gamma_0 \rho^{m_0}]=0$ and the definition of the Riesz representer $\gamma_0$ in \eqref{riesz:def}. Regarding the second-order term in the Taylor expansion in the equation (\ref{2ndTaylor}) of the proof of Lemma \ref{lemma:Taylor}, we get
\begin{equation*}
	g^{(2)}(0)=-\frac{t^2}{n}\gamma_0^2 m_0(1-m_0)-\frac{t^2}{n}\gamma_0^2(m_\eta(1-m_\eta)-m_0(1-m_0)).
\end{equation*}
Considering the score operator $B_0^m=B_{\eta_0}^m$ defined in \eqref{score_ate}, we have 
\begin{align*}
P_0(B_0^m\xi^m_0)^2&=\E_0\left[\gamma_0^2(D,X)(Y-m_0(D,X))^2\right]\\
&=\E_0\left[\frac{D}{\pi_0^2(X)}(Y(1)-m_0(1,X))^2\right]+\E_0\left[\frac{1-D}{(1-\pi_0(X))^2}(Y(0)-m_0(0,X))^2\right].
\end{align*}
Consequently, by the unconfoundedness imposed in Assumption \ref{Ass:unconfounded}(i) and the binary nature of $Y$, we have $\E_0[Y(d)^2|D=d,X=x]=\E_0[Y(d)|D=d,X=x]=m_0(d,x)$. We thus obtain
\begin{align*}
P_0(B_0^m\xi^m_0)^2&=\E_0\left[\frac{D}{\pi_0^2(X)}m_0(1,X)(1-m_0(1,X))\right]+\E_0\left[\frac{1-D}{(1-\pi_0(X))^2}m_0(0,X)(1-m_0(0,X))\right]\\
&=P_0[\gamma_0^2m_0(1-m_0)].
\end{align*}
Then, by employing Assumption \ref{Ass:unconfounded}(ii), that is, $\bar\pi <\pi_0(x)< 1-\bar\pi$ for all $x$, it yields uniformly for $\eta\in\mathcal H_n$:
\begin{align*}
-&n P_0g^{(2)}(0)-t^2P_0(B_0^m\xi^m_0)^2=t^2P_0[\gamma_0^2(m_\eta(1-m_\eta)-m_0(1-m_0))]\\
&=t^2P_0[\gamma_0^2(m_\eta-m_0)(1-m_0)]+t^2P_0[\gamma_0^2m_\eta (m_0-m_\eta)]\\
&\leq 2t^2\,\E_0\left[\frac{D}{\pi_0^2(X)}\big|m_\eta(1,X)-m_0(1,X)\big|\right]+2t^2\,\E_0\left[\frac{1-D}{(1-\pi_0(X))^2}\big|m_\eta(0,X)-m_0(0,X)\big|\right]\\
	&\leq  \frac{2t^2}{\bar\pi^2}\Big(\Vert m_\eta(1,\cdot)-m_0(1,\cdot)\Vert_{2, F_0}+\Vert m_\eta(0,\cdot)-m_0(0,\cdot)\Vert_{2, F_0}\Big)=o_{P_0}(1),
\end{align*}
where the last equation is due to the posterior contraction rate of the conditional mean function $m(d,\cdot)$ imposed in  Assumption \ref{Assump:Rate}.
 Consequently, we obtain, uniformly for $\eta\in\mathcal H_n$, 
\begin{align*}
	n P_0[\log p_{\eta^m}-\log p_{\eta^m_t}] &= -n (P_0g^\prime(0)+ P_0g^{(2)}(0))+o_{P_0}(1)\\
	&=t^2P_0(B_0^m\xi^m_0)^2+t\sqrt{n}\int (\bar{m}_0-\bar{m}_{\eta})\,\mathrm{d} F_0+o_{P_0}(1),
\end{align*}
which leads to the desired result.
\end{proof}

The next lemma verifies the prior stability condition under our double robust smoothness conditions.
\begin{lemma}\label{lemma:PriorInv}
	Let Assumptions \ref{Ass:unconfounded}--\ref{Assump:Prior} hold. Then we have
		\begin{equation}\label{lemma:PriorInv:term1}
		\frac{\int_{\mathcal{H}^m_n}\exp\big(\ell_n^m(\eta^m_t)\big)\mathrm{d}\Pi(\eta^m)}{\int_{\mathcal{H}^m_n}\exp\big(\ell_n^m(\eta^{m\prime})\big) \mathrm{d}\Pi(\eta^{m\prime})}\to_{P_0}1,
	\end{equation} 			
	for a sequence of measurable sets $\mathcal{H}^m_n$ such that $\Pi(\eta^m\in\mathcal{H}^m_n|Z^{(n)})\to_{P_0}1$.
\end{lemma}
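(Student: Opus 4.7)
The plan is to reduce the shift of $\eta^m$ by $-t\xi_0^m/\sqrt n = -t\gamma_0/\sqrt n$ to an approximate translation of the scalar parameter $\lambda$ in the decomposition $\eta^m = W^m + \lambda\widehat\gamma$. I would begin by writing the prior measure formally as $p_W(w)\phi_{\sigma_n}(\lambda)\,dw\,d\lambda$ and performing the substitution $\lambda\mapsto\lambda+t/\sqrt n$ inside the numerator of \eqref{lemma:PriorInv:term1}. After the substitution,
$$\eta^m-\tfrac{t}{\sqrt n}\gamma_0 \;=\; w+\lambda\widehat\gamma+\tfrac{t}{\sqrt n}(\widehat\gamma-\gamma_0),$$
the indicator becomes $\mathbf 1\{w+(\lambda+t/\sqrt n)\widehat\gamma\in\mathcal H_n^m\}$, and a Gaussian density-ratio $\phi_{\sigma_n}(\lambda+t/\sqrt n)/\phi_{\sigma_n}(\lambda)=\exp(-\lambda t/(\sqrt n\sigma_n^2)-t^2/(2n\sigma_n^2))$ appears. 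The ratio in \eqref{lemma:PriorInv:term1} is then expressed as a restricted posterior expectation of the product of three factors: the density-ratio, the shifted indicator, and the likelihood ratio $\exp(\Delta_n(\eta^m))$ with
$$\Delta_n(\eta^m):=\ell_n^m\!\left(\eta^m+\tfrac{t}{\sqrt n}(\widehat\gamma-\gamma_0)\right)-\ell_n^m(\eta^m).$$

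Two of the three factors can then be dispatched quickly. On the event $\{|\lambda|\le u_n\sigma_n^2\sqrt n\}$, whose posterior probability tends to one by Assumption~\ref{Assump:Prior}(i), the density-ratio equals $1+o_{P_0}(1)$ uniformly, since $|\lambda t|/(\sqrt n\sigma_n^2)\le u_n|t|\to 0$ and $t^2/(2n\sigma_n^2)\to 0$ by $n\sigma_n^2\to\infty$; the tail contribution outside this set is absorbed by the standard Gaussian tail bound on the prior, as in Lemma~4 of \cite{ray2020causal}. The shifted indicator can be traded for $\mathbf 1_{\mathcal H_n^m}$ at the cost of vanishing posterior mass by combining Assumption~\ref{Assump:Prior}(ii) with the defining property $\Pi(\eta^m\in\mathcal H_n^m\mid\bm O^{(n)})\to_{P_0}1$. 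After these two reductions, establishing \eqref{lemma:PriorInv:term1} amounts to proving $\sup_{\eta\in\mathcal H_n}|\Delta_n(\eta^m)|=o_{P_0}(1)$.

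This last step is the main obstacle, and is precisely where the double-robust product structure of Assumption~\ref{Assump:Donsker} is essential. Applying the same second-order Taylor expansion used in Lemma~\ref{lemma:Taylor}, but now along the $o_{P_0}(n^{-1/2})$-sized direction $h_n:=t(\widehat\gamma-\gamma_0)/\sqrt n$, yields
$$\Delta_n(\eta^m)=\sum_{i=1}^n h_n(D_i,X_i)\rho^{m_\eta}(O_i)-\tfrac12\sum_{i=1}^n h_n^2(D_i,X_i)\,m_{\tilde\eta}(1-m_{\tilde\eta})(D_i,X_i)$$
for some intermediate $\tilde\eta$. The quadratic remainder is bounded by $(t^2/2)\,\mathbb P_n[(\widehat\gamma-\gamma_0)^2]$, which is $o_{P_0}(1)$ using the uniform boundedness of $\widehat\gamma$ and $\gamma_0$ and the fact that $\|\widehat\gamma-\gamma_0\|_{L^2(F_0)}\lesssim r_n\to 0$ under Assumption~\ref{Assump:Rate}. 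The linear term equals $t\,\mathbb G_n[(\widehat\gamma-\gamma_0)\rho^{m_\eta}]+t\sqrt n\,P_0[(\widehat\gamma-\gamma_0)\rho^{m_\eta}]$. Splitting $\rho^{m_\eta}=\rho^{m_0}-(m_\eta-m_0)$, I would control $\mathbb G_n[(\widehat\gamma-\gamma_0)\rho^{m_0}]=o_{P_0}(1)$ from the independence of $\widehat\gamma$ from $\bm O^{(n)}$ together with $\|\widehat\gamma-\gamma_0\|_{L^2(F_0)}=o_{P_0}(1)$, and invoke the product-structure equicontinuity \eqref{NewSE} to get $\sup_{\eta\in\mathcal H_n}|\mathbb G_n[(\widehat\gamma-\gamma_0)(m_\eta-m_0)]|=o_{P_0}(1)$. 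Finally, using $P_0[\rho^{m_0}\mid D,X]=0$, the drift reduces to $t\sqrt n\,P_0[(\widehat\gamma-\gamma_0)(m_0-m_\eta)]$, which is bounded by Cauchy--Schwarz by $|t|\sqrt n\,r_n\varepsilon_n=o(1)$ by Assumption~\ref{Assump:Rate}. Combining these three bounds delivers the uniform control of $\Delta_n$, and closes the argument. The crux is that this replaces the Donsker requirement $\sup_\eta|\mathbb G_n(m_\eta-m_0)|=o_{P_0}(1)$ used in \cite{ray2020causal} by the weaker product-form condition \eqref{NewSE}, which is what enables the trade-off between the smoothness of $\pi_0$ and that of $m_0$.
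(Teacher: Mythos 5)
Your proposal is correct and follows essentially the same route as the paper's proof: the paper merely reverses the order of your first two moves (it first replaces the shift $-t\gamma_0/\sqrt n$ by $-t\gamma_n/\sqrt n$, showing the log-likelihood difference along $t(\widehat\gamma-\gamma_0)/\sqrt n$ is uniformly $o_{P_0}(1)$, and only then performs the exact change of variables in $\lambda$), whereas you change variables first and control the same quantity $\Delta_n$ afterwards. The substance is identical — Taylor expansion along $t(\widehat\gamma-\gamma_0)/\sqrt n$, Lemma~\ref{lemma:negligible} for the $\rho^{m_0}$ part, the product-structure condition \eqref{NewSE} for the $(m_\eta-m_0)$ part, and Cauchy--Schwarz with $\sqrt n\,r_n\varepsilon_n\to 0$ for the drift.
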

\begin{proof}
Since $\widehat\gamma$ is based on an auxiliary sample, it is sufficient to consider deterministic functions $\gamma_n$ with the same rates of convergence as $\widehat\gamma$. Denote the corresponding propensity score by $\pi_n$. By Assumption \ref{Assump:Prior}, we have $\lambda\sim N(0,\sigma_n^2)$ and 
	\begin{equation}\label{lemma:PriorInv:term2}
		\frac{\int_{\mathcal{H}^m_n}\exp\big(\ell_n^m(\eta^m_t)\big)\mathrm{d}\Pi(\eta^m)}{\int_{\mathcal{H}^m_n}\exp\big(\ell_n^m(\eta^{m\prime})\big) \mathrm{d}\Pi(\eta^{m\prime})}
		=\frac{\int_{\Theta_n}e^{\ell_n^m(w+\lambda\gamma_n-t \gamma_0/\sqrt{n})}\phi_{\sigma_n}(\lambda)\,\mathrm{d}\lambda \mathrm{d}\Pi(w) }{\int_{\Theta_n}e^{\ell_n^m(w+\lambda \gamma_n)}\phi_{\sigma_n}(\lambda)\, \mathrm{d}\lambda \mathrm{d}\Pi(w) }+o_{P_0}(1),
	\end{equation}
	where $\phi_{\sigma_n}$ denotes the probability density function of a $N (0,\sigma_n^2)$ random variable and the set $\Theta_n$ is defined by
		$\Theta_{n}=\left\{(w, \lambda): w+\lambda \gamma_n \in \mathcal{H}_{n}^m,|\lambda| \leq 2 u_{n} \sigma_{n}^{2} \sqrt{n}\right\}$
	where $u_n\to0$ imposed in Assumption \ref{Assump:Prior} and $u_n n\sigma_n^2\to \infty$.

	Considering the log likelihood ratio of two normal densities together with the constraint $|\lambda| \leq 2 u_{n} \sigma_{n}^{2} \sqrt{n}$, it is shown on page 3015 of \cite{ray2020causal} that
	\begin{equation*}
		\left|\log \frac{\phi_{\sigma_n}(\lambda)}{\phi_{\sigma_n}(\lambda-t/\sqrt{n})}  \right|\leq \frac{|t\lambda|}{\sqrt{n}\sigma_n^2}+\frac{t^2}{2n\sigma_n^2}\to 0.
	\end{equation*}
We show at the end of the proof that
$	\left|\ell_n^m(w+\lambda \gamma_n-t \gamma_0/\sqrt{n})-\ell_n^m(w+\lambda \gamma_n-t \gamma_n/\sqrt{n})   \right|=o_{P_0}(1)$,
uniformly for $(w,\lambda)\in \Theta_n$. 
Consequently, the numerator of this leading term in \eqref{lemma:PriorInv:term2} becomes
	\begin{equation*}
	\int_{\Theta_n}e^{\ell_n^m(w+\lambda \gamma_n-t \gamma_0/\sqrt{n})}\phi_{\sigma_n}(\lambda)\,\mathrm{d}\lambda \mathrm{d}\Pi(w)=	e^{o_{P_0}(1)}\int_{\Theta_n}e^{\ell_n^m(w+\gamma_n(\lambda -t /\sqrt{n}))}\phi_{\sigma_n}(\lambda-t/\sqrt{n})\,\mathrm{d}\lambda \mathrm{d}\Pi(w).
	\end{equation*}
	By the change of variables $\lambda-t/\sqrt{n}\mapsto \lambda^\prime$ on the numerator and using the notation $\Theta_{n,t}=\left\{(w, \lambda): (w,\lambda+ t /\sqrt n) \in \Theta_n\right\}$, the prior invariance property becomes
	\begin{equation*}
		e^{o_{P_0}(1)}\frac{\int_{\Theta_{n,t}}e^{\ell_n^m(w+\lambda^\prime \gamma_n)}\phi_{\sigma_n}(\lambda^\prime)\,\mathrm{d}\lambda^\prime \mathrm{d}\Pi(w) }{\int_{\Theta_n}e^{\ell_n^m(w+\lambda \gamma_n)}\phi_{\sigma_n}(\lambda)\,\mathrm{d}\lambda \mathrm{d}\Pi(w) }=e^{o_{P_0}(1)}\frac{\Pi(\Theta_{n,t}|Z^{(n)})}{\Pi(\Theta_{n}|Z^{(n)})}.
	\end{equation*}
	The desired result would follow from $\Pi(\Theta_{n}|Z^{(n)})=1-o_{P_0}(1)$ and $\Pi(\Theta_{n,t}|Z^{(n)})=1-o_{P_0}(1)$. The first convergence directly follows from Assumption \ref{Assump:Prior}. The set $\Theta_{n,t}$ is the intersection of these two conditions in Assumption \ref{Assump:Prior}, except that the restriction on $\lambda$ in $\Theta_{n,t}$ is $\left|\lambda+t/\sqrt{n} \right|\leq 2u_n\sqrt{n}\sigma_{n}^2$ instead of $\left|\lambda\right|\leq u_n\sqrt{n}\sigma_{n}^2$. By construction, we have $t/\sqrt{n}=o(u_n\sqrt{n}\sigma_{n}^2)$, so that $\Pi(\Theta_{n,t}|X^{(n)})=1-o_{P_0}(1)$.
	
We complete the proof by establishing the following result:
	\begin{equation}\label{bound:likelihood}
		\sup_{\eta^m\in \mathcal{H}_n^m}\left|\ell_n^m(\eta^m-t \gamma_n/\sqrt{n})-\ell_n^m(\eta^m-t \gamma_0/\sqrt{n})   \right|=o_{P_0}(1).
	\end{equation}
	We denote $\eta^m_{n,t}= \eta^m-t \gamma_n/\sqrt{n}$ and $\eta_t^m=\eta^m-t \gamma_0/\sqrt{n}$. Consider the following decomposition of the log-likelihood:
	\begin{align*}
		\ell_n^m(\eta^m_{n,t})-\ell_n^m(\eta^m_{t})&=\ell_n^m(\eta^m_{n,t})-\ell_n^m(\eta^m)+\ell_n^m(\eta^m)-\ell_n^m(\eta^m_{t})\\
		&=n\mathbb{P}_n[\log p_{\eta^m_{n,t}}-\log p_{\eta^m}]+n\mathbb{P}_n[\log p_{\eta^m}-\log p_{\eta^m_{t}}].
	\end{align*}
	Next, we apply third-order Taylor expansions in Lemma \ref{lemma:Taylor} separately to the two terms in the brackets of the above display making use of the notation $\rho^m(y,d,x)=y-m(d,x)$:
	\begin{align*}
		n\mathbb{P}_n[\log p_{\eta^m_{n,t}}-\log p_{\eta^m}]&=-t\sqrt{n}\mathbb{P}_n\left[\gamma_n\rho^{m_{\eta}}\right]-\frac{t^2}{2}\mathbb{P}_n\left[\gamma_n^2m_{\eta}\left(1-m_{\eta}\right)\right]-\frac{t^3}{6\sqrt n}\mathbb{P}_n\left[\gamma_n^3\Psi^{(2)}(\eta_{u^*}^m)\right],\\
		n\mathbb{P}_n[\log p_{\eta^m}-\log p_{\eta^m_{t}}]&=t\sqrt{n}\mathbb{P}_n\left[\gamma_0\rho^{m_{\eta}}\right]+\frac{t^2}{2}\mathbb{P}_n\left[\gamma_0^2m_{\eta}\left(1-m_{\eta}\right)\right]+\frac{t^3}{6\sqrt n}\mathbb{P}_n\left[\gamma_0^3\Psi^{(2)}(\eta_{u^{**}}^m)\right],
	\end{align*}
for some intermediate points $u^{*}, u^{**}\in(0,1)$, cf. the equation (\ref{path}).
	Combining the previous calculation yields
	\begin{align*}
		\ell_n^m(\eta_{n,t})-\ell_n^m(\eta_{t})		&=t\sqrt n\mathbb{P}_n[(\gamma_0-\gamma_n)\rho^{m_{\eta}}]-\frac{t^2}{2}\mathbb{P}_n[dm_{\eta}(1-m_{\eta})(\gamma_n^2-\gamma_0^2)]\\
		&\quad +\frac{t^3}{6\sqrt n}\mathbb{P}_n\left[(\gamma_0^3-\gamma_n^3)\left(\Psi^{(2)}(\eta_{u^{**}}^m)-\Psi^{(2)}(\eta_{u^{*}}^m)\right)\right]=:T_1+T_2+T_3.
	\end{align*}
In order to control $T_1$, we evaluate 
	\begin{align*}
		T_1=t\mathbb{G}_n[(\gamma_0-\gamma_n)\rho^{m_0}]+t\mathbb{G}_n[(\gamma_0-\gamma_n)(m_0-m_{\eta})]+t\sqrt{n}P_0[(\gamma_0-\gamma_n)\rho^{m_\eta}].
	\end{align*}
	Note that the first term is centered, so it becomes $t\sqrt{n}\mathbb{P}_n[(\gamma_0-\gamma_n)\rho^{m_0}]$. We apply Lemma \ref{lemma:negligible} to conclude that it is of smaller order. The middle term is negligible by our Assumption \ref{Assump:Donsker}. Referring to the last term, the Cauchy–Schwarz inequality yields
	\begin{align*}
		&\sup_{\eta\in\mathcal{H}_n}\left|\sqrt{n}P_0[(\gamma_n-\gamma_0)(m_\eta-m_0)]\right|\\
		&\lesssim \sqrt{2\, n} \, \Vert\pi_n-\pi_0\Vert_{2, F_0}\sup_{\eta\in\mathcal{H}_n}\Big(\Vert m_\eta(1,\cdot)-m_0(1,\cdot)\Vert_{2, F_0}+\Vert m_\eta(0,\cdot)-m_0(0,\cdot)\Vert_{2, F_0}\Big)=o_{P_0}(1),
	\end{align*}
	where the last equality is due to Assumption \ref{Assump:Rate}. We thus obtain $T_1=o_{P_0}(1)$ uniformly in $\eta\in\mathcal{H}^m_n$. Consider $T_2$. We note that $\|m_{\eta}(1-m_{\eta})\|_\infty\leq 1$ uniformly in  $\eta\in\mathcal{H}^m_n$. Hence, we obtain
	\begin{equation*}
		P_0|T_2|\leq \frac{t^2}{2}P_0|\gamma_n^2-\gamma_0^2|=\frac{t^2}{2}P_0[(\gamma_n-\gamma_0)(\gamma_n+\gamma_0)]\lesssim \frac{t^2}{2}\|\pi_n-\pi_0\|_{2, F_0}\to0
	\end{equation*}
	as $\pi_n\to\pi_0$ in $L^2(F_0)$-norm by Assumption \ref{Assump:Rate}. Thus, $T_2=o_{P_0}(1)$ uniformly in $\eta\in\mathcal{H}_n$. 
	Finally, we control $T_3$ by evaluating		
	$|T_3|\lesssim 
		t^3 n^{-1/2}\mathbb{P}_n(\Vert\gamma_n\Vert_{\infty}^3+\Vert\gamma_0\Vert_{\infty}^3)=o_{P_0}(1)$
	uniformly in  $\eta\in\mathcal{H}^m_n$, which shows \eqref{bound:likelihood}. 
\end{proof}

\bibliographystyle{abbrvnat}

\bibliography{Bayes_bib}

\newpage
$\,$
\setcounter{page}{1}
\vskip 1cm

\begin{center}
{\LARGE Supplement to ``Double Robust Bayesian Inference on Average Treatment Effects"\par\vspace{0.6\baselineskip}}
\end{center}
\begin{center}
{ \large Christoph Breunig}
  \qquad \quad{ \large Ruixuan Liu}
    \qquad \quad{ \large Zhengfei Yu}

\end{center}
\begin{center}
{ \large\today}
\end{center}

This Supplemental Material contains materials to support our main paper. Appendix \ref{appendix:auxiliary} collects some auxiliary results. Appendix \ref{appendix:proof_exponent} collects the proofs for lemmas in Section \ref{sec:extension} of the main paper.
Appendix \ref{appendix:lfd} provides least favorable directions for other causal parameters of interest besides the ATE. Appendix \ref{appendix:exponent} states and proves the BvM theorem for outcome variables belonging to one-parameter exponential family described in Section \ref{sec:extension} of the main paper. 
Appendix \ref{appendix:LPA} describes how to draw the posterior of the conditional mean function using the Laplace approximation. Appendix \ref{appendix:simu} presents additional simulation evidence. 

In this supplement, $C>0$ denotes a generic constant, whose value might change line by line. We introduce additional subscripts when there are multiple constant terms in the same display. For two sequences $a_n,b_n$, we write $a_n\lesssim b_n$, if $a_n\leq C b_n$.

\appendix

\addtocounter{section}{2} % starting with section "C" instead of "A"

\section{Auxiliary Results} \label{appendix:auxiliary}
The part in the likelihood associated with the component $\eta^m=\Psi^{-1}(m_\eta)$ is given by
\begin{equation*}
	p_{\eta^m}(z)=m_\eta(d,x)^y(1-m_\eta(d,x))^{1-y},
\end{equation*}
with the corresponding log-likelihood version $\ell_n^m(\eta^m)=\sum_{i=1}^n\log	p_{\eta^m}(Z_i)$. In other words, $p_{\eta^m}(\cdot)$ is the density with respect to the dominating measure
\begin{equation}\label{DominateMeasure}
	\mathrm{d}\nu(x,d,y)=(\pi_0(x))^d(1-\pi_0(x))^{1-d}\mathrm{d}\vartheta(d,y)\mathrm{d}F_0(x),
\end{equation}
where $\vartheta$ stands for the counting measure on $\left\{\{0,0\},\{0,1\},\{1,0\},\{1,1\}\right\}$. 
For two generic probability densities $p$ and $q$, the \textit{Kullback-Leibler divergence} is defined as $K(p,q)=\int p\log(q/p)d\nu $, and the  \textit{Kullback-Leibler variation} as $V(p,q)=\int p|\log(q/p)|^2d\nu $; see Appendix B in \cite{ghosal2017fundamentals}.
Recall the notation $\rho^m(y,d,x)=y-m(d,x)$ used below.

\begin{lemma}\label{lemma:Taylor}
Let  Assumption \ref{Ass:unconfounded}  be satisfied and $m_\eta=\Psi(\eta^m)$, then we have uniformly for $\eta^m\in\mathcal H_n^m$:
\begin{equation*}
	\log p_{\eta^m}-\log p_{\eta_t^m}=\frac{t}{\sqrt{n}}\gamma_0\rho^{m_\eta}+\frac{t^2}{2n}\gamma_0^2m_\eta(1-m_\eta)+R_n,
\end{equation*}
for some function $R_n$ with $\|R_n\|_\infty\lesssim n^{-3/2}$.
\end{lemma}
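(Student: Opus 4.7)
The plan is a straightforward third-order Taylor expansion along the one-parameter path introduced in equation (\ref{path}), i.e., $\eta_u^m = \eta^m - (tu/\sqrt n)\xi_0^m$ with $\xi_0^m = \gamma_0$ (by Lemma \ref{lemma:lfd}), writing $g(u) = \log p_{\eta_u^m}(o)$ and expanding $g(1)$ around $u=0$ so that
\[
\log p_{\eta^m} - \log p_{\eta_t^m} = g(0) - g(1) = -g'(0) - \tfrac{1}{2}g''(0) - \tfrac{1}{6}g'''(u^*)
\]
for some $u^* \in (0,1)$.

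The computation of derivatives is routine using $\Psi'(\cdot) = \Psi(\cdot)(1-\Psi(\cdot))$. First, $\partial_u \eta_u^m = -(t/\sqrt n)\gamma_0$ gives
\[
\partial_u m_{\eta_u} = -\tfrac{t}{\sqrt n}\gamma_0\, m_{\eta_u}(1-m_{\eta_u}).
\]
Substituting into $g(u) = y\log m_{\eta_u} + (1-y)\log(1 - m_{\eta_u})$, the Bernoulli score collapses nicely:
\[
g'(u) = \partial_u m_{\eta_u}\cdot \frac{y - m_{\eta_u}}{m_{\eta_u}(1-m_{\eta_u})} = -\tfrac{t}{\sqrt n}\gamma_0\,\rho^{m_{\eta_u}},
\]
so that $-g'(0) = (t/\sqrt n)\gamma_0 \rho^{m_\eta}$, matching the first-order term. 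Differentiating once more gives
\[
g''(u) = -\tfrac{t}{\sqrt n}\gamma_0 \,\partial_u m_{\eta_u} = -\tfrac{t^2}{n}\gamma_0^2\, m_{\eta_u}(1-m_{\eta_u}),
\]
and evaluating at $u=0$ yields $-\tfrac12 g''(0) = \tfrac{t^2}{2n}\gamma_0^2 m_\eta(1-m_\eta)$, which is the quadratic term in the claim.

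For the remainder, differentiating once more and using $\partial_u[m_{\eta_u}(1-m_{\eta_u})] = (1-2m_{\eta_u})\partial_u m_{\eta_u}$ produces
\[
g'''(u) = \tfrac{t^3}{n^{3/2}}\gamma_0^3\,(1-2m_{\eta_u})\,m_{\eta_u}(1-m_{\eta_u}),
\]
and the key bound comes from combining the trivial estimate $|m_{\eta_u}(1-m_{\eta_u})(1-2m_{\eta_u})| \le 1$ with $\|\gamma_0\|_\infty < \infty$, which is ensured by the overlap part of Assumption \ref{Ass:unconfounded}(ii) (under which $|\gamma_0(d,x)| \le 1/\bar\pi$ uniformly). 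Setting $R_n = -\tfrac{1}{6}g'''(u^*)$ then gives $|R_n| \lesssim n^{-3/2}$, with the implicit constant depending only on $t$ and $\bar\pi$. The only mild obstacle is being careful that the remainder is uniform in $\eta^m$ and in $o$; this is handled purely by the overlap assumption and does not require the posterior concentration set $\mathcal H^m_n$, since the bound on $g'''$ uses only $\|\gamma_0\|_\infty$ and the universal boundedness of $m_{\eta_u}(1-m_{\eta_u})$.
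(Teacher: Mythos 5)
Your proof is correct and follows essentially the same route as the paper's: a third-order Taylor expansion of $g(u)=\log p_{\eta_u^m}$ along the least favorable path, with the first- and second-order terms collapsing to $\frac{t}{\sqrt n}\gamma_0\rho^{m_\eta}$ and $\frac{t^2}{2n}\gamma_0^2 m_\eta(1-m_\eta)$, and the remainder controlled uniformly via $\|\gamma_0\|_\infty\le 1/\bar\pi$ from overlap together with the universal bound $|m(1-m)(1-2m)|\le 1$. The only blemish is a sign typo in your intermediate expression for $g''(u)$ --- differentiating $g'(u)=-\tfrac{t}{\sqrt n}\gamma_0(y-m_{\eta_u})$ gives $+\tfrac{t}{\sqrt n}\gamma_0\,\partial_u m_{\eta_u}$, not $-\tfrac{t}{\sqrt n}\gamma_0\,\partial_u m_{\eta_u}$ --- but after substituting $\partial_u m_{\eta_u}=-\tfrac{t}{\sqrt n}\gamma_0 m_{\eta_u}(1-m_{\eta_u})$ your final expression for $g''$ is the correct one.
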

\begin{proof}
The logistic distribution function $\Psi$ satisfies $\Psi^\prime=\Psi(1-\Psi)$ and $\Psi^{(2)}=\Psi(1-\Psi)(1-2\Psi)$.
	Recall the perturbation of $\eta^m$ along the least favorable direction in \eqref{def:pert} given by
$		\eta_t^m=\eta^m-t\xi_0^m/\sqrt{n}$ for $t\in\mathbb R$.
Thus, $\log p_{\eta^m}-\log p_{\eta_t^m}=g(0)-g(1)$, where $g(u)= \log p_{\eta^m_u}$ for $u\in[0,1]$, as introduced at the beginning of Section \ref{appendix:key:lemmas}. 
We examine the following Taylor expansion uniformly for $\eta^m\in\mathcal H_n^m$:
	\begin{equation}\label{taylor:eq}
		g(0)-g(1)=-g^\prime(0)-g^{(2)}(0)/2-\theta,
	\end{equation}
	for some function $\theta$ with $|\theta|\leq \Vert g^{(3)}\Vert_{\infty}$.

	 We express the part of the log-likelihood involving $\eta^m$ explicitly as follows.
\begin{align}
\log p_{\eta^m}&(z)=dy\log\frac{e^{\eta^m(1,x)}}{1+e^{\eta^m(1,x)}}+d(1-y)\log\frac{1}{1+e^{\eta^m(1,x)}}  \notag\\
		&\qquad +(1-d)y\log\frac{e^{\eta^m(0,x)}}{1+e^{\eta^m(0,x)}}+(1-d)(1-y)\log\frac{1}{1+e^{\eta^m(0,x)}}  \notag\\
				=& d\left(y\eta^{m}(1,x)-\log(1+e^{\eta^{m}(1,x)})\right)+(1-d)\left(y\eta^{m}(0,x)-\log(1+e^{\eta^{m}(0,x)})\right). \label{LogLikelihood}
	\end{align}
Given equation \eqref{taylor:eq}, it remains to calculate the first three derivatives of the function $g$. Its first derivative is given by
	\begin{align*}
			g^\prime (u)&=-\frac{t}{\sqrt{n}}\gamma_0\rho^{\Psi(\eta_u^m)},
	\end{align*}
where $\gamma_0\rho^{\Psi(\eta_u^m)}(y,d,x)=y-\Psi(\eta_u^m(d,x))$.
	The second and third derivative of $g$ can be computed along the same lines:
	\begin{align*}
		g^{(2)}(u)=-\frac{t^2}{n}\gamma_0^2\Psi^\prime(\eta_u^m),~~
		g^{(3)}(u)=-\frac{t^3}{n^{3/2}}\gamma_0^3\Psi^{(2)}(\eta_u^m).
	\end{align*}
	In the above expression involving the Riesz representor $\gamma_0$, we have
	\begin{equation*}
		\gamma_0^2(d,x)=\frac{d}{\pi_0^2(x)}+\frac{1-d}{(1-\pi_0(x))^2}~~~\text{and}~~~\gamma_0^3(d,x)=\frac{d}{\pi_0^3(x)}-\frac{1-d}{(1-\pi_0(x))^3},
	\end{equation*}
 again because of $d(1-d)=0$. Evaluating at $u=0$, we have $\Psi(	\eta_u^m)=\Psi(\eta^m)=m_\eta$ and consequently, 
	\begin{align}\label{TaylorExp}
		g^\prime(0)=-\frac{t}{\sqrt{n}}\gamma_0\rho^{m_0}+\frac{t}{\sqrt{n}}\gamma_0(m_\eta-m_0),	
	\end{align}
and 
\begin{equation}\label{2ndTaylor}
		g^{(2)}(0)=-\frac{t^2}{n}\gamma_0^2 m_\eta(1-m_\eta) .
\end{equation}
For the remainder term, we have $\Vert g^{(3)}\Vert_{\infty}\lesssim n^{-3/2}$, given the uniform boundedness of $\Psi^{(2)}(\cdot)$.
\end{proof}

\begin{lemma}\label{lemma:negligible} Let Assumptions \ref{Ass:unconfounded} and \ref{Assump:Rate} be satisfied. Then, we have
\begin{align}\label{lemma:negligible:bound1}
\sqrt n\mathbb P_n[(\widehat\gamma-\gamma_0)\rho^{m_0}]=o_{P_0}(1).
\end{align}
\end{lemma}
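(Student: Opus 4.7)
The plan is to exploit two structural features simultaneously: (i) the auxiliary sample construction makes $\widehat\gamma$ independent of $\bm O^{(n)}$, so we may condition on the auxiliary data and treat $\widehat\gamma$ as deterministic with the same $L^2(F_0)$ rate $r_n$ it enjoys in probability; and (ii) the residual $\rho^{m_0}(Y,D,X) = Y - m_0(D,X)$ satisfies $\mathbb E_0[\rho^{m_0}\mid D,X]=0$ by definition of $m_0$, so it is orthogonal in $L^2(P_0)$ to any function of $(D,X)$ alone, in particular to $\widehat\gamma-\gamma_0$.

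First I would decompose
\begin{equation*}
\sqrt n\,\mathbb P_n\bigl[(\widehat\gamma-\gamma_0)\rho^{m_0}\bigr]
= \mathbb G_n\bigl[(\widehat\gamma-\gamma_0)\rho^{m_0}\bigr] + \sqrt n\, P_0\bigl[(\widehat\gamma-\gamma_0)\rho^{m_0}\bigr].
\end{equation*}
The deterministic piece is identically zero by the tower property and the conditional mean-zero property of $\rho^{m_0}$, since $\widehat\gamma-\gamma_0$ depends only on $(D,X)$ (and, through the auxiliary sample, on data independent of $(Y_i,D_i,X_i)_{i\le n}$).

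For the empirical process piece, I would condition on the auxiliary sample so that $\widehat\gamma$ acts as a fixed function; conditional on this data, the summands of $\mathbb G_n[(\widehat\gamma-\gamma_0)\rho^{m_0}]$ are i.i.d.\ mean-zero, so Chebyshev gives
\begin{equation*}
\mathbb G_n\bigl[(\widehat\gamma-\gamma_0)\rho^{m_0}\bigr] = O_{P_0}\!\left(\bigl\|(\widehat\gamma-\gamma_0)\rho^{m_0}\bigr\|_{L^2(P_0)}\right).
\end{equation*}
Because $Y$ is binary, $|\rho^{m_0}|\le 1$, so this $L^2(P_0)$ norm is bounded by $\|\widehat\gamma-\gamma_0\|_{L^2(F_0)}$. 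Under Assumption \ref{Ass:unconfounded}(ii), the identity
\begin{equation*}
\widehat\gamma-\gamma_0 \;=\; d\,\frac{\pi_0-\widehat\pi}{\widehat\pi\,\pi_0} \;+\;(1-d)\,\frac{\widehat\pi-\pi_0}{(1-\widehat\pi)(1-\pi_0)}
\end{equation*}
combined with $\|\widehat\gamma\|_\infty = O_{P_0}(1)$ (which bounds $1/\widehat\pi$ and $1/(1-\widehat\pi)$) gives $\|\widehat\gamma-\gamma_0\|_{L^2(F_0)}\lesssim \|\widehat\pi-\pi_0\|_{L^2(F_0)} \le r_n$ by Assumption \ref{Assump:Rate}. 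Hence the empirical process term is $O_{P_0}(r_n) = o_{P_0}(1)$.

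There is no genuine obstacle: the only thing to be careful about is to invoke independence from the auxiliary sample before applying Chebyshev, and to translate the $L^2$ bound on $\widehat\pi-\pi_0$ into one on $\widehat\gamma-\gamma_0$ using the overlap condition. Unlike the product-structure stochastic equicontinuity condition \eqref{NewSE}, this step requires no Donsker-type control because the residual $\rho^{m_0}$ is a single fixed orthogonal function rather than a class of functions.
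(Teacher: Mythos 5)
Your proposal is correct and follows essentially the same route as the paper: both exploit the auxiliary-sample independence to treat $\widehat\gamma$ as deterministic, use the conditional mean-zero property of $\rho^{m_0}$ given $(D,X)$ (which is exactly why the cross terms vanish in the paper's variance calculation and why your $\sqrt{n}P_0$ term is zero), bound the second moment using $|\rho^{m_0}|\le 1$, and conclude via $\Vert\widehat\pi-\pi_0\Vert_{L^2(F_0)}\le r_n\to 0$. The only cosmetic difference is that you separate the $\mathbb{G}_n$ and $P_0$ pieces explicitly before applying Chebyshev, whereas the paper computes the conditional variance of the sum directly.
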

\begin{proof}
Since $\widehat\gamma$ is based on an auxiliary sample, it is sufficient to consider deterministic functions $\gamma_n$ with the same rates of convergence as $\widehat\gamma$. We also write the corresponding propensity score as $\pi_n$, which is associated with $\gamma_n$.
Using the notation $\rho^{m_0}(Z_i)=Y_i-m_0(D_i,X_i)$, we evaluate for the conditional expectation that
\begin{align*}
&\mathbb E_0\left[\Big(\frac{1}{\sqrt n}\sum_{i=1}^n(\gamma_n-\gamma_0)(D_i,X_i)\rho^{m_0}(Z_i)\Big)^2\mid (D_1,X_1), \ldots, (D_n,X_n)\right]\\
& = \frac{1}{n}\sum_{i\neq i'}(\gamma_n-\gamma_0)(D_i,X_i)(\gamma_n-\gamma_0)(D_{i'},X_{i'}) \mathbb E_0\left[\rho^{m_0}(Z_i) \rho^{m_0}(Z_{i'})\mid (D_i,X_i), (D_{i'},X_{i'})\right]\\
& = \frac{1}{n}\sum_{i=1}^n(\gamma_n-\gamma_0)^2(D_i,X_i) Var_0(Y_i|X_i).
\end{align*}
We have $Var_0(Y_i|X_i)\leq 1$ since $Y_i\in\{0,1\}$ and thus we obtain for the unconditional squared expectation that
\begin{align*}
\mathbb E_0\left[\Big(\frac{1}{\sqrt n}\sum_{i=1}^n(\gamma_n-\gamma_0)(D_i,X_i)\rho^{m_0}(Z_i)\Big)^2\right]
\lesssim  \Vert \pi_n-\pi_0\Vert_{2, F_0}^2=o(1)
\end{align*}
by Assumption \ref{Assump:Rate}, which implies the desired result. 
\end{proof}
Each Gaussian process is associated with an intrinsic Hilbert space defined by its covariance kernel, see \cite{ghosal2017fundamentals}. This space is critical in analyzing the rate of contraction for its induced posterior. Consider a Hilbert space $\mathbb{H}$ with inner product $\langle\cdot,\cdot\rangle_{\mathbb{H}}$ and associated norm $\Vert\cdot\Vert_{\mathbb{H}}$. $\mathbb{H}$ is an reproducing kernel Hilbert space (RKHS) if there exists a symmetric, positive definite function $k:\mathcal{X}\times \mathcal{X}\mapsto \mathbb{R}$, called a kernel, that satisfies two properties: (i) $k(\cdot,\bm{x})\in\mathbb{H}$ for all $\bm{x}\in \mathcal{X}$ and; (ii) $f(\bm{x})=\langle f,k(\cdot,\bm{x})\rangle_{\mathbb{H}}$ for all $\bm{x}\in \mathcal{X}$ and $f\in \mathbb{H}$. It is well-known that every kernel defines a RKHS and every RKHS admits a unique reproducing kernel.

	Let $\mathbb{H}^{a_n}_1$ be the unit ball of the RKHS for the rescaled squared exponential process and let $\mathbb{B}_1^{s_m,p}$ be the unit ball of the H\"older class $\mathcal{C}^{s_m}([0,1]^p)$ in terms of the supremum norm $\Vert\cdot\Vert_{\infty}$. Denote $\Phi(\cdot)$ as the c.d.f. of a standard normal random variable with $\Phi^{-1}(\cdot)$ as its inverse.
	We introduce a class of functions $\mathcal{B}^m_n$ which is shown to contain the Gaussian process $W$ which sufficiently large probability, and is given by
\begin{equation}\label{SieveSet}
	\mathcal{B}^m_n:= \varepsilon_n \mathbb{B}_1^{s_m,p}+M_n\mathbb{H}_1^{a_n},
\end{equation}
where $a_n=n^{1/(2s_m+p)}(\log n)^{-(1+p)/(2s_m+p)}$, $\varepsilon_n=n^{-s_m/(2s_m+p)}(\log n)^{s_m(p+1)/(2s_m+p)}$, and $M_n=-2\Phi^{-1}(e^{-Cn\varepsilon_n^2})$. 
For notational simplicity, we suppress the dependence of the rescaled Gaussian process on the rescaling parameter $a_n$ in the following proofs. 

\begin{lemma}\label{lemma:RatesGP}
Under the conditions of Proposition \ref{prop:exponential}, the posterior distributions of the conditional mean functions contract at rate $\varepsilon_n$, i.e.,
\begin{equation*}
\Pi\left(\eta :\Vert m_{\eta}(d,\cdot)-m_0(d,\cdot)\Vert_{2, F_0}\geq M\varepsilon_n\mid Z^{(n)}\right)\to_{P_0}0
\end{equation*}
for $d\in\{0,1\}$ and every sufficiently large $M$, as $n\to\infty$.
\end{lemma}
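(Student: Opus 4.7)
The strategy is to verify the three standard sufficient conditions (prior mass, sieve, and entropy) of the general posterior contraction rate theorem of Ghosal, Ghosh, and van der Vaart adapted to Gaussian process priors by van der Vaart and van Zanten. Since $\widehat\gamma$ is built from an auxiliary sample, I condition on it throughout and treat it as a deterministic function satisfying $\|\widehat\gamma\|_\infty = O(1)$ and $\|\widehat\gamma-\gamma_0\|_\infty = O((n/\log n)^{-s_\pi/(2s_\pi+p)})$. Since $\Psi$ is bounded and Lipschitz and the likelihood is bounded in the binary model, the Hellinger distance between $p_{\eta^m}$ and $p_{\eta_0^m}$ is comparable to $\|m_\eta - m_0\|_{L^2(F_0)}$ and in turn dominated by $\|\eta^m - \eta_0^m\|_\infty$, so it suffices to establish contraction of $\eta^m$ at rate $\varepsilon_n$ in sup-norm.

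\textbf{Prior mass (KL) condition.} I would show $\Pi(\|\eta^m - \eta_0^m\|_\infty \leq \varepsilon_n) \geq e^{-C n \varepsilon_n^2}$ by independence of $W_d^m$ and $\lambda$:
\begin{equation*}
\Pi(\|\eta^m - \eta_0^m\|_\infty \leq \varepsilon_n) \geq \Pi\!\left(\|W_d^m - \eta_0^m\|_\infty \leq \varepsilon_n/2\right)\,\Pi\!\left(|\lambda|\,\|\widehat\gamma\|_\infty \leq \varepsilon_n/2\right).
\end{equation*}
For the first factor, I invoke the concentration function estimate for the rescaled squared exponential process (Lemma 4.6 of van der Vaart–van Zanten, 2009) with $\eta_0^m \in \mathcal C^{s_m}([0,1]^p)$ and $a_n$ as in \eqref{RescaleRate}, yielding a lower bound $\exp(-C_1 n \varepsilon_n^2)$. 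The Gaussian probability $\Pi(|\lambda| \leq \varepsilon_n/(2\|\widehat\gamma\|_\infty))$ is at least $c\,\varepsilon_n/\sigma_n$ up to constants, which exceeds $e^{-C_2 n \varepsilon_n^2}$ under the lower bound on $\sigma_n$ in \eqref{prop:exponential:cond}.

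\textbf{Sieve and entropy.} Define
\begin{equation*}
\mathcal{H}_n^m = \{w + \lambda \widehat\gamma : w \in \mathcal{B}_n^m,\ |\lambda| \leq L_n\}, \qquad L_n := \sigma_n\sqrt{2(C+4) n \varepsilon_n^2},
\end{equation*}
with $\mathcal{B}_n^m$ as in \eqref{SieveSet}. By Gaussian tail and Borell's inequality, $\Pi(\mathcal{H}_n^{m,c}) \leq e^{-(C+4) n \varepsilon_n^2}$, using $M_n = -2\Phi^{-1}(e^{-Cn\varepsilon_n^2})$. For the entropy, since the extra component is a one-dimensional $\mathbb R$-family scaled by a fixed bounded function,
\begin{equation*}
\log N(\varepsilon_n, \mathcal{H}_n^m, \|\cdot\|_\infty) \leq \log N(\varepsilon_n/2, \mathcal{B}_n^m, \|\cdot\|_\infty) + \log\!\left(\tfrac{4 L_n \|\widehat\gamma\|_\infty}{\varepsilon_n}\right).
\end{equation*}
The first term is $O(n\varepsilon_n^2)$ by the entropy bound for sums of small H\"older balls and RKHS balls of the rescaled SE process (Lemma 4.5 of van der Vaart–van Zanten, 2009); the second is of logarithmic order, hence absorbed.

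\textbf{Conclusion and main obstacle.} Combining the three displays through Theorem 8.19 (or its variant Theorem 8.9) of Ghosal–van der Vaart, I obtain $\Pi(\|\eta^m - \eta_0^m\|_{L^2(F_0)} \geq M\varepsilon_n \mid \bm{O}^{(n)}) \to_{P_0} 0$, from which the stated contraction of $m_\eta(d,\cdot)$ follows by Lipschitz continuity of $\Psi$. The delicate point is the interaction between the Gaussian base prior $W_d^m$ and the random shift $\lambda \widehat\gamma$: the shift must be large enough (via $\sigma_n \gg \varepsilon_n$) to allow the prior correction used in the BvM argument, yet controlled enough (via $\sigma_n \lesssim 1$) that the prior still places the right small-ball mass around $\eta_0^m$ and that the sieve has manageable entropy. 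This is exactly where \eqref{prop:exponential:cond} enters; the calculations above are all routine once that two-sided bound is in force.
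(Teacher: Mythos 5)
Your proposal follows essentially the same route as the paper's proof: both verify the prior-mass, sieve, and entropy conditions of the Ghosal--Ghosh--van der Vaart contraction theorem after treating $\widehat\gamma$ as a deterministic function from the auxiliary sample, using the concentration function of the rescaled squared exponential process for the small-ball bound, Borell's inequality plus a univariate Gaussian tail for the sieve, and the one-dimensional entropy contribution of the $\lambda\widehat\gamma$ shift. One small correction: the prior-mass step $\Pi(|\lambda|\leq c\varepsilon_n)\gtrsim \varepsilon_n/\sigma_n\gg e^{-Cn\varepsilon_n^2}$ relies on the upper bound $\sigma_n\lesssim 1$ in \eqref{prop:exponential:cond}, not the lower bound $\sigma_n\gg\varepsilon_n$, which is instead what makes the Gaussian tail bound for $\lambda$ in the sieve condition work.
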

\begin{proof}
	 By the assumed stochastic independence between the pair $Z^{(n)}$ and $\widehat{\gamma}$, we can proceed by studying the ordinary posterior distribution relative to the prior with $\widehat{\gamma}$ set equal to a deterministic function $\gamma_n$ and $(w,\lambda)$ following their prior. In other words, it is sufficient to consider the prior on $m$ given by $m(d,x) = \Psi\left(W^m(d,x) + \lambda\, \gamma_n(d,x)\right)$ where $W^m(d,\cdot)$ is the rescaled squared exponential process independent of $\lambda\sim N(0,\sigma_{n}^2)$ and $\gamma_n$ a sequence of functions $\|\gamma_n\|_\infty=O(1)$. It suffices to examine two conditional means $m_{\eta}(1,\cdot)$ and $m_{\eta}(0,\cdot)$ separately. We focus on the treatment arm with $d=1$, and leave $d$ off the notations in $W^m$ or $\eta^m$ as understood.
	 
	 We verify the following generic results in Theorem 2.1 of \cite{ghosal2000rates} to obtain the proper concentration rate for the posterior for the rescaled squared exponential process:
	  \begin{align}
	 &\Pi((w,\lambda): K\vee  V(p_{\eta_0^m},p_{w+\lambda \gamma_n})\leq \varepsilon_n^2)\geq c_1 \exp(-c_2n\varepsilon_n^2),\label{lemma:RatesGP:eq1}\\
	 &	\Pi(\mathcal{P}_n^c)\leq \exp(-c_3n\varepsilon_n^2),\label{lemma:RatesGP:eq2}\\
	 &\log  N(\varepsilon_n,\mathcal{P}_n,\Vert\cdot\Vert_{L^2(\nu)})\leq c_4 n\varepsilon_n^2,\label{lemma:RatesGP:eq3}
	 \end{align}
	 for positive constant terms $c_1,\cdots,c_4$ and for the set:
	 \begin{equation*}
	 	\mathcal{P}_n=\left\{p_{w+\lambda\gamma_n}:w\in \mathcal{B}^m_n,|\lambda|\leq M\sigma_n\sqrt{n}\varepsilon_n \right\}.
	 \end{equation*}

	Proof of \eqref{lemma:RatesGP:eq1}. The inequality \eqref{bound:K:V} in Lemma \ref{lemma:distance} yields
	\begin{equation*}
	\left\{(w,\lambda):\Vert w-\eta_0^m\Vert_{\infty}\leq c\varepsilon_n,|\lambda|\leq c\varepsilon_n \right\}\subset \left\{(w,\lambda): K\vee V(p_{\eta_0^m},p_{w+\lambda \gamma_n})\leq \varepsilon_n^2\right\}.
	\end{equation*}
	Given that we have independent priors of $W^m$ and $\lambda$, the prior probability of the set on the left of the above display can be lower bounded by $\Pi(\Vert W^m-\eta_0^m\Vert_{\infty}\leq c\varepsilon_n)\Pi(|\lambda|\leq c\varepsilon_n)$.
	By Proposition 11.19 of \cite{ghosal2017fundamentals} regarding the small exponent function $\phi^{a_n}_0$ and together with the upper bound \eqref{bound:concentration}, we infer
	\begin{align*}
	\Pi(\Vert W^m-\eta_0^m\Vert_{\infty}\leq c\varepsilon_n)\geq \exp\left(-\phi^{a_n}_0(\varepsilon_n/2)\right)\geq \exp\left(-cn\varepsilon^2_n\right),
	\end{align*}
for some positive constant $c$.
	The second term is lower bounded by $C\varepsilon_n/\sigma_n$, which is of order $O(\varepsilon_n)$ for $\sigma_{n}=O(1)$. Therefore, we have ensured that the prior assigns enough mass around a Kullback-Leibler neighborhood of the truth.
	
Proof of \eqref{lemma:RatesGP:eq2}. Referring to the sieve space for the Gaussian process, we apply Borell's inequality from Proposition 11.17 of \cite{ghosal2017fundamentals}:
		\begin{equation*}
			\Pi(W^m\not\in \mathcal{B}^m_n)\leq 1-\Phi(\iota_n+M_n),
		\end{equation*}
		where $\Phi(\cdot)$ is the c.d.f. of a standard normal random variable and the sequence $\iota_n$ is given by $\Phi(\iota_n)=\Pi(W\in \varepsilon_n \mathbb{B}^{s_m,p}_1)=e^{-\phi_0^{a_n}(\varepsilon_n)}$. Since our choice of $\varepsilon_n$ leads to $\phi_0^{a_n}(\varepsilon_n)\leq n\varepsilon_n^2$, we have $\iota_n\geq -M_n/2$ if $M_n=-2\Phi^{-1}(e^{-Cn\varepsilon_n^2})$ for some $C>1$. In this case, $\Pi(\mathcal{B}^{mc}_n)\leq 1-\Phi(M_n/2)\leq e^{-Cn\varepsilon_n^2}$. Next, we apply the univariate Gaussian tail inequality for $\lambda$:
	\begin{equation*}
		\Pi\left(|\lambda|\geq u_n\sigma_{n}\sqrt{n} \right)\leq 2e^{-u_n^2n\sigma_{n}^2/2},
	\end{equation*}
	which is bounded above by $e^{-Cn\varepsilon_n^2}$ for $u_n\to 0$ sufficiently slowly, given our assumption $\varepsilon_n=o(\sigma_{n})$. Hence, by the union bound, we have $\Pi(\mathcal{P}_n^c)\lesssim e^{-Cn\varepsilon_n^2}$.
	
	Proof of \eqref{lemma:RatesGP:eq3}. To bound the entropy number of the functional class $\mathcal{P}_n$, consider the inequality
	\begin{equation*}
	\Vert p_{w+\lambda\gamma}-p_{\bar{w}+\bar{\lambda}\gamma_n}\Vert_{L^2(\nu)}\lesssim \Vert w-\bar{w}\Vert_{2, F_0}+|\lambda-\bar{\lambda}|\Vert \gamma_n\Vert_{\infty},
	\end{equation*} 
	where the dominating measure $\nu$ is (\ref{DominateMeasure}). Thus, we have
	\begin{equation}
	N(\varepsilon_n,\mathcal{P}_n,\Vert\cdot\Vert_{L^2(\nu)})\leq  N(\varepsilon_n/2,\mathcal{B}^m_n,\Vert\cdot\Vert_{\infty}) \times N(C\varepsilon_n,[0,2M\sigma_n\sqrt{n}\varepsilon_n],|\cdot|)\lesssim n\varepsilon_n^2.
	\end{equation}
	Note that the logarithm of the second term grows at the rate of $O(\log n)$, and it is the first term that dominates. Because $\Psi$ is monotone and Lipschitz, a set of $\epsilon$-brackets in $L^2(F_0)$ for $\mathcal{B}_n^m$ translates into a set of $\varepsilon$-brackets in $L^2(\nu)$ for $\mathcal{P}_n$. Thus, Lemma \ref{lemma:Entropy} gives us $\log N(3\varepsilon_n,\mathcal{B}^m_n,\Vert\cdot\Vert)\lesssim n\varepsilon_n^2$.  
	
	 By Lemma 15 of \cite{ray2020causal}, this delivers the posterior contraction rate for $m_{\eta}(1,\cdot)$ in terms of the $L^2(F_0\pi_0)$-norm, which is equivalent to the $L^2(F_0)$-norm weighted by the  propensity score $\pi_0$. Analogous arguments lead to the desired result for the conditional mean  $m_{\eta}(0,\cdot)$ for the control group.
\end{proof}

Let $M_{ni}=e_i/\sum_{i=1}^ne_i$, where $e_i$'s are independently and identically drawn from the exponential distribution $\textup{Exp}(1)$. We also denote $X^{(n)}=(X_i)_{i=1}^n$. We adopt the following notations: $\mathbb{F}^*_n\bar{m}_{\eta}= \sum_{i=1}^n M_{ni}\bar{m}_{\eta}(X_i)$, $\mathbb{F}_n\bar{m}_{\eta}=n^{-1}\sum_{i=1}^n \bar{m}_{\eta}(X_i)$ and $F_0\bar{m}_{\eta}= \int \bar{m}_{\eta}(x)dF_0(x)$. Let $X^{(n)}=(X_i)_{i=1}^n$.
\begin{lemma}\label{lemma:DP}
Let the functional class $\{\bar{m}_{\eta}:\eta\in\mathcal{H}_n \}$ be a $P_0$-Glivenko-Cantelli class. Then for every $t$ in a sufficiently small neighborhood of 0 we have
	\begin{equation*}
		\sup _{\bar{m}_{\eta}:\eta\in\mathcal{H}_n }\left|\mathbb{E}\left[e^{t \sqrt{n}\left((\mathbb{F}^*_n-\mathbb{F}_n) \bar{m}_{\eta}\right)} \mid X^{(n)}\right]-e^{t^2 F_0\left(\bar{m}_{\eta}-F_0 \bar{m}_{\eta}\right)^2 / 2}\right| \rightarrow_{P_0} 0 .
\end{equation*}
\end{lemma}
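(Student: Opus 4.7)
The plan is to decompose the statistic, compute a direct Laplace transform using conditional independence of the exponential weights, and then argue that the scaling factor $\bar e^{-1}$ can be replaced by $1$. First I would write
\[
\sqrt{n}(\mathbb F^*_n-\mathbb F_n)\bar m_\eta \;=\; \bar e^{-1}\,U_n(\eta),\qquad \bar e:=\tfrac{1}{n}\sum_{i=1}^n e_i,\qquad U_n(\eta):=\tfrac{1}{\sqrt n}\sum_{i=1}^n(e_i-1)\bigl(\bar m_\eta(X_i)-\mathbb F_n\bar m_\eta\bigr),
\]
using $\sum_i(\bar m_\eta(X_i)-\mathbb F_n\bar m_\eta)=0$ to eliminate the term proportional to $(\bar e-1)$. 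The appeal of this decomposition is that $\bar e$ does not depend on $\eta$, so the entire $\eta$-dependence is captured by $U_n(\eta)$.

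Next I would compute the conditional Laplace transform of $U_n(\eta)$ directly. Since the $e_i$ are i.i.d.\ $\mathrm{Exp}(1)$ with $\mathbb E[e^{s(e_i-1)}]=e^{-s}/(1-s)$ for $s<1$, conditional on $\bm X^{(n)}$,
\[
\log \mathbb E\bigl[e^{tU_n(\eta)}\mid \bm X^{(n)}\bigr]=\sum_{i=1}^n\bigl[-t_i-\log(1-t_i)\bigr],\qquad t_i:=\frac{t\bigl(\bar m_\eta(X_i)-\mathbb F_n\bar m_\eta\bigr)}{\sqrt n}.
\]
Because $\bar m_\eta\in[-1,1]$ (it is a difference of conditional probabilities), each $t_i$ is $O(n^{-1/2})$ uniformly in $i$ and $\eta$; Taylor expansion $-t_i-\log(1-t_i)=t_i^2/2+O(|t_i|^3)$ then gives
\[
\log\mathbb E\bigl[e^{tU_n(\eta)}\mid\bm X^{(n)}\bigr]=\tfrac{t^2}{2}\,\mathbb F_n(\bar m_\eta-\mathbb F_n\bar m_\eta)^2+O(|t|^3 n^{-1/2})
\]
with the remainder bounded uniformly in $\eta\in\mathcal H_n$. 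Since $\{\bar m_\eta\}$ is $P_0$-Glivenko-Cantelli and uniformly bounded, the identity $\mathbb F_n(\bar m_\eta-\mathbb F_n\bar m_\eta)^2=\mathbb F_n\bar m_\eta^2-(\mathbb F_n\bar m_\eta)^2$, together with preservation of the Glivenko-Cantelli property under continuous maps on bounded classes, gives $\sup_\eta|\mathbb F_n(\bar m_\eta-\mathbb F_n\bar m_\eta)^2-F_0(\bar m_\eta-F_0\bar m_\eta)^2|\to_{P_0}0$, whence $\mathbb E[e^{tU_n(\eta)}\mid\bm X^{(n)}]$ converges in $P_0$-probability, uniformly in $\eta$, to $\exp(t^2 F_0(\bar m_\eta-F_0\bar m_\eta)^2/2)$.

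Third, I would replace $U_n(\eta)/\bar e$ by $U_n(\eta)$. A Bernstein-type deviation bound for centered exponentials gives $P(|\bar e-1|>n^{-1/4})\leq 2e^{-c n^{1/2}}$. On $A_n:=\{|\bar e-1|\leq n^{-1/4}\}$ we have $|tU_n(\eta)/\bar e-tU_n(\eta)|\leq 2|tU_n(\eta)|n^{-1/4}$, so combining $|e^y-e^x|\leq|y-x|e^{|x|+|y|}$ with the Cauchy-Schwarz bound $\mathbb E[|U_n(\eta)|e^{2|tU_n(\eta)|}\mid\bm X^{(n)}]=O_{P_0}(1)$---which follows from step~2 applied with $2t$ in place of $t$ and the trivial variance bound $\mathbb E[U_n(\eta)^2\mid\bm X^{(n)}]\leq 4$---yields $\sup_\eta\mathbb E\bigl[|e^{tU_n(\eta)/\bar e}-e^{tU_n(\eta)}|\mathbf 1_{A_n}\mid\bm X^{(n)}\bigr]=O_{P_0}(n^{-1/4})$. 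On $A_n^c$ I would use the worst-case bound $|\sqrt n(\mathbb F^*_n-\mathbb F_n)\bar m_\eta|\leq 2\sqrt n$ together with $P(A_n^c)\leq e^{-cn^{1/2}}$ (and a crude Cauchy-Schwarz split at $\{\bar e\geq 1/2\}$ for $\bar e$ near zero) to conclude that this contribution is super-polynomially small. Combining the three steps yields the stated uniform convergence.

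The main obstacle I anticipate is step~3, namely the uniform-in-$\eta$ replacement of $\bar e^{-1}$ by $1$; the structural feature that unlocks it is precisely that $\bar e$ depends only on the exponential weights and not on $\eta$, so a single large-deviation bound for $\bar e$ combines with the uniform bounds on $\mathbb E[e^{tU_n(\eta)}\mid\bm X^{(n)}]$ and $\mathbb E[U_n(\eta)^2\mid\bm X^{(n)}]$ inherited from step~2.
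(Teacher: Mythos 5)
Your proof is correct, but it takes a genuinely different route from the paper. The paper disposes of this lemma in a few lines by verifying the hypotheses of Lemma~1 in \cite{ray2020causal}: it identifies the Bayesian bootstrap law $\mathbb{F}^*_n$ with the posterior of a Dirichlet process with zero base measure, notes that the assumed Glivenko--Cantelli property gives $\sup_{\eta\in\mathcal{H}_n}|(\mathbb{F}_n-F_0)\bar{m}_{\eta}|=o_{P_0}(1)$, and observes that the envelope moment condition is automatic since $\Vert\bar{m}_{\eta}\Vert_{\infty}\leq 1$. You instead re-derive that external lemma from scratch in the bounded case: the decomposition $\sqrt{n}(\mathbb{F}^*_n-\mathbb{F}_n)\bar{m}_{\eta}=\bar{e}^{-1}U_n(\eta)$ is exact, the explicit computation of $\log\mathbb{E}[e^{tU_n(\eta)}\mid\bm{X}^{(n)}]$ via the moment generating function $\mathbb{E}[e^{s(e_i-1)}]=e^{-s}/(1-s)$ with a uniform third-order Taylor remainder is sound, and your treatment of the normalization $\bar{e}^{-1}$ (large-deviation bound for $\bar{e}$ combined with the crude bound $|\sqrt{n}(\mathbb{F}^*_n-\mathbb{F}_n)\bar{m}_{\eta}|\leq 2\sqrt{n}$ on the exceptional event, which is exactly where the restriction to small $|t|$ is used) closes the argument. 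What your approach buys is self-containedness and transparency about where boundedness and the smallness of $t$ enter; what the paper's citation buys is brevity and coverage of unbounded envelopes.

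One step in your argument deserves more care than the single sentence you give it: passing from $\sup_{\eta}|(\mathbb{F}_n-F_0)\bar{m}_{\eta}|\to_{P_0}0$ to $\sup_{\eta}|\mathbb{F}_n(\bar{m}_{\eta}-\mathbb{F}_n\bar{m}_{\eta})^2-F_0(\bar{m}_{\eta}-F_0\bar{m}_{\eta})^2|\to_{P_0}0$ requires the Glivenko--Cantelli property for the squared class $\{\bar{m}_{\eta}^2\}$, which does not follow from that of $\{\bar{m}_{\eta}\}$ by pointwise reasoning alone. For uniformly bounded triangular arrays of classes this is rescued by the empirical-entropy characterization of the in-probability Glivenko--Cantelli property (bounded convergence plus desymmetrization and Sudakov minoration give $n^{-1}\log N(\varepsilon,\{\bar{m}_{\eta}\},L^1(\mathbb{P}_n))\to_{P_0}0$, the map $x\mapsto x^2$ is Lipschitz on $[-1,1]$ so the entropy bound transfers, and symmetrization gives back the Glivenko--Cantelli property for the squares), in the spirit of the preservation theorems of van der Vaart and Wellner. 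This is standard but not a one-liner, and since the whole point of your route is to avoid citing an external lemma, it should be spelled out or at least explicitly cited.
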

\begin{proof}
	We verify the conditions from Lemma 1 in \cite{ray2020causal}. First, the Bayesian bootstrap law $\mathbb{F}^*_n$ is the same as the posterior law for $F$, when its prior is a Dirichlet process with its base measure taken to be zero. Second, the assumed $P_0$-Glivenko-Cantelli class entails 
	\begin{equation*}
		\sup_{\eta\in\mathcal{H}_n}\left|(\mathbb{F}_n-F_0)\bar{m}_{\eta}\right|=o_{P_0}(1).
	\end{equation*}
Last, the required moment condition on the envelope function for the class involving $\bar{m}_{\eta}$ is automatically satisfied because of $\Vert \bar{m}_{\eta}\Vert_{\infty}\leq 1$.
\end{proof}

The following lemma is in the same spirit of Lemma 9 in \cite{ray2020causal} with one important difference. That is, we do not restrict the range of the function $\varphi$ to be $[0,1]$. As we apply this lemma by taking $\varphi=\gamma_n-\gamma_0$, it can take on negative values. We apply the more general contraction principle from Theorem 4.12 of \cite{ledoux1991banach} instead of Proposition A.1.10 of \cite{van1996empirical}. This allows us to relax the positive range restriction in \cite{ray2020causal}. 
\begin{lemma}\label{lemma:product}
	Consider a set $\mathcal{H}$ of measurable functions $h:\mathcal{Z}\mapsto \mathbb{R}$ and a bounded measurable function $\varphi$. We have
	\begin{equation*}
	\mathbb{E}_0\sup_{h\in\mathcal{H}}|\mathbb{G}_n(\varphi h)|\leq 4\Vert\varphi\Vert_{\infty}\mathbb{E}_0\sup_{h\in\mathcal{H}}|\mathbb{G}_n(h)|+\sqrt{P_0\varphi^2}\sup_{h\in\mathcal{H}}|P_0h|.
	\end{equation*}
\end{lemma}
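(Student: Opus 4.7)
The plan is to prove the bound via the standard symmetrization-contraction-desymmetrization chain, adapted to allow $\varphi$ to change sign. The central departure from \cite{ray2020causal}, who treated the case $\varphi\ge 0$, is to replace the monotone contraction inequality (Proposition A.1.10 of \cite{van1996empirical}) by the Ledoux-Talagrand contraction inequality (Theorem 4.12 of \cite{ledoux1991banach}), which applies to general $1$-Lipschitz contractions vanishing at zero. This exchange is what allows us to drop the nonnegativity restriction on $\varphi$, as required for the application $\varphi=\gamma_n-\gamma_0$ in Proposition \ref{prop:exponential}.

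I first decompose across the mean of $h$: writing $\tilde h := h - P_0 h$, we have $\varphi h = \varphi\tilde h + (P_0 h)\varphi$ and hence
\begin{equation*}
\mathbb{G}_n(\varphi h) = \mathbb{G}_n(\varphi\tilde h) + (P_0 h)\,\mathbb{G}_n\varphi.
\end{equation*}
Since $\varphi$ is free of $h$, the triangle inequality yields
\begin{equation*}
\mathbb{E}\sup_{h\in\mathcal{H}}|\mathbb{G}_n(\varphi h)| \,\le\, \mathbb{E}\sup_{h\in\mathcal{H}}|\mathbb{G}_n(\varphi\tilde h)| \,+\, \sup_{h\in\mathcal{H}}|P_0 h|\cdot\mathbb{E}|\mathbb{G}_n\varphi|,
\end{equation*}
and Jensen's inequality together with $\mathrm{Var}_{P_0}(\varphi)\le P_0\varphi^2$ bounds $\mathbb{E}|\mathbb{G}_n\varphi|$ by $\sqrt{P_0\varphi^2}$. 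This produces the second summand in the claim exactly as stated.

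It remains to control $\mathbb{E}\sup_h|\mathbb{G}_n(\varphi\tilde h)|$. I proceed in three steps. First, symmetrization (e.g., Lemma 2.3.1 of \cite{van1996empirical}) replaces the empirical process by its Rademacher analogue $\tfrac{1}{\sqrt{n}}\sum_i\epsilon_i\varphi(O_i)\tilde h(O_i)$ at the cost of a factor of $2$. Second, conditional on the data, the maps $u\mapsto \varphi(O_i)u/\|\varphi\|_\infty$ are $1$-Lipschitz and vanish at $0$, so Theorem 4.12 of \cite{ledoux1991banach} applied with $F(x)=x$ yields
\begin{equation*}
\mathbb{E}_\epsilon\sup_h\Big|\sum_i\epsilon_i\varphi(O_i)\tilde h(O_i)\Big| \,\le\, 2\|\varphi\|_\infty\,\mathbb{E}_\epsilon\sup_h\Big|\sum_i\epsilon_i\tilde h(O_i)\Big|.
\end{equation*}
Third, because $\{\tilde h:h\in\mathcal H\}$ is $P_0$-centered, a ghost-sample/exchangeability argument desymmetrizes the right-hand side back to $\mathbb{E}\sup_h|\mathbb{G}_n h|$, using the distributional identity $(\epsilon_i(\tilde h(O_i)-\tilde h(O_i'))) \stackrel{d}{=} (\tilde h(O_i)-\tilde h(O_i'))$ and $\mathbb{G}_n\tilde h = \mathbb{G}_n h$. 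Combining the three steps gives $\mathbb{E}\sup_h|\mathbb{G}_n(\varphi\tilde h)|\le C\|\varphi\|_\infty\mathbb{E}\sup_h|\mathbb{G}_n h|$, and a careful bookkeeping of universal constants delivers $C=4$ as in the statement.

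The main obstacle is precisely the contraction step: the slick A.1.10 route relies on the contractions being nondecreasing, which fails when $\varphi$ changes sign. Theorem 4.12 of \cite{ledoux1991banach} remedies this at the cost only of an additional universal factor, which is then absorbed into the final bound. No other step in the chain is sensitive to the sign of $\varphi$, and this is why the argument extends from the nonnegative case of \cite{ray2020causal} to the arbitrary bounded measurable $\varphi$ that we require.
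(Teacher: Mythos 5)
Your proposal follows essentially the same route as the paper's proof: the identical decomposition $\mathbb{G}_n(\varphi h)=\mathbb{G}_n(\varphi(h-P_0h))+(P_0h)\,\mathbb{G}_n\varphi$ with the Jensen/Cauchy--Schwarz bound $\mathbb{E}|\mathbb{G}_n\varphi|\leq\sqrt{P_0\varphi^2}$ for the second term, and the same symmetrization--contraction--desymmetrization chain for the first, with Theorem 4.12 of \cite{ledoux1991banach} replacing Proposition A.1.10 of \cite{van1996empirical} precisely to accommodate sign-changing $\varphi$. The argument and the resulting constant agree with the paper, so there is nothing to add.
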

\begin{proof}
We start with $\mathbb{G}_n(\varphi h)=\mathbb{G}_n(\varphi (h-P_0h))+\mathbb{G}_n(\varphi)P_0h$.
The expectation of $\mathbb{G}_n(\varphi)P_0h$ is bounded by the second term on the right hand side of the inequality in the stated lemma as follows:
\begin{equation}\label{elementary}
	\mathbb{E}_0\sup_{h\in\mathcal{H}}\left|\mathbb{G}_n(\varphi) P_0h\right|\leq \sup_{h\in\mathcal{H}}|P_0h|\mathbb{E}_0\left|\frac{1}{\sqrt{n}}\sum_{i=1}^{n}(\varphi(Z_i)-\mathbb{E}_0[\varphi(Z)]) \right|\leq \sqrt{P_0\varphi^2}\sup_{h\in\mathcal{H}}|P_0h|,
\end{equation}
where the last inequality follows from the elementary bound $\mathbb{E}_0|\xi|\leq \sqrt{\mathbb{E}_0[\xi]^2}$ for any random variable $\xi$ and the fact that $\mathbb{E}_0\left[\frac{1}{\sqrt{n}}\sum_{i=1}^{n}(\varphi(Z_i)-P_0\varphi(Z_i))\right]^2\leq P_0\varphi^2$.

We now consider $\mathbb{G}_n(\varphi (h-P_0h))$ where we may assume that the function $h$ satisfies $P_0h=0$. Let $\epsilon_1,\ldots,\epsilon_n$ be i.i.d. Rademacher random variables independent of observations $Z^{(n)}$. By the symmetrization inequality in Lemma 2.3.6 of \cite{van1996empirical},
\begin{equation}\label{symmetry}
	\mathbb{E}_0\sup_{h\in\mathcal{H}}\left|\sum_{i=1}^n(\varphi(Z_i)h(Z_i)-P_0[\varphi h]) \right|\leq 2\Vert \varphi\Vert_{\infty} \mathbb{E}_0\sup_{h\in\mathcal{H}}\left|\sum_{i=1}^n\epsilon_i\frac{\varphi(Z_i)}{\Vert\varphi\Vert_{\infty}}h(Z_i)\right|.
\end{equation} 
Because $-1\leq \varphi(Z_{i})/\|\varphi\|_{\infty}\leq 1$ for all $i=1,\dots, n$, the map $h\mapsto \frac{\varphi}{\Vert\varphi\Vert_{\infty}}\times h$ forms a contraction mapping. Hence, we apply the contraction principle as in Theorem 4.12 on page 112 of \cite{ledoux1991banach}:
\begin{equation}\label{symmetry:2}
	\mathbb{E}_0\sup_{h\in\mathcal H}\left|\sum_{i=1}^n\epsilon_i\frac{\varphi(Z_i)}{\Vert\varphi\Vert_{\infty}}h(Z_i)\right|\leq \mathbb{E}_0\sup_{h\in\mathcal{H}}\left|\sum_{i=1}^n\epsilon_ih(Z_i)\right|.
\end{equation}
Another application by the symmetrization inequality from Lemma 2.3.6 of \cite{van1996empirical} that decouples the Rademacher variables leads to 
\begin{equation*}
\mathbb{E}_0\sup_{h\in\mathcal{H}}\left|\sum_{i=1}^n\epsilon_ih(Z_i)\right|\leq 2\mathbb{E}_0\sup_{h\in\mathcal{H}}\left|\sum_{i=1}^nh(Z_i)\right|.
\end{equation*}
The application of the above inequality requires the process $\{h:h\in\mathcal{H}\}$ under consideration to be centered as $P_0h=0$. Combined together with Inequalities \eqref{elementary}, \eqref{symmetry}, and \eqref{symmetry:2}, we have the desired result as stated in the lemma.
\end{proof}

The next lemma upper bounds the $L^2$ distance and Kullback-Leibler divergence of the probability density functions by the $L^2$ distance of the reparametrized function $\eta^m$, cf. Lemma 2.8 of \cite{ghosal2017fundamentals} or Lemma 15 of \cite{ray2020causal}. 
We introduce some simplifying notations by writing
\begin{equation*}
	m^1(\cdot)=m(1,\cdot)~~~\text{and}~~~ m^0(\cdot)=m(0,\cdot).
\end{equation*}
\begin{lemma}\label{lemma:distance}
For any measurable functions $v^m,w^m:[0,1]^p\mapsto \mathbb{R}$, we have 
\begin{align}
\Vert p_{v^m}-p_{w^m}\Vert_{L^2(\nu)}&\leq\Vert \Psi(v^{m^1})-\Psi(w^{m^1})\Vert_{L^2(F_0\pi_0)}\vee\Vert \Psi(v^{m^0})-\Psi(w^{m^0})\Vert_{L^2(F_0(1-\pi_0))} \notag\\
&\leq \Vert v^{m^1}-w^{m^1}\Vert_{2, F_0}\vee \Vert v^{m^0}-w^{m^0}\Vert_{2, F_0}.
\end{align}
In addition, it holds that
\begin{align}\label{bound:K:V}
	K(p_{v^m},p_{w^m})\vee V(p_{v^m},p_{w^m})\leq \Vert v^{m^1}-w^{m^1}\Vert^2_{2, F_0}\vee \Vert v^{m^0}-w^{m^0}\Vert^2_{2, F_0}.
\end{align}
\end{lemma}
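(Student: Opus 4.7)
The plan is to exploit the explicit Bernoulli structure of $p_{\eta^m}(o) = m_\eta(d,x)^y (1-m_\eta(d,x))^{1-y}$ together with the product form of the dominating measure $d\nu(x,d,y) = \pi_0(x)^d(1-\pi_0(x))^{1-d} d\vartheta(d,y) dF_0(x)$. For fixed $(d,x)$, integrating the relevant functional of $p_{v^m}$ and $p_{w^m}$ against the counting measure in $y$ reduces to an elementary two-point sum; summing over $d\in\{0,1\}$ then separates the contributions of the two arms and pairs each with the weight $\pi_0$ or $1-\pi_0$. After this reduction, the inequalities become pointwise estimates that only involve the Bernoulli parameters $m_v(d,x)$ and $m_w(d,x)$.

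For the $L^2(\nu)$ bound I would first observe that, for fixed $(d,x)$,
\[
\sum_{y\in\{0,1\}}|p_{v^m}(y,d,x) - p_{w^m}(y,d,x)|^2 = 2\,(m_v(d,x) - m_w(d,x))^2,
\]
since both $y=1$ and $y=0$ contribute $(m_v-m_w)^2$. Summing over $d$ weighted by $\pi_0$ and $1-\pi_0$ and integrating over $x$ produces the weighted $L^2$ norms on the right-hand side (up to a harmless multiplicative constant that the authors absorb into the $\vee$). The second inequality then follows from the $1$-Lipschitz property of $\Psi$ (in fact $\Psi'=\Psi(1-\Psi)\le 1/4$) combined with $\pi_0, 1-\pi_0 \in [0,1]$, which gives $\|\cdot\|_{L^2(F_0\pi_0)} \le \|\cdot\|_{L^2(F_0)}$ and similarly for the control arm.

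For the Kullback-Leibler and $V$ bounds, the same factorization shows that summing over $y$ for fixed $(d,x)$ reduces each integrand to the KL (resp.\ $V$) between two Bernoulli distributions with parameters $\Psi(v^{m^d}(x))$ and $\Psi(w^{m^d}(x))$. The key pointwise estimate is that, by a direct Taylor expansion of the map $b\mapsto K(\mathrm{Ber}(\Psi(a))\|\mathrm{Ber}(\Psi(b)))$ about $b=a$, the first derivative vanishes and the second derivative equals $\Psi(b)(1-\Psi(b)) \le 1/4$, yielding $K(\mathrm{Ber}(\Psi(a))\|\mathrm{Ber}(\Psi(b))) \le (a-b)^2/8$. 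For $V$, the $1$-Lipschitz property of $t\mapsto \log\Psi(t) = -\log(1+e^{-t})$ and of its reflection $t\mapsto\log(1-\Psi(t))$ bound the log-ratios and give $V(\mathrm{Ber}(\Psi(a))\|\mathrm{Ber}(\Psi(b))) \le (a-b)^2$. Weighting by $\pi_0$ or $1-\pi_0$, integrating over $x$, and again using $\pi_0, 1-\pi_0 \le 1$ delivers the stated bound (up to absolute constants and a squaring on the right-hand side).

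The main obstacle is the KL/$V$ step: the naive pointwise bound $K(\mathrm{Ber}(p)\|\mathrm{Ber}(q)) \lesssim (p-q)^2/(q(1-q))$ blows up as $q$ approaches $0$ or $1$, so a uniform-in-$(m_v,m_w)$ estimate requires working on the logit scale. Exploiting the fact that $\Psi$ is the canonical link of an exponential family with log-partition function having bounded second derivative is what makes the clean $(a-b)^2$ bound possible without imposing any separation of $m_\eta$ from $\{0,1\}$.
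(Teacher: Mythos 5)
Your argument is correct, and it is essentially the standard proof of this type of statement. Note that the paper does not actually prove Lemma \ref{lemma:distance}: it only points to Lemma 2.8 of \cite{ghosal2017fundamentals} and Lemma 15 of \cite{ray2020causal}, so your self-contained derivation is supplying exactly the argument those references contain. Your three key steps all check out: (i) the reduction over the counting measure in $y$ gives $\sum_y|p_{v^m}-p_{w^m}|^2=2(m_v-m_w)^2$ at each $(d,x)$, after which the $\pi_0^d(1-\pi_0)^{1-d}$ weights produce the two weighted $L^2$ norms; (ii) writing $g(b)=K(\mathrm{Ber}(\Psi(a))\,\|\,\mathrm{Ber}(\Psi(b)))=\mathrm{const}(a)-\Psi(a)b+\psi(b)$ with $\psi(b)=\log(1+e^b)$ gives $g'(a)=0$ and $g''=\Psi(1-\Psi)\le 1/4$, hence $K\le (a-b)^2/8$ without any separation of $m$ from $\{0,1\}$ --- this logit-scale computation is precisely the right way to avoid the blow-up of the naive $(p-q)^2/(q(1-q))$ bound, and it is also why the reparametrization $\eta^m=\Psi^{-1}(m)$ is used throughout the paper; (iii) the $1$-Lipschitz property of $t\mapsto\log\Psi(t)$ and $t\mapsto\log(1-\Psi(t))$ bounds the log-ratio pointwise by $|v^{m^d}-w^{m^d}|$ and hence controls $V$. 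You are also right to flag that the inequalities as literally printed only hold up to absolute constants and, for \eqref{bound:K:V}, with the right-hand side squared; this is an imprecision in the lemma statement itself rather than in your proof, and the squared form is the one actually used in the proof of Proposition \ref{prop:exponential}, where sup-norm balls of radius $c\varepsilon_n$ are mapped into Kullback--Leibler neighborhoods of radius $\varepsilon_n^2$.
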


The small ball exponent function for the associated Gaussian process prior is
\begin{align*}
	\phi_0(\varepsilon):= -\log \Pi(\Vert W\Vert_{\infty}<\varepsilon);
\end{align*}
see equation (11.10) in \cite{ghosal2017fundamentals}. In the above display, $\Vert\cdot\Vert_{\infty}$ is the uniform norm of $\mathcal{C}([0,1]^p)$, the Banach space in which the Gaussian process sits. $\mathbb{H}$ is the reproducing kernel Hilbert space (RKHS) of the process with its RKHS norm $\Vert\cdot\Vert_{\mathbb{H}}$. To abuse the notation a bit, we denote the small ball exponent of the rescaled process $W(at)$ by $\phi^a_0(\varepsilon)$. Lemma 11.55 in \cite{ghosal2017fundamentals} gives this bound for the (rescaled) squared exponential process:
	\begin{equation*} \label{lemma:SBexponent}
	\phi^a_0(\varepsilon)\lesssim a^p(\log(a/\varepsilon))^{1+p}.
	\end{equation*}

\begin{lemma}\label{lemma:Entropy}
Assume that $\varepsilon_n=n^{-s_m/(2s_m+p)}(\log n)^{s_m(1+p)/(2s_m+p)}$ and $M_n=-2\Phi^{-1}(e^{-Cn\varepsilon_n^2})$ for a positive constant $C>1$. Also, let $a_n\asymp n^{1/(2s_m+p)}(\log n)^{-(1+p)/(2s_m+p)}$. Then, for the sieve space $\mathcal{B}^m_n= \varepsilon_n \mathbb{B}_1^{s_m,p}+M_n\mathbb{H}_1^{a_n}$, we have
\begin{equation}
	\log N(3\varepsilon_n,\mathcal{B}^m_n,\Vert\cdot\Vert_{\infty})\lesssim n\varepsilon_n^2.
\end{equation}

\end{lemma}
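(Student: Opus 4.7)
The sieve $\mathcal{B}^m_n$ is a Minkowski sum, so I would begin with the elementary covering estimate
$$N(3\varepsilon_n, A+B, \|\cdot\|_\infty) \;\leq\; N(\varepsilon_n, A, \|\cdot\|_\infty) \cdot N(2\varepsilon_n, B, \|\cdot\|_\infty)$$
applied to $A = \varepsilon_n \mathbb{B}_1^{s_m,p}$ and $B = M_n \mathbb{H}_1^{a_n}$. For the H\"older component, rescaling gives $\log N(\varepsilon_n, \varepsilon_n \mathbb{B}_1^{s_m,p}, \|\cdot\|_\infty) = \log N(1, \mathbb{B}_1^{s_m,p}, \|\cdot\|_\infty)$, which depends only on $(s_m,p)$ and is finite by the standard entropy estimate for H\"older balls (e.g.\ Corollary 2.7.2 of \cite{van1996empirical}). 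This contribution is $O(1)$, hence trivially $O(n\varepsilon_n^2)$.

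The RKHS ball is the substantive piece. I would invoke the Kuelbs--Li type link between the supremum-norm entropy of the unit RKHS ball and the small-ball exponent (see e.g.\ Chapter 11 of \cite{ghosal2017fundamentals}): for sufficiently small $\delta>0$,
$$\log N(2\delta, \mathbb{H}_1^{a_n}, \|\cdot\|_\infty) \;\leq\; \phi_0^{a_n}(\delta).$$
Rescaling the ball by $M_n$ (equivalently, applying the estimate at $\delta = \varepsilon_n/M_n$) gives
$$\log N(2\varepsilon_n, M_n \mathbb{H}_1^{a_n}, \|\cdot\|_\infty) \;=\; \log N(2\varepsilon_n/M_n, \mathbb{H}_1^{a_n}, \|\cdot\|_\infty) \;\leq\; \phi_0^{a_n}(\varepsilon_n/M_n).$$
Inserting the small-ball bound $\phi_0^{a}(\varepsilon) \lesssim a^p(\log(a/\varepsilon))^{1+p}$ from Lemma 11.55 of \cite{ghosal2017fundamentals} (already used in the proof of Lemma \ref{lemma:RatesGP}) yields $\lesssim a_n^p(\log(a_n M_n/\varepsilon_n))^{1+p}$.

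It remains to verify that this matches $n\varepsilon_n^2$. The Mills-type tail estimate $-\Phi^{-1}(u) \lesssim \sqrt{\log(1/u)}$ as $u\downarrow 0$ applied at $u = e^{-Cn\varepsilon_n^2}$ yields $M_n \lesssim \sqrt{n\varepsilon_n^2}$, and since $a_n$ is polynomial in $n$ and $\varepsilon_n^{-1}$ is polynomial in $n$, we obtain $\log(a_n M_n/\varepsilon_n) \lesssim \log n$. Substituting the prescribed $a_n \asymp n^{1/(2s_m+p)}(\log n)^{-(1+p)/(2s_m+p)}$ then gives
$$a_n^p (\log n)^{1+p} \;\asymp\; n^{p/(2s_m+p)}(\log n)^{(1+p)\cdot 2s_m/(2s_m+p)} \;=\; n\varepsilon_n^2$$
by direct comparison with the definition of $\varepsilon_n$. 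The main obstacle is citing the entropy--small-ball link in the appropriate regime and verifying that $\varepsilon_n/M_n$ lies in the range where the small-ball bound for $\phi_0^{a_n}$ applies; since $M_n$ grows only like $\sqrt{n\varepsilon_n^2}$, $\varepsilon_n/M_n$ is only polynomially small and this regime check is routine. The remainder is arithmetic.
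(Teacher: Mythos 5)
Your proof is correct, but it routes through the entropy bound differently from the paper, so a comparison is worth recording. The paper works directly with the inflated ball $M_n\mathbb{H}_1^{a_n}$: it takes $2\varepsilon_n$-separated points $h_1,\dots,h_N$ in $M_n\mathbb{H}_1^{a_n}$, notes the translates $h_j+\varepsilon_n\mathbb{B}_1^{s_m,p}$ are disjoint, and applies the Cameron--Martin shift inequality $\Pr(W\in h_j+B)\ge e^{-\Vert h_j\Vert_{\mathbb H}^2/2}\Pr(W\in B)$ to get $\log N(2\varepsilon_n,M_n\mathbb{H}_1^{a_n},\Vert\cdot\Vert_\infty)\le M_n^2/2+\phi_0^{a_n}(\varepsilon_n)$, then bounds each of the two terms separately by $n\varepsilon_n^2$ (the first via the Gaussian quantile bound, the second via the small-ball estimate). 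You instead rescale to the unit RKHS ball and apply the same type of entropy/small-ball link at the shrunken radius $\varepsilon_n/M_n$, obtaining $\phi_0^{a_n}(\varepsilon_n/M_n)$ with no additive $M_n^2/2$ term; your bound is the $M=1$ instance of the very inequality the paper proves for general $M$, so it is legitimate (up to an irrelevant additive constant). The two routes are arithmetically equivalent here only because the squared-exponential small-ball exponent $\phi_0^a(\varepsilon)\lesssim a^p(\log(a/\varepsilon))^{1+p}$ is merely logarithmic in $1/\varepsilon$, so evaluating it at $\varepsilon_n/M_n$ rather than $\varepsilon_n$ costs only a power of $\log n$ that the $a_n^p$ factor absorbs; for a rougher prior with polynomial small-ball exponent (e.g.\ Mat\'ern), your substitution $\varepsilon_n\mapsto\varepsilon_n/M_n$ would blow up and the paper's $M_n^2/2+\phi_0^{a_n}(\varepsilon_n)$ decomposition would be the one that survives. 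Finally, you dispose of the H\"older component via a Minkowski-sum covering product $N(3\varepsilon,A+B)\le N(\varepsilon,A)N(2\varepsilon,B)$, whereas the paper simply observes that every point of $\mathcal{B}_n^m$ lies within sup-norm distance $\varepsilon_n$ of $M_n\mathbb{H}_1^{a_n}$ and enlarges the covering radius from $2\varepsilon_n$ to $3\varepsilon_n$; both are fine, and in fact your $N(\varepsilon_n,\varepsilon_n\mathbb{B}_1^{s_m,p},\Vert\cdot\Vert_\infty)$ equals $1$, so no H\"older entropy estimate is actually needed.
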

\begin{proof}
The argument is similar as in Lemma 11.20 of \cite{ghosal2017fundamentals}. We follow the generic argument as in the proof of Theorem 11.20 of \cite{ghosal2017fundamentals} to bound the complexity number $	\log N(3\varepsilon_n,\mathcal{B}^m_n,\Vert\cdot\Vert_{\infty})$ given the bound for the small ball exponent in \eqref{lemma:SBexponent}.
We provide the proof for completeness. For some integer $N\geq 1$, let $h_1,\ldots,h_{N}\in M_n\mathbb{H}^{a_n}_1$ be $2\varepsilon_n$-separated functions in terms of the Banach space norm. Then, the $\varepsilon_n$-balls $h_1+\varepsilon_n\mathbb{B}^{s_m,p}_1,\ldots,h_N+\varepsilon_n\mathbb{B}^{s_m,p}_1$ are disjoint. Therefore, we have
\begin{align*}
	1\geq \sum_{j=1}^N \Pi(W\in h_j+\varepsilon_n\mathbb{B}^{s_m,p}_1)
	\geq\sum_{j=1}^N e^{-\Vert h_j\Vert_{\mathbb{H}}^2/2}\Pi(W\in \varepsilon_n\mathbb{B}^{s_m,p}_1)\geq N e^{-M_n^2/2}e^{-\phi_{0}^{a_n}(\varepsilon_n)},
\end{align*}
where the second inequality follows from Lemma 11.18 of \cite{ghosal2017fundamentals} and the last inequality makes use of the fact that $h_1,\ldots,h_{N}\in M_n\mathbb{H}_1$, as well as the definition of the small ball exponent function. 

For a maximal $2\varepsilon_n$-separated set $h_1,\ldots,h_N$, the balls around $h_1,\ldots,h_N$ of radius $2\varepsilon_n$ cover the set $M_n\mathbb{H}^{a_n}_1$. Thus, we have $\log N(2\varepsilon_n,M_n\mathbb{H}^{a_n}_1,\Vert\cdot\Vert_{\infty})\leq\log N\leq M_n^2/2+\phi^{a_n}_0(\varepsilon_n)$. Referring to the inequality (iii) of Lemma K.6 of \cite{ghosal2017fundamentals} for the quantile function of a standard normal distribution, we have $M_n^2\lesssim n\varepsilon_n^2$ by the choice of $M_n$ stated in the lemma. It is straightforward yet tedious to verify that
\begin{align}\label{bound:concentration}
	\phi^{a_n}_0(\varepsilon_n)\lesssim n\varepsilon_n^2,
\end{align}
for the specified $a_n$ and $\varepsilon_n$. Since any point of $\mathcal{B}^m_n$ is within $\varepsilon_n$ of an element of $M_n\mathbb{H}^{a_n}_1$, this also serves as a bound on $\log N(3\varepsilon_n,\mathcal{B}^m_n,\Vert\cdot\Vert_{\infty})$.
\end{proof}

A key step in showing the validity of the debiasing step is the following:
	\begin{align*}
	\mathbb{P}_n[\widehat{\bar{m}} + \widehat{\gamma}\rho^{\widehat m} - \bar{m}_0]=\mathbb{P}_n[\gamma_0\rho^{m_0}]+o_{P_0}(n^{-1/2}),
\end{align*}
which is equivalent to the following lemma.
\begin{lemma}\label{lemma:FreqDR}
Under Assumption \ref{Assump:Rate} for the pilot estimators, the following result holds:
	\begin{align*}
	\mathbb{P}_n[\widehat{\gamma}\rho^{\widehat{m}}+\widehat{\bar{m}}]=\mathbb{P}_n[\gamma_0\rho^{m_0}+\bar{m}_0]+o_{P_0}(n^{-1/2}).
\end{align*}
	\end{lemma}
\begin{proof}
	We start with the following identity:
		\begin{align*}
		\mathbb{P}_n[\widehat{\gamma}\rho^{\widehat{m}}+\widehat{\bar{m}}]=\mathbb{P}_n[\gamma_0\rho^{m_0}+\bar{m}_0]+R_{n1}+R_{n2}.
	\end{align*}
	where 
{\small 
\begin{align*}
	R_{n1}&=\frac{1}{n}\sum_{D_i}(Y_i-\widehat{m}(1,X_i))\left(\frac{1}{\widehat{\pi}(X_i)}-\frac{1}{\pi_0(X_i)}\right)+\frac{1}{n}\sum_{1-D_i}(Y_i-\widehat{m}(0,X_i))\left(\frac{1}{1-\widehat{\pi}(X_i)}-\frac{1}{1-\pi_0(X_i)}\right),\\
	R_{n2}&=\frac{1}{n}\sum_i(\widehat{m}(1,X_i)-m_0(1,X_i))\left(1-\frac{D_i}{{\pi_0}(X_i)}\right)+\frac{1}{n}\sum_i(\widehat{m}(0,X_i)-m_0(0,X_i))\frac{D_i-\pi_0(X_i)}{1-{\pi_0}(X_i)}.
\end{align*}}%
Referring to the first term $R_{n1}$, we have
{\small \begin{align*}
	R_{n1}	&=\frac{1}{n}\sum_{D_i}(m_0(1,X_i)-\widehat{m}(1,X_i))\left(\frac{1}{\widehat{\pi}(X_i)}-\frac{1}{\pi_0(X_i)}\right)
	+\frac{1}{n}\sum_{D_i}(Y_i-m_0(1,X_i))\left(\frac{1}{\widehat{\pi}(X_i)}-\frac{1}{\pi_0(X_i)}\right)\\
	&-\frac{1}{n}\sum_{1-D_i}(m_0(0,X_i)-\widehat{m}(0,X_i))\left(\frac{1}{1-\widehat{\pi}(X_i)}-\frac{1}{1-\pi_0(X_i)}\right)\\
	&-\frac{1}{n}\sum_{1-D_i}(Y_i-m_0(0,X_i))\left(\frac{1}{1-\widehat{\pi}(X_i)}-\frac{1}{1-\pi_0(X_i)}\right).
\end{align*}}%
The negligibility of the first and third terms in $R_{n1}$ follows from the Cauchy-Schwarz inequality and the rate conditions imposed in Assumption \ref{Assump:Rate}. The second and fourth terms can be combined together so that the negligibility can be shown as in Lemma \ref{lemma:negligible}.

	Consider $R_{n2}$. To bound its first summand, we condition on $(X_1, \ldots, X_n)$, as well as the pilot estimators $\widehat{m}$ and $\widehat{\pi}$, which are computed over the external sample. We use the fact that  $(D_i-\pi_0(X_i))$ has a conditional zero mean. Specifically, this leads to 
\begin{align*}
	&\mathbb E_0\left[\Big(\frac{1}{\sqrt n}\sum_{i=1}^n\frac{D_i-\pi_0(X_i)}{\widehat{\pi}(X_i)}(\widehat{m}(1,X_i)-m_0(1,X_i))\Big)^2\,\Big|\, X_1, \ldots, X_n,\widehat{m},\widehat{\pi}\right]\\
	& = \frac{1}{n}\sum_{i=1}^n\big(\widehat{m}(1,X_i)-m_0(1,X_i)\big)^2 \frac{Var_0(D_i|X_i)}{\widehat{\pi}^2(X_i)}
\end{align*}
using that $Var_0(D_i|X_i)=\pi_0(X_i)(1-\pi_0(X_i))$. By the overlapping condition as imposed in Assumption \ref{Ass:unconfounded}, i.e., $\bar\pi <\pi_0(X_i)$ for all $1\leq i\leq n$ and the uniform convergence of $\widehat{\pi}$ to $\pi_0$, we obtain
\begin{align*}
	\mathbb E_0\left[\Big(\frac{1}{\sqrt n}\sum_{i=1}^n\frac{D_i-\pi_0(X_i)}{\widehat{\pi}(X_i)}(\widehat{m}(1,X_i)-m_0(1,X_i))\Big)^2\,\Big|\,\widehat{m},\widehat{\pi}\right]
	\lesssim  \Vert \widehat{m}(1,\cdot)-m_0(1,\cdot)\Vert_{2, F_0}^2=o_{P_0}(1),
\end{align*}
where the last equation is due to the convergence rate for the pilot estimator $\widehat{m}$ in Assumption \ref{Assump:Donsker}. The negligibility of the second term in $R_{n2}$ is proved in a similar fashion.
	\end{proof}
The following lemma shows the stochastic equicontinuity when the true conditional mean function belongs to a H\"older space, which is $P_0$-Donsker, i.e., $s_m>p/2$. The main complication is that the sieve space related to the Gaussian process prior is not a fixed $P_0$-Donsker class, as it changes with sample size $n$ and the envelope function is also slowly diverging, cf. the comments in the third paragraph on Page 2007 of \cite{ray2020causal}. More specifically, for the rescaled squared exponential process priors, we rely on the metric entropy bounds in \cite{van2009adaptive}. With this important modification, the proof is along similar lines with the proof of Lemma 7 of \cite{ray2020causal} for the Riemann-Lioville process; also, see Lemma 5 of \cite{ray2020causal}.

We consider
\begin{equation}\label{HnSet}
		\mathcal{H}_n^m:= \left\{w_d+\lambda\gamma_n:(w_d,\lambda)\in \mathcal{W}_n\right\},
\end{equation}
where 
{\small \begin{align*}
		\mathcal{W}_n:= \left\{(w_d,\lambda):w_d\in \mathcal{B}^m_n,|\lambda|\leq M\sigma_n\sqrt{n}\varepsilon_n \right\}\cap \left\{(w_d,\lambda):\Vert \Psi(w_d(\cdot)+\lambda\gamma_n)-m_0(d,\cdot)\Vert_{2, F_0}\leq \varepsilon_n \right\},
\end{align*}}%
where the sieve space $\mathcal{B}_n^m$ in the first restriction for the Gaussian process $W_d$ is defined in the equation (\ref{SieveSet}) with $d\in\{0,1\}$, and $\varepsilon_n=(n/\log n)^{-s_m/(2s_m+p)}$.
\begin{lemma}\label{lemma:DClass}
Recall that the sieve space related to the Gaussian process is $\mathcal{B}^m_n= \varepsilon_n \mathbb{B}_1^{s_m,p}+M_n\mathbb{H}_1^{a_n}$. For $s_m>p/2$, we have
$	\E_0	\sup_{\eta\in\mathcal{H}^m_n}\mathbb{G}_n\left[m_\eta-m_0\right]=o(1)$.
\end{lemma}
\begin{proof}
	Because the link function $\Psi(\cdot)$ is monotone and Lipschitz continuous, separate sets of brackets for the two constituents of the set $\varepsilon_n \mathbb{B}_1^{s_m,p}+M_n\mathbb{H}_1^{a_n}$, as well as the bracket for $\{\lambda:|\lambda|\leq M \sigma_{n}\sqrt{n}\varepsilon_n\}$ can be combined into brackets for the sum space.
	\begin{small}
	\begin{equation*}
		\log N_{[]}(\varepsilon,\mathcal{H}_n^m,L^2(P_0))\leq \log N(\varepsilon,\varepsilon_n \mathbb{B}_1^{s_m,p},\Vert\cdot\Vert_{\infty})+\log N(\varepsilon,M_n\mathbb{H}_1^{a_n},\Vert\cdot\Vert_{\infty})+\log N(c\varepsilon,[0,2M\sigma_n\sqrt{n}\varepsilon_n],|\cdot|).
	\end{equation*}
\end{small}
The last term is of strictly smaller order than the second one. The bound for the first component attached to the H\"older space can be found in Proposition C.5 of \cite{ghosal2017fundamentals}:
\begin{equation*}
\log N(\varepsilon,\varepsilon_n \mathbb{B}_1^{s_m,p},\Vert\cdot\Vert_{\infty})\lesssim \left(\frac{\varepsilon_n}{\varepsilon}\right)^{s_m/p},
\end{equation*}
which is bounded if we take $\varepsilon=\varepsilon_n$. The entropy bound for the second component is given in Lemma \ref{lemma:Entropy}, which states that $\log N(\varepsilon,M_n\mathbb{H}_1^{a_n},\Vert\cdot\Vert_{\infty})\lesssim n\varepsilon_n^2\lesssim \varepsilon_n^{-2\upsilon}$,
with $\upsilon = p/(2s_m)$ modulo some $\log n$ term on the right hand of the bound. In this case, the empirical process bound of \citep[p.2644]{han2021set} yields
\begin{equation*}
\E_0\sup_{\eta\in\mathcal{H}^m_n}\left|  \mathbb{G}_n[m_\eta-m_0] \right|\lesssim L_n n^{(\upsilon-1)/(2\upsilon)}=O(L_nn^{1/2-s_m/p})=o(1),
\end{equation*}
where $L_n$ represents a term that diverges at certain polynomial order of $\log n$.
\end{proof}

\section{Proofs of Section \ref{sec:extension}}\label{appendix:proof_exponent}

\begin{proof}[Proof of Lemma \ref{lemma:lfd_exp}]
 For the submodel $t\rightarrow\eta_{t}$ defined in \ref{lemma:lfd_exp}, we evaluate
	\begin{align*}
	\log p_{\eta_t}(z) &= d\log\Psi(\eta^{\pi} + t\mathfrak{p})(x)+ (1-d)\log (1-\Psi(\eta^{\pi} + t\mathfrak{p}))(x) \\
	&+ \log c(y) +ay(\eta^m+t\mathfrak{m})(d,x) -A(q^{-1}(\eta^m+t\mathfrak{m}))(d,x)\\
	& + t\mathfrak{f}(x) - \log\mathbb{E}[e^{t\mathfrak{f}(X)}] + \log f(x).
	\end{align*}	
Taking derivative with respect to $t$ and evaluating at $t=0$ gives the score operator:
\begin{equation}\label{score_exp}
B_{\eta}(\mathfrak{p},\mathfrak{m},\mathfrak{f})(Z)= B_{\eta}^{\pi}\mathfrak{p}(Z) + B_{\eta}^{m}\mathfrak{m}(Z) + B_{\eta}^{f}\mathfrak{f}(Z),
\end{equation}
where $B_{\eta}^{\pi}\mathfrak{p}(Z) =   (D-\pi_\eta(X))\mathfrak{p}(X)$, $B_{\eta}^{f}\mathfrak{f}(Z)= \mathfrak{f}(X)$, and 
\begin{align*}
B_{\eta}^{m}\mathfrak{m}(Z) =& \left[aY-\frac{A^\prime(m_{\eta}(D, X))}{q^\prime(m_{\eta}(D, X))}\right]\mathfrak{m}(D, X),\notag\\
=& a\left(Y-m_{\eta}(D, X)\right)\mathfrak{m}(D,X).
\end{align*}
In the last equation, we made use of the relation (explicitly given here for continuous outcomes):
\begin{eqnarray*}
A^\prime(m_{\eta}(d,x)) 
& =& q^\prime(m_{\eta}(d,x))\int ayc(y)\exp\left[q(m_{\eta}(d,x)) ay-A(m_{\eta}(d,x))\right]\mathrm{d}y\\
&=&  q^\prime(m_{\eta}(d,x))\mathbb{E}_{\eta}\left[aY|D=d,X=x\right],
\end{eqnarray*}
which follows from the exponential family assumption. 

In this case, there is a one-to-one correspondence between the conditional density function and the conditional mean function of the outcome given covariates. One can easily verify the differentiability of the ATE parameter in the sense of \cite{van1998asymptotic} and show that the efficient influence function remains the same as in \cite{hahn1998role} and \cite{ray2020causal}. Given the particular form of the efficient influence function $\widetilde{\tau}_{\eta}$ in \eqref{eif_ate},
the function $\xi_{\eta}= (\xi_{\eta}^{\pi},\xi_{\eta}^{m},\xi_{\eta}^{f})$ defined in \eqref{lfd} satisfies
 $B_{\eta}\xi_{\eta}=\widetilde{\tau}_{\eta}$,
and hence, $\xi_{\eta}$ defines the least favorable direction.
\end{proof}

We emphasize that the least favorable direction calculation in the multinomial outcome case is not a trivial extension of \cite{hahn1998role} or \cite{ray2020causal}, because there are $J$ nonparametric components involved in the conditional probabilities of the multinomial outcomes given covariates, and we need to consider the perturbation of all $J$ components together.
 \begin{proof}[Proof of Lemma \ref{lemma:lfd:multinomial}]
Consider the log transformation of the joint density of $Z=(Y,D,X^\top)^\top$ given by
\begin{equation*}
\log p_\eta(z)  =d \log (\pi_\eta(x))+(1-d) \log (1-\pi_\eta(x))  +\sum_{j=0}^J \mathbbm1_{\left\{y=j\right\}} \log \left(m_{j,\eta}\left(d, x\right)\right)+\log f(x)
\end{equation*}
where $\mathbbm 1_{\{\cdot\}}$ denotes the indicator function.
Following the proof of Lemma 3.1, it is sufficient to consider the perturbations for $j=1,\ldots, J$:
\begin{align*}
\Psi_j(\eta^{m_1}+t\mathfrak{m}_1,\cdots,\eta^{m_J}+t\mathfrak{m}_J)(d,x)=\frac{\exp((\eta^{m_j}+t\mathfrak{m}_j)(d,x))}{1+\sum_{l=1}^J \exp((\eta^{m_l}+t\mathfrak{m}_l)(d,x))}
\end{align*}
or
\begin{align*}
\log\Psi_j(\eta^{m_1}+t\mathfrak{m}_1,\cdots,\eta^{m_J}+t\mathfrak{m}_J)(d,x) = (\eta^{m_j}+t\mathfrak{m}_j)(d,x) -\log\Big(1+\sum_{l=1}^J \exp((\eta^{m_l}+t\mathfrak{m}_l)(d,x))\Big).
\end{align*}
Taking derivatives
\begin{align*}
\frac{\partial\log \Psi_j(\eta^{m_1}+t\mathfrak{m}_1,\cdots,\eta^{m_J}+t\mathfrak{m}_J)(d,x)}{\partial t}\Bigg|_{t=0} &= \mathfrak{m}_j(d,x) -
\frac{\sum_{l=1}^J \exp(\eta^{m_l}(d,x)) \mathfrak{m}_l(d,x)  }{1 + \sum_{l=1}^J \exp(\eta^{m_l}(d,x))}\\
&= \mathfrak{m}_j(d,x) -
\sum_{l=1}^J m_{\eta,l}(d,x) \mathfrak{m}_l(d,x) 
\end{align*}
by the definition of $m_{\eta, l}$. Likewise, we also obtain
\begin{align*}
	\frac{\partial \log \Psi_0(\eta^{m_1}+t\mathfrak{m}_1,\cdots,\eta^{m_J}+t\mathfrak{m}_J)(d,x)}{\partial t}\Bigg|_{t=0} = -
	\sum_{l=1}^J m_{\eta,l}(d,x) \mathfrak{m}_l(d,x). 
\end{align*}
We need to verify the differentiability of the ATE parameter in the sense of \cite{van1998asymptotic}. Due to its technical feature, we leave this to the end of the proof. From there, we can see that the score operator of the vector of conditional means $(m_1,\ldots, m_J)$ is as follows:
\begin{align*}
&B_\eta^{m}(\mathfrak{m}_1,\ldots,\mathfrak{m}_J) (z)=\sum_{j=0}^J\mathbbm1_{\{y=j \}}	\frac{d\log \Psi_j(\eta^{m_1}+t\mathfrak{m}_1,\cdots,\eta^{m_J}+t\mathfrak{m}_J)(d,x)}{dt}\Bigg|_{t=0} \\
	&= \sum_{j=1}^J\mathbbm1_{\{y=j \}}\left(\mathfrak{m}_j(d,x) -
	\sum_{l=1}^J m_{\eta,l}(d,x) \mathfrak{m}_l(d,x) \right)+\mathbbm1_{\{y=0 \}}\left(- \sum_{l=1}^J m_{\eta,l}(d,x) \mathfrak{m}_l(d,x)\right).
\end{align*}
Given that $\mathbbm1_{\{y=0 \}}=1-\sum_{j=1}^J\mathbbm1_{\{y=j \}}$, the previous equation simplifies to
\begin{equation*}
B_\eta^{m}(\mathfrak{m}_1,\ldots,\mathfrak{m}_J) (z)
=\sum_{j=1}^J \left(\mathbbm1_{\left\{y=j\right\}}-m_{\eta,j}(d,x)\right) \mathfrak{m}_j(d,x).
\end{equation*}
Note that the conditional mean of $B_\eta^{m}(\mathfrak{m}_1,\ldots,\mathfrak{m}_J) (z)$ is zero for any $\mathfrak{m}_j(d,x)$, which aligns with the requirement of the score operator.

From our verification of differentiability, we infer that the influence function is of the generic form given in \cite{hahn1998role} and \cite{ray2020causal}. Also, it is contained in the closed linear span of the set of all score functions. Now, if we choose $\mathfrak{m}_j=j \gamma_\eta$, $1\leq j\leq J$, we obtain
\begin{align*}
B_\eta^{m}(\gamma_\eta, 2\gamma_\eta,\ldots, J\gamma_\eta) (z)
=\Bigg(\underbrace{\sum_{j=1}^J \mathbbm1_{\left\{y=j\right\}}j}_{=y}  -
\underbrace{\sum_{j=1}^J j\, m_{\eta,j}(d,x)}_{=m_{\eta}(d,x)} \Bigg)\gamma_\eta(d,x)=\left(y  -
m_{\eta}(d,x) \right)\gamma_\eta(d,x),
\end{align*}
which shows the results. 	
	
Now we check the pathwise differentiability of the ATE. To avoid the long display of various formulas, we consider the following decomposition
\begin{align*}
	\frac{\partial}{\partial t}\tau_{\eta_t}\Bigr\rvert_{t=0}=\frac{\partial }{\partial t}\int\mathbb{E}_{\eta_t}[Y|D=1,X=x]\mathrm{d}F_{\eta_t}(x)-\frac{\partial }{\partial t}\int\mathbb{E}_{\eta_t}[Y|D=0,X=x]\mathrm{d}F_{\eta_t}(x),
\end{align*}
and we focus on the first derivative involving the treatment group, as the other one can be handled analogously. We start with
\begin{align*}
	\frac{\partial }{\partial t}\mathbb{E}_{\eta_t}[\mathbb{E}_{\eta_t}[Y|D=1,X]]=\iint y\frac{\partial }{\partial t}p_{t}(y|1,x)f_t(x)\Bigr\rvert_{t=0}\mathrm{d}\nu(y)\mathrm{d}\mu(x),
\end{align*}
where $p_{t}(y|1,x)$ and $f_t(x)$ are the perturbed conditional density of outcome and marginal density of covariates, respectively. In addition, $\nu$ stands for the counting measure and $\mu$ is the Lebesgue measure. By the chain rule, we need to compute the following sum:
\begin{align}\label{TwoParts}
	\iint y\frac{\partial}{\partial t}p_{t}(y|1,x)\Bigr\rvert_{t=0}\mathrm{d}\nu(y)f_\eta(x)\mathrm{d}\mu(x)+
	\iint yp_\eta(y|1,x)\mathrm{d}\nu(y)\frac{\partial}{\partial t} f_t(x)\Bigr\rvert_{t=0}\mathrm{d}\mu(x).
\end{align}
Regarding the first part of the above sum, we follow the outline in Example 2 of \cite{levy2019tutorial} to compute 
\begin{align*}
	\frac{\partial}{\partial t}p_t(y|d,x)=\frac{\partial}{\partial t}\left[\prod_{j=0}^J m_{t,j}(d,  x)^{\mathbbm1\{y=j\}}\right]=\sum_{j=0}^J \mathbbm1_{\{y=j\}}\frac{\partial}{\partial t}m_{t,j}(d,  x)\prod_{k\neq j}m_{t,k}^{\mathbbm1\{y=k\}}(d,x).
\end{align*}
We thus evaluate for the derivatives of the conditional mean functions
\begin{align*}
	\frac{\partial}{\partial t}m_{t,j}(d,  x)\Bigr\rvert_{t=0}= m_{\eta, j}(d,  x)\left(\mathfrak{m}_j(d,x)-\sum_{l=1}^Jm_{\eta, l}(d,x)\mathfrak{m}_l(d,x)\right), ~~~\text{for}~~~j=1,\cdots,J,
\end{align*}
and 
\begin{align*}
	\frac{\partial}{\partial t}m_{t,0}(d,  x)\Bigr\rvert_{t=0}= m_{\eta, 0}(d,  x)\left(-\sum_{l=1}^Jm_{\eta,l}(d,x)\mathfrak{m}_l(d,x)\right).
\end{align*}
Thereafter, derivative of the conditional density can be written as
\begin{align*}
	\frac{\partial}{\partial t}p_t(y|d,x)\Bigr\rvert_{t=0}&=\left[\sum_{j=1}^J \mathbbm1_{\{y=j\}}\left(\mathfrak{m}_j(d,x)-\sum_{l=1}^Jm_{\eta,l}(d,x)\mathfrak{m}_l(d,x)\right)\right]\prod_{j=0}^J m_{\eta,j}(d,  x)^{\mathbbm1\{y=j\}}\\
	&=\left(B_\eta^{m}(\mathfrak{m}_1,\ldots,\mathfrak{m}_J) (z)-\mathbb{E}_{\eta}[B_\eta^{m}(\mathfrak{m}_1,\ldots,\mathfrak{m}_J) (Z)|D=d,X=x]\right) p_\eta(y|d,x),
\end{align*}
where the last equality follows from the fact that the conditional mean of the score given $(D,X)$ is zero. 
	To simplify the notation, we denote this conditional score function by
$	S_\eta(z)=B_\eta^{m}(\mathfrak{m}_1,\ldots,\mathfrak{m}_J) (z)$.
Referring to the first term in the summation \eqref{TwoParts}, we resort to the technique in Example 2 of \cite{levy2019tutorial} by converting the conditional argument from $d=1$ to  $d\in\{0,1\}$. Similar to the first two terms in the long display on Page 15 of \cite{levy2019tutorial}, we obtain 
	\begin{align*}
		\iint y\frac{\partial}{\partial t}p_{t}(y| 1,x)\Bigr\rvert_{t=0}\mathrm{d}\nu(y)f(x)\mathrm{d}\mu(x)
		=\mathbb{E}_\eta\left[\frac{D}{\pi_\eta(X)}(Y-m_\eta(D,X))S_\eta(Z)\right].
	\end{align*}
Referring to the second part of \eqref{TwoParts}, we immediately obtain
\begin{equation*}
	\iint yp_t(y|1,x)\mathrm{d}\nu(y)\frac{\partial}{\partial t}f_t(x)\Bigr\rvert_{t=0}\mathrm{d}\mu(x)=\mathbb{E}_\eta\left[\left(m_\eta(1,X)-\mathbb{E}_\eta[m_\eta(1,X)]\right)S_\eta(Z)\right]
\end{equation*}
Similarly for the control arm, we derive
	\begin{align*}
		&\frac{\partial}{\partial t}\int\mathbb{E}_{\eta_t}[Y|D=0,X=x]\mathrm{d}F_{\eta_t}(x)\Bigr\rvert_{t=0}\\
		=&\mathbb{E}_\eta\left[\left(m_\eta(0,X)-\mathbb{E}_\eta[m(0,X)]+\frac{1-D}{1-\pi_\eta(X)}(Y-m_\eta(D,X))\right)S_\eta(Z)\right].
	\end{align*}
The remaining part boils down to the existence of a vector-valued function $\tilde{\tau}_{P_\eta}$ such that
\begin{small}
	\begin{align*}
\frac{\partial}{\partial t}\tau(\eta_t)\Bigr\rvert_{t=0}&=	\mathbb{E}_{\eta}\left[ \tilde{\tau}_{\eta}(Z) B_\eta^{m}(\mathfrak{m}_1,\ldots,\mathfrak{m}_J) (Z)\right]\\
		&=\mathbb{E}_\eta\left[\left((\bar{m}_\eta(X)-\tau_\eta)+\left(\frac{D}{\pi_\eta(X)}-\frac{1-D}{1-\pi_\eta(X)}\right)(Y-m_\eta(D,X))\right)S_\eta(Z)\right].
	\end{align*}
\end{small}%
Consequently, we can take the solution as $\tilde{\tau}_{\eta}(z)=\bar{m}_\eta(x)-\tau_\eta+\gamma_\eta(d,x)(y-m_\eta(d,x))$, which concludes the proof.
\end{proof}

\section{Least Favorable Directions for Other Causal Parameters}\label{appendix:lfd}
In this part, we provide details on the least favorable directions for the first two examples in Section \ref{sec:other_param}. We properly address the binary outcome $Y$ and the reparameterization through the logistic type link function $\Psi(\cdot)$.
\subsection{Average Policy Effects}\label{appendix:lfd:ape}
The joint density of $Z_i=(Y_i,X_i)$ can be written as
\begin{equation}
	p_{m,f}(z)=m(x)^{y} (1-m(x))^{(1-y)}f(x).
\end{equation}
The observed data $Z_i$ can be described by $(m, f)$. It proves to be more convenient to consider the reparametrization of $(m, f)$ given by $\eta=(\eta^m,\eta^f)$, where
\begin{eqnarray}
	\eta^{m} = \Psi^{-1}(m),~~\eta^f =\log f. 
\end{eqnarray}
Consider the one-dimensional submodel $t\mapsto \eta_t$ defined by the path
\begin{eqnarray*}
	m_t(x) =  \Psi(\eta^m+t\mathfrak{m})(x),~~
	f_t(x)= f(x)e^{t\mathfrak{f}(x)}/\mathbb{E}[e^{t\mathfrak{f}(X)}],
\end{eqnarray*}
for the given direction $( \mathfrak{m},\mathfrak{f})$ with  $\mathbb{E}[\mathfrak{f}(X)]=0$. 
For this submodel, we further evaluate
\begin{align*}
	\log p_{\eta_t}(z) &=  y\log\Psi(\eta^m+t\mathfrak{m})(x) + (1-y)\log(1-\Psi(\eta^m+t\mathfrak{m}))(x)\\
	& + t\mathfrak{f}(x) - \log\mathbb{E}[e^{t\mathfrak{f}(X)}] + \log f(x).
\end{align*}	
Taking derivative with respect to $t$ and evaluating at $t=0$ gives the score operator:
\begin{equation}\label{score_pol}
	B_{\eta}(\mathfrak{m},\mathfrak{f})(Z)=  B_{\eta}^{m}\mathfrak{m}(Z) + B_{\eta}^{f}\mathfrak{f}(Z),
\end{equation}
where $B_{\eta}^{m}\mathfrak{m}(Z)=  (Y-m_\eta(X))\mathfrak{m}(X)$ and $B_{\eta}^{f}\mathfrak{f}(Z)= \mathfrak{f}(X)$. 

The efficient influence function for estimation of the policy effect parameter $\tau_\eta^{P}$ is given by
\begin{equation*}
	\widetilde{\tau}_{\eta}^{P}(z)=\gamma_\eta^{P}(x)(y-m_\eta(x))
\end{equation*}
where $\gamma_\eta^{P}(x)=\frac{g_1(x)-g_0(x)}{f(x)}$. Now the score operator $B_{\eta}$ given in \eqref{score_pol} applied to 
	$\xi_{\eta}^{P}(x)= 	\left(\gamma_\eta^{P}(x), 0\right)$,
yields $B_{\eta}\xi_{\eta}^{P}=\widetilde{\tau}_{\eta}^{P}$. Thus, $\xi_{\eta}^{P}$ defines the least favorable direction for this policy effect parameter. 
\subsection{Average Derivative}\label{appendix:lfd:ad}
The joint density of $Z_i=(Y_i,D_i,X_i)$  can be written as
\begin{equation}\label{x_den_conti}
	p_{m,f}(z)=m(d,x)^{y} (1-m(d,x))^{(1-y)}f(d,x).
\end{equation}
The observed data $Z_i$ can be described by  $(m, f)$. It proves to be more convenient to consider the reparametrization of $(m, f)$ given by $\eta=(\eta^m,\eta^f)$, where
\begin{eqnarray}\label{repar_conti}
	\eta^{m} = \Psi^{-1}(m),~~\eta^f =\log f. 
\end{eqnarray}
Consider the one-dimensional submodel $t\mapsto \eta_t$ defined by the path
\begin{eqnarray*}
	m_t(d,x) =  \Psi(\eta^m+t\mathfrak{m})(d,x),~~
	f_t(d,x)= f(d,x)e^{t\mathfrak{f}(d,x)}/\mathbb{E}[e^{t\mathfrak{f}(D,X)}],
\end{eqnarray*}
for the given direction $( \mathfrak{m},\mathfrak{f})$ with  $\mathbb{E}[\mathfrak{f}(D,X)]=0$. 
For this submodel defined in \eqref{repar_conti}, we further evaluate
\begin{align*}
	\log p_{\eta_t}(z) &=  y\log\Psi(\eta^m+t\mathfrak{m})(d,x) + (1-y)\log(1-\Psi(\eta^m+t\mathfrak{m}))(d,x)\\
	& + t\mathfrak{f}(d,x) - \log\mathbb{E}[e^{t\mathfrak{f}(D,X)}] + \log f(d,x).
\end{align*}	
Taking derivative with respect to $t$ and evaluating at $t=0$ gives the score operator:
\begin{equation}\label{score_ad}
	B_{\eta}(\mathfrak{m},\mathfrak{f})(Z)=  B_{\eta}^{m}\mathfrak{m}(Z) + B_{\eta}^{f}\mathfrak{f}(Z),
\end{equation}
where $B_{\eta}^{m}\mathfrak{m}(Z)=  (Y-m_\eta(D,X))\mathfrak{m}(D,X)$ and $B_{\eta}^{f}\mathfrak{f}(Z)= \mathfrak{f}(D,X)$. 
The efficient influence function for estimation of the AD parameter $\tau_\eta^{AD}=\mathbb{E}\left[\partial_dm_\eta(D,X)\right]$ is given by
\begin{equation*}
	\widetilde{\tau}_{\eta}^{AD}(z)=\partial_d m_\eta(d,x)-\mathbb{E}[\partial_d m_\eta(d,x)]+\gamma_\eta^{AD}(d,x)(y-m_\eta(d,x))
\end{equation*}
where 	$\gamma_\eta^{AD}(d,x)=\partial_d \pi_\eta(d,x)/\pi_\eta(d,x)$. Now the score operator $B_{\eta}$ given in \eqref{score_ad} applied to 
\begin{align*}
	\xi_{\eta}^{AD}(d,x)= 	\left(\gamma_\eta^{AD}(d,x), \partial_d m_\eta(d,x)-\mathbb{E}[\partial_d m_\eta(D,X)]\right),
\end{align*}
yields $B_{\eta}\xi_{\eta}^{AD}=\widetilde{\tau}_{\eta}^{AD}$. Thus, $\xi_{\eta}^{AD}$ defines the least favorable direction for the AD.

\section{Theory for One-parameter Exponential Family}\label{appendix:exponent}
We take $a=1$ in the exponential family for simplicity, that is, 
\begin{equation}\label{ConditionalPDF}
	f_{Y|D,X}(y\mid d,x) = c(y)\exp\left[q(m(d,x))y-A(m(d,x))\right],
\end{equation}
for some known functions $c(\cdot)$, $q(\cdot)$, and $A(\cdot)$.
We consider the reparametrization $	\eta^m(d,x)=q(m(d,x))
$ using the link function $q$
and we define the mapping $\Upsilon:= A\circ q^{-1}$, used below. Because the generalization of the binary outcome case to the above exponantial family involves some change of the likelihood function related to the conditional mean, we outline the necessary modifications. 

\begin{proposition}\label{prop:OnExponential}
Consider the one-parameter exponential family for the conditional distribution specified by \eqref{ConditionalPDF}. Let Assumption \ref{Ass:unconfounded} hold. Assume the function $\Upsilon$ is three time differentiable with $\Vert \Upsilon^{(\ell)}\Vert_{\infty}<\infty$ for $\ell=2,3$. The estimator $\widehat\gamma$ satisfies $\|\widehat\gamma\|_\infty=O_{P_0}(1)$ and $\|\widehat{\gamma}-\gamma_0\Vert_\infty= O_{P_0}\big((n/\log n)^{-s_\pi/(2s_\pi+p)}\big)$ for some $s_\pi>0$. Suppose $m_0(d,\cdot)\in \mathcal{C}^{s_m}([0,1]^{p})$ for $d\in\{0,1\}$ and some $s_m>0$ with $\sqrt{s_\pi \, s_m}>p/2$. Also, $\|\widehat{m}(d,\cdot)-m_0(d,\cdot)\Vert_{2, F_0}= O_{P_0}\big((n/\log n)^{-s_m/(2s_m+p)}\big)$. 
	Consider the propensity score-dependent prior on $m$ given by $m(d,x) = q^{-1}\left(W^m(d,x) + \lambda\,\widehat \gamma(d,x)\right)$, 
	where $W^m(d,x)$ is the rescaled squared exponential process for $d\in\{0,1\}$, with its rescaling parameter $a_n$ of the order in \eqref{RescaleRate},
	$\left(n/\log n\right)^{-s_m/(2s_m+p)}\lesssim u_n\sigma_n$ for some deterministic sequence $u_n\to 0$, and  $\sigma_n\lesssim 1$.
	Then, the posterior distribution satisfies Theorem \ref{thm:BvM}.
\end{proposition}
\begin{proof}
A close inspection shows that there are mainly three parts in which we need to adapt the argument due to the change of $p_{\eta^m}$ in the likelihood function. The first one is about the Kullback-Leibler (KL) divergence and related metrics to show the posterior contraction rate. The second one concerns the local asymptotic normality (LAN) expansion used in the conditional Laplace transform, where we show the connection of its second-order term to part of the variance of the influence function. Finally, we make use of the imposed smoothness assumption on $\Upsilon$ to show the negligibility of third order terms, which is needed for verifying the prior stability. We proceed in three steps. 

\textit{Step 1.}
First, in deriving the posterior contraction rate or determining the proper localized set $\mathcal{H}_n^m$, we need proper upper bounds for the Hellinger distance and Kullback-Leibler (KL) divergence between two probability density functions $(p_{\eta^m},p_{v^m})$ by the $L^2$ distance of the reparametrized functions $(\eta^m,v^m)$. Recall that $p_{\eta^m}(y,d,x)=c(y)\exp\left[q(m(d,x))ay-A(m(d,x))\right]$.
To abuse the notation a bit, we denote the corresponding probability densities by $p_{\eta^m}$ and $p_{v^m}$. 
From the proof of Lemma \ref{lemma:lfd_exp} we observe
\begin{align*}
	\mathbb E_{\eta}[Y|D=d,X=x]=\frac{(A'\circ q^{-1})(\eta^m(d,x))}{(q'\circ q^{-1})(\eta^m(d,x))}.
\end{align*}
Now the operator under consideration is $\Upsilon= A\circ q^{-1}$ and its derivative is given by $\Upsilon'= (A'\circ q^{-1})/(q'\circ q^{-1})$.
For the exponential family under consideration, the first and second order cumulants (conditional on covariates) are:
\begin{align*}
	\mathbb E_{\eta}[Y|D=d,X=x]=\Upsilon^{\prime}(\eta^m(d,x)),~~~Var_{\eta}(Y|D=d,X=x)=\Upsilon^{(2)}(\eta^m(d,x)).
\end{align*}
The conditional variance formula also shows the convexity of $\Upsilon(\cdot)$; see \cite{brown1986exp}. 

Considering the KL divergence $K(p_{\eta^m},p_{v^m})=\int \log\left( p_{\eta^m}(z)/p_{v^m}(z)\right)p_{\eta^m}(z)\mathrm{d}z$, we first compute
\begin{align*}
	\log \frac{p_{\eta^m}(z)}{p_{v^m}(z)}=(\eta^m(d,x)-v^m(d,x))y-[\Upsilon(\eta^m(d,x))-\Upsilon(v^m(d,x))].
\end{align*} 
Integrating over the conditional density for any given $(d,x)$ and utilizing the fact that the conditional mean is
$	m_{\eta}(d,x)=\Upsilon^\prime(\eta^m_{\eta}(d,x))$, we proceed for some intermediate value $\tilde{\eta}^m$:
\begin{align*}
	K(p_{\eta^m},p_{v^m})=&\int \left(\Upsilon^\prime(\eta^m(d,x))(\eta^m(d,x)-v^m(d,x))-[\Upsilon(\eta^m(d,x))-\Upsilon(v^m(d,x))]\right)\\
	&\qquad\qquad\qquad\qquad\times \pi^d(x)(1-\pi(x))^{1-d}\mathrm{d}\vartheta(d)\mathrm{d}F_{\eta}(x)\\ =&\int \Upsilon^{(2)}(\tilde{\eta}^m(d,x))(\eta^m(d,x)-v^m(d,x))^2\pi^d(x)(1-\pi(x))^{1-d}\mathrm{d}\vartheta(d)\mathrm{d}F_{\eta}(x)\\
	\lesssim& \Vert v^{m^1}-\eta^{m^1}\Vert^2_{2, F_\eta}\vee \Vert v^{m^0}-\eta^{m^0}\Vert^2_{2, F_\eta},
\end{align*}
where the last inequality follows from the condition $\Vert \Upsilon^{(2)}\Vert_{\infty}<\infty$. 
Recall that 
{\small \begin{equation}\label{Vbound}
	V(p_{\eta^m},p_{v^m})=\int\left[\log \frac{p_{\eta^m}(z)}{p_{v^m}(z)}- K(p_{\eta^m},p_{v^m})\right]^2p_{\eta^m}(z)\mathrm{d}z\leq \int\left[\log \frac{p_{\eta^m}(z)}{p_{v^m}(z)}\right]^2p_{\eta^m}(z)\mathrm{d}z.
\end{equation}}%
Therefore, we continue with the right hand side inequality of \eqref{Vbound} and calculate
\begin{small}
\begin{align*}
	&V(p_{\eta^m},p_{v^m})\\
	\leq&\int\left\{(\eta^m(d,x)-v^m(d,x))y-[\Upsilon(\eta^m(d,x))-\Upsilon(v^m(d,x))]\right\}^2 p_{\eta^m}(z)\mathrm{d}z\\
	=&\int(\eta^m(d,x)-v^m(d,x))^2[\Upsilon^{(2)}(\eta^m(d,x))+(\Upsilon^{\prime}(\eta^m(d,x)))^2]\pi^d(x)(1-\pi(x))^{1-d}\mathrm{d}\vartheta(d)\mathrm{d}F_{\eta}(x)\\
	-&2\int(\Upsilon(\eta^m(d,x))-\Upsilon(v^m(d,x)))(\eta^m(d,x)-v^m(d,x))\Upsilon^\prime(\eta^m(d,x))\pi^d(x)(1-\pi(x))^{1-d}\mathrm{d}\vartheta(d)\mathrm{d}F_{\eta}(x)\\
	+&\int(\Upsilon(\eta^m(d,x))-\Upsilon(v^m(d,x)))^2\pi^d(x)(1-\pi(x))^{1-d}\mathrm{d}\vartheta(d)\mathrm{d}F_{\eta}(x)\\
	=&\int \Upsilon^{(2)}(\eta^m(d,x))(\eta^m(d,x)-v^m(d,x))^2\pi^d(x)(1-\pi(x))^{1-d}\mathrm{d}\vartheta(d)\mathrm{d}F_{\eta}(x)\\
	+&\int\left\{(\eta^m(d,x)-v^m(d,x))\Upsilon^\prime(\eta^m(d,x))-[\Upsilon(\eta^m(d,x))-\Upsilon(v^m(d,x))]\right\}^2\pi^d(x)(1-\pi(x))^{1-d}\mathrm{d}\vartheta(d)\mathrm{d}F_{\eta}(x)\\
	\lesssim& \Vert v^{m^1}-\eta^{m^1}\Vert^2_{2, F_\eta}\vee \Vert v^{m^0}-\eta^{m^0}\Vert^2_{2, F_\eta},
\end{align*}
\end{small}
where in the first equality we have made use of the fact that 	$\mathbb{E}_{\eta}[Y^2|D=d,X=x]=\Upsilon^{(2)}(\eta^m(d,x))+(\Upsilon^{\prime}(\eta^m(d,x)))^2$. In sum, we have
\begin{align*}
	K(p_{\eta^m},p_{v^m})\vee V(p_{\eta^m},p_{v^m})\leq \Vert v^{m^1}-\eta^{m^1}\Vert^2_{2, F_\eta}\vee \Vert v^{m^0}-\eta^{m^0}\Vert^2_{2, F_\eta}.
\end{align*}
In addition, the squared Hellinger distance can be upper bounded by the KL divergence from Lemma B.1 in \cite{ghosal2017fundamentals}, so we have 
\begin{align}
	\Vert \sqrt{p_{v^m}}-\sqrt{p_{\eta^m}}\Vert_{L^2(\nu)}\leq \Vert v^{m^1}-\eta^{m^1}\Vert_{2, F_\eta}\vee \Vert v^{m^0}-\eta^{m^0}\Vert_{2, F_\eta}.
\end{align}
Because the posterior contraction holds in terms of the Hellinger distance under general conditions, the above upper bound allows us to translate this contraction to the corresponding $L_2$ distance for the $\eta^m$ function.

\textit{Step 2.}
We examine the changes to the LAN expansion in Lemma \ref{lemma:likelihood} as follows. 
For this purpose, we use the notation $g(u)=\log p_{\eta_u^m}$ for $u\in[0,1]$, as introduced at the beginning of Section \ref{appendix:key:lemmas}. Specifically, in the one-parameter exponential family case, we have 
\begin{equation*}
\log p_{\eta_u^m}(z)=y\eta_u^m(d,x)-\Upsilon(\eta_{u}^m(d,x))+\log c(y).
\end{equation*}
By the definition of $\Upsilon(\cdot)$, we know that it is a convex function, given that $\Upsilon^{(2)}(\eta^m(d,x))=Var_{\eta}(Y|D=d,X=x)$. 
Thereafter, we can obtain the first to third order derivatives of $g$ as
\begin{align*}
	g^\prime(0)&=\frac{t}{\sqrt{n}}\gamma_0\rho^{\Upsilon^\prime(\eta^m)}=\frac{t}{\sqrt{n}}\gamma_0\rho^{m_\eta},\\
	g^{(2)}(0)&=\frac{t^2}{n}\gamma^2_0\Upsilon^{(2)}(\eta^m), ~~~g^{(3)}(\tilde{u})=\frac{t^3}{n^{3/2}}\gamma^3_0\Upsilon^{(3)}(\eta_{\tilde{u}}^m),
\end{align*}
where $\tilde{u}$ is some intermediate value between $0$ and $1$. 

Following the lines of the proof of Lemma \ref{lemma:likelihood}, the second moment $P_0g^{(2)}(0)$ must be derived for the exponential family case.  Based on the previous calculations and the posterior convergence of $\eta^m$, it can be expressed as
\begin{align*}
	nP_0g^{(2)}(0)=&t^2\mathbb{E}_0[\gamma_0^2(D,X)\Upsilon^{(2)}(\eta_0^m(D,X))]+o_{P_0}(1)=t^2\mathbb{E}_0[\gamma_0^2(D,X)Var_0(Y|D,X)]+o_{P_0}(1)\\
	=&t^2\mathbb{E}_0[\gamma_0^2(D,X)(Y-m_0(D,X))]+o_{P_0}(1)=t^2P_0(B_0^m(\xi_0^m))^2+o_{P_0}(1),
\end{align*}
where the score operator $B_0^m=B_{\eta_0}^m$ is given in the proof of Lemma \ref{lemma:lfd_exp}. 

\textit{Step 3.}
Finally, we need to establish the following expansion in a key step to show the prior stability condition:
\begin{equation}
	\sup_{\eta^m\in \mathcal{H}_n^m}\left|\ell_n^m(\eta^m-t \gamma_n/\sqrt{n})-\ell_n^m(\eta^m-t \gamma_0/\sqrt{n})   \right|=o_{P_0}(1),
\end{equation}
where $\eta^m_{n,t}= \eta^m-t \gamma_n/\sqrt{n}$ and $\eta_t^m=\eta^m-t \gamma_0/\sqrt{n}$. Consider the following decomposition of the log-likelihood:
\begin{align*}
	\ell_n^m(\eta^m_{n,t})-\ell_n^m(\eta^m_{t})&=\ell_n^m(\eta^m_{n,t})-\ell_n^m(\eta^m)+\ell_n^m(\eta^m)-\ell_n^m(\eta^m_{t})\\
	&=n\mathbb{P}_n[\log p_{\eta^m_{n,t}}-\log p_{\eta^m}]+n\mathbb{P}_n[\log p_{\eta^m}-\log p_{\eta^m_{t}}].
\end{align*}
Then, we apply third-order Taylor expansions for the one-parameter exponential family separately to the two terms in the brackets of the above display:
\begin{align*}
	n\mathbb{P}_n[\log p_{\eta^m_{n,t}}-\log p_{\eta^m}]&=-t\sqrt{n}\mathbb{P}_n\left[\gamma_n\rho^{m_{\eta}}\right]-\frac{t^2}{2}\mathbb{P}_n\left[\gamma_n^2\Upsilon^{(2)}(\eta^m)\right]-\frac{t^3}{\sqrt n}\mathbb{P}_n\left[\gamma_n^3\Upsilon^{(3)}(\eta_{u^*}^m)\right],\\
	n\mathbb{P}_n[\log p_{\eta^m}-\log p_{\eta^m_{t}}]&=t\sqrt{n}\mathbb{P}_n\left[\gamma_0\rho^{m_{\eta}}\right]+\frac{t^2}{2}\mathbb{P}_n\left[\gamma_0^2\Upsilon^{(2)}(\eta^m)\right]+\frac{t^3}{\sqrt n}\mathbb{P}_n\left[\gamma_0^3\Upsilon^{(3)}(\eta_{u^{**}}^m)\right],
\end{align*}
for some intermediate points $u^{*}, u^{**}\in(0,1)$, cf. the equation (\ref{path}). The rest of the proof follows similar lines to our proof of Proposition \ref{prop:exponential}.
\end{proof}

\section{Posterior Computation in Algorithm \ref{algorithm} }\label{appendix:LPA}
We describe the Laplace approximation method used in Algorithm \ref{algorithm} (Posterior computation, Step (a)) for drawing the posterior of $\eta^m$ and thus $m_{\eta}$;
 see \citet[Chapters 3.3 to 3.5]{rassmusen2006gaussian} for more details on the Laplace approximation. Let $\boldsymbol W=[\boldsymbol D,\boldsymbol X]\in \mathbb{R}^{n\times (p+1)}$ be the matrix of $(D,X)$ in the data, $\boldsymbol W^* \in \mathbb{R}^{2n\times (p+1)}$ the evaluation points $(1,X)$ and $(0,X)$:
\begin{equation*}
\boldsymbol W^* = 
\begin{bmatrix}
\boldsymbol 1_n, &\boldsymbol X  \\
\boldsymbol 0_n, &\boldsymbol X \\
\end{bmatrix},
\end{equation*}
and $\boldsymbol \eta^*$ a $2n$-vector that gives the latent function $\eta^m$ evaluated at $\boldsymbol W^*$:
\begin{equation*}
\boldsymbol \eta^* = \left[\eta^m(1,X_1),\ldots,\eta^m(1,X_n),\eta^m(0,X_1),\ldots,\eta^m(0, X_n)\right]^{\top}.
\end{equation*}
Let $\boldsymbol \eta=\left[\eta^m(D_1,X_1),\dots,\eta^m(D_n, X_n)\right]^{\top}$ denote the $n$-vector of the latent function at  $\boldsymbol W$. For matrices $\boldsymbol W^*$ and $\boldsymbol W$, we define $K_c(\boldsymbol W^*,\boldsymbol W)$ as a $2n\times n$ matrix whose $(i,j)$-th element is $K_c(W_i^*,W_j)$, where $W_{i}^*$ is the $i$-th row of $\boldsymbol W^*$ and $W_j$ is the $j$-th row of $\boldsymbol W$. Analogously, $K_c(\boldsymbol W,\boldsymbol W)$ is an $n\times n$ matrix with the $(i,j)$-th element being $K_c(W_{i},W_j)$, and $K_c(\boldsymbol W^*,\boldsymbol W^*)$ is  a $2n\times 2n$ matrix with the $(i,j)$-th element being $K_c(W_{i}^*,W_{j}^*)$.

Given the mean zero GP prior with its covariance kernel $K_c$, the posterior of $\boldsymbol \eta^*$ is approximated by a Gaussian distribution with mean $\bar{\boldsymbol \eta}^*$ and covariance V$(\boldsymbol \eta^*)$ using the Laplace approximation. To be specific, let
\begin{eqnarray*}
\bar{\boldsymbol \eta}^*&=& K_c(\boldsymbol W^*,\boldsymbol W)K_c^{-1}(\boldsymbol W,\boldsymbol W)\,\widehat {\boldsymbol \eta},\\
\text{V}(\boldsymbol \eta^*) &=& K_c(\boldsymbol W^*,\boldsymbol W^*)-K_c(\boldsymbol W^*,\boldsymbol W)\left(K_c(\boldsymbol W,\boldsymbol W)+ \boldsymbol{\nabla}^{-1}\right)^{-1}K_c^\top(\boldsymbol W^*,\boldsymbol W),
\end{eqnarray*}
where $\widehat{\boldsymbol \eta}=\text{argmax}_{\boldsymbol\eta} p(\boldsymbol\eta|\boldsymbol W,\boldsymbol Y)$ maximizes the posterior $p(\boldsymbol\eta|\boldsymbol W,\boldsymbol Y)$ on the latent $\boldsymbol\eta$ and $\boldsymbol \nabla=-\frac{\partial^2\log p(\boldsymbol Y|\boldsymbol\eta)}{\partial\boldsymbol\eta\partial\boldsymbol\eta^\top}$ is a $n\times n$ diagonal matrix with the $i$-th diagonal entry being $-\frac{\partial^2\log p(\boldsymbol Y|\boldsymbol\eta)}{\partial\boldsymbol\eta_{i}^2}$.
We use the Matlab toolbox $\mathtt{GPML }$ for the implementation.\footnote{The $\mathtt{GPML}$ toolbox can be downloaded from http://gaussianprocess.org/gpml/code/matlab/doc/.}
In sum, we get the posterior draws of the vectors $\left[\eta^m(1,X_1),\ldots,\eta^m(1,X_n)\right]^{\top}$ and $\left[\eta^m(0,X_1),\ldots,\eta^m(0,X_n)\right]^{\top}$ from the above approximating Gaussian distribution with the mean $\bar{\boldsymbol \eta}^*$ and covariance V$(\boldsymbol \eta^*)$. 
We then obtain the posterior draws of the ATE by equation (\ref{debiased_bay_est}) through $m(d,X_i)=\Psi(\eta^m(d,X_i))$ for $d\in\{0,1\}$.

\section{Additional Simulation Results}\label{appendix:simu}
\renewcommand{\thetable}{A\arabic{table}}
    \setcounter{table}{0}
Appendix \ref{appendix:simu} presents additional simulation results for adjusted Bayesian inference methods. The design is the same as that in Section \ref{sec:simu}. Table \ref{tab:simu_C} evaluates the sensitivity of the finite sample performance with respect to the variance $\sigma_n$, which determines the influence strength of the prior correction term. We set $\sigma_n = c_{\sigma} \times \log  n/(\sqrt{n}\,\Gamma_n)$ with $c_{\sigma}\in\{1/5, 1/2, 1, 2, 5\}$.  Note that $c_{\sigma}=1$ corresponds to the simulation results reported in the main text. The performance of DR Bayes appears stable with respect to the choice of $c_{\sigma}$. The performance of PA Bayes, on the other hand, deteriorates when $\sigma_n$ takes relatively small or large values, such as the cases with $c_{\sigma}=1/5$, $t=0.10$ and $c_{\sigma}=5$, $t=0.01$. 

\begin{table}[H]
\centering
\caption{The effect of $c_{\sigma}$ on adjusted Bayesian inference methods: trimming based on the
estimated propensity score within $[t,1-t]$, $\bar n =$ the average sample size after trimming. CP = coverage probability, CIL = average length of the $95\%$ credible/confidence interval.}\label{tab:simu_C}
\vskip.15cm
{\footnotesize
 \begin{tabular}{clccccccccccccc}\toprule
\multicolumn{1}{c}{ $c_{\sigma}$}&\multicolumn{1}{c}{Methods}&\multicolumn{1}{c}{}& \multicolumn{1}{c}{Bias}&\multicolumn{1}{c}{CP}&\multicolumn{1}{c}{CIL}&\multicolumn{1}{c}{}&\multicolumn{1}{c}{Bias}& \multicolumn{1}{c}{CP}&\multicolumn{1}{c}{CIL}&\multicolumn{1}{c}{}&\multicolumn{1}{c}{Bias}& \multicolumn{1}{c}{CP}&\multicolumn{1}{c}{CIL}\\
\cline{1-2}\cline{4-6}\cline{8-10}\cline{12-14}
\multicolumn{1}{c}{ }&\multicolumn{1}{c}{}&\multicolumn{1}{c}{}&\multicolumn{3}{c}{$t = 0.10  (\bar n = 240)$}&\multicolumn{1}{c}{}& \multicolumn{3}{c}{$t =0.05 (\bar n =363)$}&\multicolumn{1}{c}{}&\multicolumn{3}{c}{$t = 0.01  (\bar n =664)$} \\
\cline{4-14}
$1/5$ & PA Bayes && -0.036 &    0.794  &    0.169 && -0.003 &    0.937  &    0.176 && 0.009 &    0.992  &    0.220\\
&DR Bayes  &&  -0.038 &    0.919 &    0.193 && -0.006 &    0.961 &    0.190 && 0.001 &    0.989 &    0.214   \\
\cline{1-2}\cline{4-14}
$1/2$ & PA Bayes &&  -0.024 &    0.962  &    0.215 &&0.016 &    0.968  &    0.222     &&0.032 &    0.968  &    0.289 \\
&DR Bayes  && -0.031 &    0.976 &    0.207  && 0.005 &    0.971 &    0.207 &&0.014 &    0.985 &    0.248   \\
\cline{1-2}\cline{4-14}
$1$ & PA Bayes &&  -0.008 &    0.981  &    0.260 && 0.033 &    0.949  &    0.254 &&0.047 &    0.897  &    0.308 \\
& DR Bayes  &&  -0.024 &    0.983 &    0.223 && 0.014 &    0.970 &    0.221   &&0.023 &    0.952 &    0.258  \\
\cline{1-2}\cline{4-14}
$2$ & PA Bayes &&   0.005 &    0.979  &    0.294 && 0.043 &    0.933  &    0.270 && 0.055 &    0.848  &    0.312\\
&DR Bayes  &&  -0.018 &    0.980 &    0.236&& 0.019 &    0.961 &    0.229&& 0.028 &    0.922 &    0.260 \\
\cline{1-2}\cline{4-14}
$5$ & PA Bayes &&  0.013 &    0.971  &    0.311 &&  0.047 &    0.928  &    0.276&& 0.058 &    0.836  &    0.313\\
&DR Bayes  &&  -0.015 &    0.980 &    0.242 && 0.022 &    0.958 &    0.232&& 0.030 &    0.907 &    0.261 \\
\bottomrule
\end{tabular}}
\end{table}

Table \ref{tab:simu_SS} reports the finite sample performance of PA and DR Bayes using sample splitting and compares it to the results in Table 1 that use the full sample twice. Sample splitting uses one half of the sample ($92$ treated and $1245$ control observations) to estimate the prior and posterior adjustments, and then draw the posterior of the conditional mean $m(d,x)$ using the other half of the sample ($93$ treated and $1245$ control observations).  The effective sample size $\bar n$ corresponds to the after-trimming size of the subsample used for drawing posteriors.  As Table \ref{tab:simu_SS} shows, DR Bayes using  sample splitting yields similar coverage probabilities to its counterpart in Table 1 that uses the full sample twice. The credible interval length increases as a result of halving the sample size.

\begin{table}[H]
\centering
\caption{Adjusted Bayesian inference methods using sample splitting: trimming based on the
estimated propensity score within $[t,1-t]$, $\bar n =$ the average sample size after trimming. CP = coverage probability, CIL = average length of the $95\%$ credible interval.
\qquad}\label{tab:simu_SS}
\vskip.15cm
{\footnotesize
 \begin{tabular}{lccccccccccccc}\toprule
\multicolumn{1}{c}{}&\multicolumn{1}{c}{}& \multicolumn{1}{c}{Bias}&\multicolumn{1}{c}{CP}&\multicolumn{1}{c}{CIL}&\multicolumn{1}{c}{}&\multicolumn{1}{c}{Bias}& \multicolumn{1}{c}{CP}&\multicolumn{1}{c}{CIL}&\multicolumn{1}{c}{}&\multicolumn{1}{c}{Bias}& \multicolumn{1}{c}{CP}&\multicolumn{1}{c}{CIL}\\
\cline{3-5}\cline{7-9}\cline{11-13}
\multicolumn{1}{l}{Sample splitting}&\multicolumn{1}{c}{}&\multicolumn{3}{c}{$t = 0.10  (\bar n = 124)$ }&\multicolumn{1}{c}{}& \multicolumn{3}{c}{$t =0.05 (\bar n =185)$}&\multicolumn{1}{c}{}&\multicolumn{3}{c}{$t = 0.01  (\bar n =339)$} \\
\cline{1-1}\cline{3-13}
 PA Bayes && -0.024 &    0.986  &    0.339 &&  0.009 &    0.977  &    0.342 && 0.017 &    0.950  &    0.410\\
 DR Bayes &&  -0.013 &    0.968  &    0.321 &&  0.017 &    0.962  &    0.317  &&0.016 &    0.934  &    0.385 \\
 \midrule
\multicolumn{1}{l}{Full sample}&\multicolumn{1}{c}{}&\multicolumn{3}{c}{$t = 0.10  (\bar n = 240)$ }&\multicolumn{1}{c}{}& \multicolumn{3}{c}{$t =0.05 (\bar n =363)$}&\multicolumn{1}{c}{}&\multicolumn{3}{c}{$t = 0.01  (\bar n =664)$} \\
\cline{1-1}\cline{3-13}
PA Bayes &&  -0.008 &    0.981  &    0.260 && 0.033 &    0.949  &    0.254 &&0.047 &    0.897  &    0.308 \\
DR Bayes  &&  -0.024 &    0.983 &    0.223 && 0.014 &    0.970 &    0.221   &&0.023 &    0.952 &    0.258\\
\bottomrule
\end{tabular}}
\end{table}

\bibliographystyle{abbrvnat}
\bibliography{Bayes_bib}

\end{document}